\newtheorem{theorem}{Theorem} 
\newcommand{\blind}{1}
\newtheoremstyle{named}{}{}{\itshape}{}{\bfseries}{.}{.5em}{\thmnote{#3}#1}
\theoremstyle{named}
\newtheorem*{namedtheorem}{}
\begin{document}

\bibliographystyle{apalike} 
\def\spacingset#1{\renewcommand{\baselinestretch}%
{#1}\small\normalsize} \spacingset{1}


\if1\blind
{
  \title{\bf Exact Bayesian Inference for Multivariate Spatial Data of Any Size with Application to Air Pollution Monitoring}
  \author{Madelyn Clinch$^{*}$ and Jonathan R. Bradley\thanks{
This work is partially supported by the National Science Foundation (NSF)'s Division of Mathematical Sciences (DMS) through the NSF grant DMS-\(2310756\). \textit{}}\hspace{.2cm} \\
     Department of Statistics, Florida State University}
     \date{}
  \maketitle
} \fi

\if0\blind
{
  \bigskip
  \bigskip
  \bigskip
  \begin{center}
    {\bf Exact Bayesian Inference for Multivariate Spatial Data of Any Size with Application to Air Pollution Monitoring}
\end{center}
     \date{}
} \fi
\vspace{-10pt}
\begin{abstract}
\noindent Fine particulate matter and aerosol optical thickness are of interest to atmospheric scientists for understanding air quality and its various health/environmental impacts. The available data are extremely large, making uncertainty quantification in a fully Bayesian framework quite difficult, as traditional implementations do not scale reasonably to the size of the data. We specifically consider roughly 8 million observations obtained from NASA's Moderate Resolution Imaging Spectroradiometer (MODIS) instrument. To analyze data on this scale, we introduce Scalable Multivariate Exact Posterior Regression (SM-EPR) which combines the recently introduced data subset approach and Exact Posterior Regression (EPR). EPR is a new Bayesian hierarchical model where it is possible to sample independent replicates of fixed and random effects directly from the posterior without the use of Markov chain Monte Carlo (MCMC). We extend EPR to the multivariate spatial context, where the multiple variables may be distributed according to different distributions. The combination of the data subset approach with EPR allows one to perform exact Bayesian inference without MCMC for effectively any sample size. Additional motivation is provided via technical results illustrating favorable Kullback-Leibler and covariance properties. We demonstrate SM-EPR using a motivating big remote sensing data application and provide several simulations.   
\end{abstract}

\noindent%
{\it Keywords: }  Basis functions; Big data; Multivariate; Uncertainty quantification.
\newpage
\spacingset{1.9} 
\section{Introduction}
\label{s:intro}
\vspace{-10pt}
Air pollution poses a significant risk to public health, environmental safety, and climate stability, making it a crucial on-going area of research for scientists.
Numerous studies have emphasized the association of fine particulate matter with harmful health effects \citep[e.g., see][]{SANG} and also environmental damage such as harming vegetation and acidifying bodies of water \citep{EPA2023Particulate}. Additionally, there are impacts to climate stability by affecting cloud formation processes and influencing radiative forcing \citep[e.g., see][]{li2022scattering}. Accurate global fine particulate matter with a diameter of \(\leq2.5\) microns \((\text{PM}_{2.5})\) measurements are essential for health advisories and environmental agencies when developing policies to address health, environmental, and climate concerns \citep[e.g.,][]{amann2020reducing}. 

Challenges arise when collecting measurements of fine particulate matter. The absence of ground monitoring stations in certain regions of the world creates gaps in the global \(\text{PM}_{2.5}\) data \citep{NASA2015Particulate}. Remote sensing data offers a solution to this issue, but comes with its own set of challenges. For example, cloud coverage can interfere with the ability to collect measurements \citep[e.g., see][]{zhang2021satellite}. Also, satellites observe particles vertically making it difficult to distinguish between high-altitude and ground-level particles \citep{regmi2023analysis}.

Aside from the challenges in collecting the data, there are non-trivial challenges in their analysis. In particular, remote sensing \(\text{PM}_{2.5}\) data is extremely large leading to natural computational difficulties. {The size of the data continues to grow if one considers the variability of covariates, which is often ignored.} For example, aerosol optical thickness (AOT) is a covariate that is known to be useful for predicting \(\text{PM}_{2.5}\) \citep{van2010global}; however, the variability of AOT is not explicitly modeled and the conditional relationship is often assumed to be linear. As a result, predictions are often restricted to \(\text{PM}_{2.5}\) when predictions of AOT are also of interest \citep[e.g.,][]{vogel2022uncertainty}. Jointly modeling global AOT and \(\text{PM}_{2.5}\) data may provide a more comprehensive understanding of air quality and improve predictions.

In general, joint modeling, of variables such as \(\text{PM}_{2.5}\) and AOT, is natural in a Bayesian paradigm where sources of variability can immediately be accounted for. Many studies implement Bayesian approaches to make predictions of \(\text{PM}_{2.5}\) focusing on specific geographic regions such as various states in the US \citep[e.g.,][]{zhang2020application} or spatial domains in China \citep{fu2020mitigating}, while others make predictions on a continental scale \citep{shaddick2018data}. Ideally, one would jointly analyze global \(\text{PM}_{2.5}\) and AOT data on the finest resolution available. However, current fully Bayesian models are not equipped for the size of this data without introducing approximation errors. 

A common conclusion in the literature is that AOT and \(\text{PM}_{2.5}\) are correlated \citep[e.g., see][]{hutchison2005correlating}, suggesting that it is appropriate to model these two variables with a multivariate generalized linear mixed effect model (GLMM), frequently used for multivariate spatial analysis \citep{cressie2011statistics}. In contrast to typical approaches where AOT is treated as a fixed effect in modeling \(\text{PM}_{2.5}\) \citep{mirzaei2020estimation}, we consider a hierarchical framework in which \(\text{PM}_{2.5}\) is conditioned on AOT and an additional level is included that models the variability of AOT. Moreover, possible nonlinear interaction can be modeled via a basis function expansion with shared random effects \citep[e.g., see][]{cressie2022basis}. Unfortunately, for the scale of the available data, carrying out posterior inference for such a model is not practical using standard Markov Chain Monte Carlo (MCMC) techniques. There are an overwhelming number of options available to aid with large spatial data \citep[e.g., see][for a recent review]{heaton2019case}, however, nearly all strategies have an upper bound on the sample size that they can be applied to \citep{saha2023incorporating}. There is an existing method, referred to as the data subset approach, that scales to any data size \citep{SS}, but has only been implemented with computationally inefficient MCMC.

The recent development of Exact Posterior Regression (EPR) \citep{bradley2024generating} offers advantages over traditional MCMC by introducing a ``discrepancy term'' in the GLMM \citep{bradley2020a}, which is important for our application as it allows us to model feedback mechanisms (i.e., signal-to-noise dependence) that are well-known to be present in air pollution data \citep[e.g., see][among others]{liu2019two}. Additionally, when this term is given a specific improper prior, the posterior distribution for the fixed and random effects and discrepancy term takes the form of a generalized conjugate multivariate (GCM) distribution, which can be sampled from directly without MCMC. Consequently, EPR avoids computationally expensive burn-in periods, thinning, convergence checks, and tuning required by MCMC. To date, EPR has not yet been developed for multivariate spatial data. We offer a non-trivial extension to the multivariate spatial setting, allow the multiple data sources to belong to different distributions (i.e., the multi-type setting), and provide theorems that show the SM-EPR's GCM posterior distribution can be efficiently sampled from directly. 

While EPR and approximate Bayesian approaches offer computational advantages over MCMC, none of these methods are able to jointly model the entire global \(\text{PM}_{2.5}\) and AOT data. To address this challenge, we introduce Scalable Multivariate Exact Posterior Regression (SM-EPR), which combines the recently proposed data subset approach \citep{SS, saha2023incorporating} with EPR in a novel way within a multivariate spatial setting. SM-EPR enables one to independently sample replicates from the posterior distribution given a dataset of effectively any size. This novel combination overcomes the computational difficulties when analyzing large datasets, which we illustrate with our application of remote sensing data containing over eight million observations.

Subsampling is a widely recognized approach in big data analysis \citep[e.g., see][for a recent study]{heaton2024minibatch}. A predominant strategy is referred to as divide-and-conquer \citep[e.g., see][among others]{chen2014split}, which divides the larger dataset into manageable subsets, analyzes each separately and then aggregates the results. Another common strategy is to partition the spatial domain into independent regions \citep[e.g., see][]{sang2011covariance} reducing the parameter space in a way that is computationally convenient.

The data subset approach is an exciting new approach that subsets the data into training and holdout sets and writes the data model as the product of a parametric data model for the training data and the true non-parametric density for the holdout data. This split of the data is modeled directly through a prior distribution referred to as a “Subset Model.” This strategy leads to efficient sampling where we iteratively sub-sample the dataset using standard sampling techniques such as simple random sampling (SRS) and use the low-dimensional subset to sample the parameters from the posterior distribution. Our novel combination of EPR with the data subset approach is motivated by our application, as EPR cannot be computed using our multivariate spatial dataset of \(\text{PM}_{2.5}\) and AOT.

We offer technical development to investigate the statistical properties implied by including a discrepancy term and adopting a data subset model. In particular, feedback mechanisms are known to produce signal-to-noise covariances \citep{bradley2020a}, which is a source of dependence known to be present in air pollution data. We derive the posterior cross-covariance between our latent mean and the discrepancy term (an error term) to demonstrate that we are able to model this source of dependence in air pollution data. Additionally, we provide Kullback-Leibler (KL) theory \citep{bishop2006pattern} that shows that our proposed model is robust to model misspecification due to our use of the data subset approach.

The structure of the remainder of our paper is as follows: Section \ref{motivate} introduces our motivating remote sensing dataset, accompanied by a univariate sub-analysis of \(\text{PM}_{2.5}\) modeled/implemented using EPR and integrated nested Laplace approximations \citep[INLA;][]{rue2009approximate}. Section \ref{prelim} gives a review of the data subset approach, GCM, and EPR. Section \ref{method} formally defines SM-EPR, which includes theoretical development describing the computational benefits and key statistical properties. Section \ref{sim_study} presents a simulation study comparing SM-EPR, EPR, and INLA highlighting the advantages of our approach. Section \ref{joint_anal} contains the joint analysis of AOT and \(\text{PM}_{2.5}\), demonstrating significant improvement over the univariate analysis. Section \ref{s:discuss} concludes with a discussion and future directions. A notation table is provided in Supplementary Appendix \ref{appen:notation}. Additional details, simulations, and proofs are also provided in the Supplementary Appendix.
\vspace{-25pt}
\section{Motivating Data}\label{motivate}
\vspace{-10pt}
Our motivating dataset comprises global monthly average \(\text{PM}_{2.5}\) measurements from April 2023 obtained from NASA's Moderate Resolution Imaging Spectroradiometer (MODIS) instrument on the Terra satellite, downloaded from \url{https://giovanni.gsfc.nasa.gov/giovanni/}. The data, available on a \(0.5 \times 0.625\) degree grid, were rasterized, and values of cells corresponding to the coordinates of the \(0.1 \times 0.1\) degree AOT data were extracted using the \texttt{Raster} package in R. Both AOT and \(\text{PM}_{2.5}\) are observed at \(4,097,474\) locations, resulting in over \(8\) million observations. For this application, we make use of a mixed effects model with relevant covariates and a basis function expansion (see Section \ref{joint_anal} for details on model specification). 

\begin{singlespace}
\begin{table}[t!]
\caption{Evaluation metrics for each sub-region analysis using INLA.}
\label{mot_dat_inla}
\begin{center}
\begin{tabular}{cccccc}
\toprule
Sub-region & HOVE & PMCC & CRPS & WAIC & CPU Time \\ \midrule
1 & 58.3286 & 58.3289 & 3.6361 & 5.2152 & 6.25 mins \\
2 & 133.1224 & 133.1227 & 5.9907 & 6.4061 & 5.59 mins  \\ 
3 & 49.8621 & 49.8623 & 2.7436 & 4.6181 & 6.75 mins  \\ 
4 & 253.3219 & 253.3222 & 9.3251 & 6.9821 & 6.11 mins  \\
\bottomrule
\end{tabular}
\end{center}
\begin{flushleft}
\textit{Notes}: The hold out validation error (HOVE), where  20$\%$ (denoted \(0.2N\)) of the original dataset is held out and we use the posterior mean as our predictor. PMCC is the average predictive model choice criterion over the hold out locations \citep{gneiting2007strictly}. WAIC is the Wantanabe-Akaike information criterion \citep{WAIC}. CRPS is the average continuous rank probability score, scaled to favor smaller values. The last column displays the CPU time. 
\vspace{-15pt}
\end{flushleft}
\end{table} 
\end{singlespace}

\vspace{-15pt}
Several R packages are available to implement such a Bayesian model, including INLA and MCMC based implementation via Stan. Due to the extremely large size of the data, we chose to fit the model using INLA as it is known to be more computational efficiency than Stan. Unfortunately, INLA was unable to process the bivariate data on the author’s laptop computer, which is running on Windows 11 with the following specifications: 12th Gen Intel(R) CORE(TM) i7-1260P CPU with 2.1 Ghz. Numerous attempts to implement INLA yielded out of memory and hard error messages. Similarly, our attempt at fitting the multivariate version of EPR could not produce answers as our computing system produced memory errors. As a result, we explored a univariate analysis of \(\text{PM}_{2.5}\) to better assess the computational capabilities of these approaches. INLA was still unable to analyze the global dataset at the desired resolution, but it successfully processed sub-regions containing approximately 700,000 observations. In Table \ref{mot_dat_inla}, we provide several metrics (measuring predictive and computational performance) from the univariate INLA analyses over four different sub-regions. The evaluation metrics calculated from the model fitted using INLA software vary wildly from one region to another leading to inconsistent results. It is possible to analyze the entire univariate \(\text{PM}_{2.5}\) global dataset using EPR with a Central Processing Unit (CPU) time of approximately \(18.63\) hours. However, it is clear from this initial exploration that additional model development is needed to achieve our primary goal of jointly model \(\text{PM}_{2.5}\) and AOT.

\vspace{-25pt}
\section{Preliminaries}\label{prelim}
\vspace{-20pt}
\subsection{Review: The Data Subset Approach}\label{prelim_ss}
\vspace{-10pt}
Let $Z_{k,i}$ denote data at the $i$-th location distributed according to the \(k\)-th distribution with \(i = 1, \dots, N\) and \(k = 1, \dots K\), where the $NK$-dimensional vector $\textbf{z} \equiv (Z_{1,1},\ldots, Z_{K,N})^{\prime}$ and $K$ is a positive integer. In the setting, where $k = K = 1$, we obtain univariate spatial data, and when $K>1$ we have multi-type spatial data. This specification assumes that both data types are observed at the same locations as is the case for our application, however, one could easily modify this to allow for different missing data patterns by data type.

\citet{SS} and \cite{saha2023incorporating} recently developed a semi-parametric approach to Bayesian hierarchical models, which they refer to as the ``Data Subset Approach.'' In this paradigm, they assume that the data is generated from an unknown probability density function (pdf) or probability mass function (pmf) $v(\textbf{z})$. Since $v(\textbf{z})$ is non-parametric and unknown, this implies that a parametric model specification may be misspecified. The data subset approach aims to mitigate the role of model misspecification by dividing the data into training and holdout subsets, and assumes the true pdf/pmf $v$ for the holdout subset, while assuming a parametric model for the training subset. Specifically define the subset indicator $\delta_{i}$, which equals 1 or 0 with $\sum_{i = 1}^{N} \delta_{i} = n\ll N$ and $\delta_{i} = 1$ (= 0) indicates $(Z_{1,i},\ldots, Z_{K,i})^{\prime}$ is included as training (holdout). The subset indicator is then given a prior distribution denoted $f(\bm{\delta}\vert n)$, and can be chosen based on a sampling design from the survey sample literature such as a simple random sample (SRS). It is assumed that the $(N-n)K$-dimensional holdout data vector $\textbf{z}_{-\delta} = (Z_{k,i}: \delta_{i} = 0)^{\prime}$ follows $v(\textbf{z}_{-\delta}) = \int v(\textbf{z})d\textbf{z}_{\delta}$, and the $nK$-dimensional training data vector $\textbf{z}_{\delta} = (Z_{k,i}: \delta_{i} = 1)^{\prime}$ is assumed to follow a misspecified parametric model. In this paper we assume $\textbf{z}_{\delta}$ generally belongs to a member of the exponential family of distributions, where the natural parameter is modeled via a mixed effects representation $\textbf{x}_{k,i}^{\prime}\bm{\beta} + \textbf{g}_{k,i}^{\prime}\bm{\eta} + \xi_{k,i}$.

The $p$-dimensional vector $\textbf{x}_{k,i}^{\prime}$ consists of $p$ known covariates with unknown coefficients $\bm{\beta}\in \mathbb{R}^{p}$. The $r$-dimensional vector $\textbf{g}_{k,i}^{\prime}$ consists of $r$ pre-specfied basis functions evaluated at the $i$-th location, with $r$-dimensional coefficients $\bm{\eta}$ interpreted as random effects. One can use a complete class of basis functions, which are well-known to arbitrarily approximate any random function for large $r$ under reasonable conditions \citep[e.g., see][pg. 102]{obled1986some,cressie2011statistics}. Multi-type spatial covariances are induced by the basis function expansion $\textbf{g}_{k,i}^{\prime}\bm{\eta}$, since $\text{cov}(\textbf{g}_{k,i}^{\prime}\bm{\eta}, \textbf{g}_{m,j}^{\prime}\bm{\eta}) = \textbf{g}_{k,i}^{\prime} \text{cov}(\bm{\eta})\textbf{g}_{m,j}$, which is not necessarily zero for $k,m = 1,\ldots, K$ and $i,j = 1,\ldots, N$. The $NK$-dimensional random effects vector $\bm{\xi}  = (\xi_{1,1},\ldots, \xi_{K,N})^{\prime}$ models non-spatially co-varying random effects and is sometimes called the fine-scale variability term with variance referred to as a nugget. 

The data subset model can be expressed hierarchically, and is defined as the product of the following pdfs/pmfs
\vspace{-20pt}
\begin{align}
    &\text{Data Subset Model:} \hspace{2mm} v(\mathbf{z}_{-\delta})\prod_{\{i:\delta_i=1\}}\text{EF}(Z_{1,i}|\textbf{x}_{1,i}^{\prime}\bm{\beta} + \textbf{g}_{1,i}^{\prime}\bm{\eta} + \xi_{1,i},b_{i};\psi) \notag \\
        &\text{Process Model  1:} \hspace{2mm} f(\boldsymbol{\eta}|\boldsymbol{\theta}) \notag \\
    &\text{Process Model  2:} \hspace{2mm} 
    f(\boldsymbol{\xi}|\boldsymbol{\theta}) \notag \\
    &\text{Parameter Model 1:} \hspace{2mm}f(\boldsymbol{\beta}|\boldsymbol{\theta}) \notag \\
    &\text{Parameter Model 2:} \hspace{2mm} f(\boldsymbol{\theta}) \notag\\
    &\text{Subset Model:} \hspace{2mm} f(\boldsymbol{\delta}|n),
    \label{equation3}
\end{align}

\vspace{-20pt}
\noindent where $\text{EF}(Z\vert \mu,b; \psi)$ is a shorthand for the exponential family pdf/pmf with natural parameter $\mu\in \mathbb{R}$, log-partition function $b\psi(Y)$, $b \in \bm{\theta}$ may be unknown, and the distributions $f(\bm{\eta}\vert \bm{\theta})$, $f(\bm{\xi}\vert \bm{\theta})$, $f(\bm{\beta}\vert \bm{\theta})$, and $f(\bm{\theta})$ are proper. We note that \citet{SS} and \citet{saha2023incorporating} do not jointly model multiple types of data and only consider the univariate setting, where $k=K = 1$.

\citet{SS} assumes that \(\boldsymbol{\delta}\) and \(\textbf{z}\) are independent and show that there exists a \(v\) such that it is possible for \(\boldsymbol{\delta}\) and \(\textbf{z}\) to be independent. This choice is particularly important because it allows one to avoid estimating \(v\). To see this note that the joint posterior distribution can be written as: $f(\boldsymbol{\beta},\boldsymbol{\eta}, \boldsymbol{\xi}, \boldsymbol{\theta}, \boldsymbol{\delta}\vert \textbf{z}) = f(\boldsymbol{\beta}, \boldsymbol{\eta}, \boldsymbol{\xi}, \boldsymbol{\theta} \vert \boldsymbol{\delta},\textbf{z})f( \boldsymbol{\delta}\vert\textbf{z})$. The term \(v\) is a proportionality constant when deriving $f(\boldsymbol{\beta}, \boldsymbol{\eta}, \boldsymbol{\xi},\boldsymbol{\theta} \vert \boldsymbol{\delta},\textbf{z})$ and is not needed. With the added assumption of independence between \(\boldsymbol{\delta}\) and \(\textbf{z}\), we have $f( \boldsymbol{\delta}\vert\textbf{z}) = f(\boldsymbol{\delta}\vert n)$, which again does not require knowledge of \(v\).

The marginal distribution of $(\textbf{z}_{-\delta},\bm{\delta})$ from (\ref{equation3}) is easily verified to be $v(\textbf{z}_{-\delta})f(\bm{\delta}\vert n)$. Consequently, small values of $n$ imply that a larger portion of the the data (i.e., the $(N-n)$-dimensional vector $\textbf{z}_{-\delta}$) is correctly specified. This suggests small values of $n$ can aid with model robustness, which has been empirically verified in the literature \citep{saha2023incorporating}. One can however choose \(n\) to be too small, as it has been shown that smaller values of \(n\) flatten the posterior distribution leading to less precise estimates. This was shown empirically in \citet{SS} and \citet{saha2023incorporating}. Moreover, it is immediate from the iterated variance identity that one would expect posterior variances from the Data Subset Model to generally be larger than the model that sets $\delta_{i} = 1$ with probability one for all $i$, since $\text{Var}(\theta_{i}\vert \textbf{z}_{\delta})\ge E_{\textbf{z}_{-\delta}}(\text{Var}(\theta_{i}\vert \textbf{z})\vert \textbf{z}_{\delta})$, where $\theta_{i}$ is the $i$-th component of $\bm{\theta}$ and $E_{\textbf{z}_{-\delta}}(\cdot\vert \textbf{z}_{\delta})$ is the expected value with respect to $\textbf{z}_{-\delta}\vert \textbf{z}_{\delta}$.
Hence, there is a trade-off between choosing \(n\) for model robustness and statistical precision. In practice, we make use of an elbow plot of a metric to evaluate prediction to select $n$.


Obtaining multiple samples from the posterior distribution can be done using an algorithm referred to as a composite sampler \citep{SS, saha2023incorporating}. The first step of this algorithm samples $\bm{\delta}$ from $f(\bm{\delta}\vert n)$, and the second step samples from $f\left(\boldsymbol{\beta}, \boldsymbol{\eta}, \boldsymbol{\xi}, \boldsymbol{\theta}| \mathbf{z}_{\delta},  n\right)$. Sampling from $f\left(\boldsymbol{\beta}, \boldsymbol{\eta}, \boldsymbol{\xi}, \boldsymbol{\theta}| \mathbf{z}_{\delta},  n\right)$ only requires the use of \(\sum_{i=1}^N \delta_i = n << N\) observations, since $\textbf{z}_{\delta}$ is $n$-dimensional. Thus, for a given \(\boldsymbol{\delta}\), only a small subset of the data is used to sample parameters leading to computational gains, and a sampler that scales with $n$ instead of $N$. The dataset is repeatedly subsampled using this strategy. Consequently, we obtain the computational benefits of using a single subset without ignoring any data from the entire dataset provided we resample enough times so that each data point \(Z_{k,i}\) is selected.

\subsection{Review: Generalized Conjugate Multivariate Distribution}\label{prelim_gcm}
The generalized conjugate multivariate (GCM) distribution was introduced in \citep{bradley2024generating}. This distribution was developed to describe correlated non-identically distributed Diaconis-Ylvisaker (DY) \citep{DY} random variables and marginalizes over a generic \(d\)-dimensional parameter vector \(\boldsymbol{\theta}\). The univariate DY pdf is given by $DY(w_{k,i}\vert \alpha_{k, i}, \kappa_{k,i}) \propto \text{exp} \left\{\alpha_{k,i}w_{k,i} - \kappa_{k,i}\psi_{k,i}(w_{k,i})\right\}$, where \(k = 1, \dots, K\), \(i = 1, \dots, m_{k}\), \(\alpha_{k,i}\) denotes the shape parameter, \(\kappa_{k,i}\) denotes the scale parameter, and \(\psi_{k,i}\) denotes the unit log partition function for the \(i\)-th element corresponding to the \(k\)-th family (e.g., if the \(i\)-th element is a logit-beta random variable then \(\psi_{1,i} = \text{log}\{1 + \text{exp}(w_{1,i})\}\)). The GCM is a multivariate analog defined by the transformation 
\vspace{-20pt}
\begin{equation}
    \mathbf{h} = \boldsymbol{\mu} + \mathbf{V} \mathbf{D}(\boldsymbol{\theta})\mathbf{w}  
    \label{eq.gcm.transform}
\end{equation}
\noindent where \(\mathbf{h}\) is an \(M = \sum_{k = 1}^{K}m_{k}\)-dimensional vector. Similarly, \(\mathbf{w} = (w_{k,i}: k = 1,\ldots, K, i = 1,\ldots, m_{k})^{\prime}\) is an \(M\)-dimensional vector with elements \(w_{k,i} \sim \text{DY}(w_{k,i} \vert \alpha_{k,i}, \kappa_{k,i}; \psi_{k,i})\). The term \(\boldsymbol{\mu}\) is the \(M\)-dimensional location parameter vector, \(\mathbf{V}\) is a \(M\times M\) invertible matrix, \(\mathbf{D}\) is an \(M\times M\) matrix valued function of $\bm{\theta}$, and define the \(M\)-dimensional vector \(\boldsymbol{\psi} \equiv (\psi_{k,i}(\cdot): k = 1,\dots,K, i = 1, \dots, m_k)'\). 

\citet{bradley2024generating} showed that the probability density function for \(\mathbf{h}|\boldsymbol{\mu}, \boldsymbol{V}, \boldsymbol{\alpha}, \boldsymbol{\kappa}\) is equal to the following pdf: 
\vspace{-20pt}
\begin{align*}
    \text{GCM}(\mathbf{h}|\boldsymbol{\mu}, \mathbf{V}, \boldsymbol{\alpha}, \boldsymbol{\kappa}) = \int \frac{\pi(\boldsymbol{\theta})}{\mathscr{N}(\boldsymbol{\theta})} \text{exp}\left[\boldsymbol{\alpha}^{\prime}\mathbf{D}(\boldsymbol{\theta})^{-1}\mathbf{V}^{-1}(\mathbf{h}-\boldsymbol{\mu}) - \boldsymbol{\kappa}^{\prime} \boldsymbol{\psi}\{\mathbf{D}(\boldsymbol{\theta})^{-1}\mathbf{V}^{-1}(\mathbf{h}- \boldsymbol{\mu}) \} \right] d\boldsymbol{\theta},
\end{align*}

\vspace{-20pt}
\noindent where \(\pi(\boldsymbol{\theta})\) is a probability density function for \(\boldsymbol{\theta}\), \(\mathscr{N}(\boldsymbol{\theta})\) is the normalizing constant which has a known form proportional to \(\text{det}\{\mathbf{D}(\boldsymbol{\theta})\}\), and the \(i\)-th element of \(\boldsymbol{\psi}\{\mathbf{D}(\boldsymbol{\theta})^{-1}\mathbf{V}^{-1}(\mathbf{h}- \boldsymbol{\mu})\}\) is the \(i\)-th element of \(\boldsymbol{\psi}\) evaluated at the \(i\)-th element of \(\mathbf{D}(\boldsymbol{\theta})^{-1}\mathbf{V}^{-1}(\mathbf{h}- \boldsymbol{\mu})\). The development of the GCM is particularly important for the recently proposed EPR model, as its posterior takes the form of a GCM and can be sampled from directly via (\ref{eq.gcm.transform}).

Suppose \((\mathbf{h}_1^{\prime}, \mathbf{h}_2^{\prime})^{\prime}\) are jointly GCM where \(\mathbf{h}_1\) is an \(l\)-dimensional vector and \(\mathbf{h}_2\) is an \((M - l)\)-dimensional vector. Then the conditional distribution of \(\mathbf{h}_1 \vert \mathbf{h}_2\) is proportional to the following pdf: 
\begin{align*}
&\text{cGCM}(\mathbf{h}_1|\boldsymbol{\alpha}, \boldsymbol{\kappa}, \boldsymbol{\mu}^{*}, \mathbf{H}, \pi,\mathbf{D}; \boldsymbol{\psi}) \\
&\propto \int \frac{\pi(\boldsymbol{\theta})}{\mathscr{N}(\boldsymbol{\theta})} \text{exp}\left[\boldsymbol{\alpha}^{\prime}\{\mathbf{D}(\boldsymbol{\theta})^{-1}\mathbf{H}\mathbf{h}_1 -\boldsymbol{\mu}^{*}\}
- \boldsymbol{\kappa}^{\prime} \boldsymbol{\psi}\{\mathbf{D}(\boldsymbol{\theta})^{-1}\mathbf{H}\mathbf{h}_1 - \boldsymbol{\mu}^{*} \} \right] d\boldsymbol{\theta}, 
\end{align*}
\noindent where \(\boldsymbol{\mu}^{*} = \mathbf{D}(\boldsymbol{\theta})^{-1}\mathbf{V}^{-1}\boldsymbol{\mu} - \mathbf{D}(\boldsymbol{\theta})^{-1}\mathbf{Q}\mathbf{h}_2\), \(\mathbf{V}^{-1} = (\mathbf{H}, \mathbf{Q})\), \(\mathbf{H}\) is an \(M \times l\) matrix, and \(\mathbf{Q}\) is an \(M \times (M-l)\) matrix. 

\vspace{-25pt}
\subsection{Review: Exact Posterior Regression}\label{prelim_epr}
\citet{bradley2024generating} proposed the following hierarchical model
\begin{align}
    f(\mathbf{z} \vert \boldsymbol{\beta}, \boldsymbol{\eta}, \boldsymbol{\xi}, \boldsymbol{\theta}, \mathbf{q}) &= \prod_{i = 1}^N \text{EF}(Z_{1,i}\vert\textbf{x}_{1,i}^{\prime}\bm{\beta}+\textbf{g}_{1,i}^{\prime}\bm{\eta}+\xi_{1,i}-\tau_{y,1,i}, b_{i}; \psi); \hspace{2mm} i=1,\dots,N, \notag \\ f(\boldsymbol{\xi}\vert\boldsymbol{\beta},\boldsymbol{\eta},\boldsymbol{\theta},\mathbf{q}) &\propto \text{cGCM}(\boldsymbol{\xi}\vert \boldsymbol{\alpha}_{\xi}, \boldsymbol{\kappa}_{\xi}, \boldsymbol{\tau}_{\xi}^{*}, \mathbf{H}_{\xi}, \pi_{\xi}, \mathbf{D}_{\xi}; \psi_{\xi}) \notag\\
    \boldsymbol{\beta}|\boldsymbol{\theta},\mathbf{q} &\sim N(\boldsymbol{\beta} \vert \mathbf{D}_{\beta}\boldsymbol{\tau}_{\beta}, \boldsymbol{\Sigma}_{\beta}(\boldsymbol{\theta})) \notag\\
    \boldsymbol{\eta}|\boldsymbol{\theta},\mathbf{q}  &\sim N(\boldsymbol{\eta} \vert \mathbf{D}_{\eta}\boldsymbol{\tau}_{\eta}, \boldsymbol{\Sigma}_{\eta}(\boldsymbol{\theta})) \notag\\
    f(\mathbf{q}) &= 1 \notag\\
    &\pi(\boldsymbol{\theta}).
    \label{EPR.hier.mod}
\end{align}
\noindent\citet{bradley2024generating} explicitly provides formulations for the univariate scenario \(k=K =1\) with $b_{i}\psi$ defined to be the log-partition function of either the Gaussian, Poisson, or binomial distribution. In this article, we extend EPR to the multi-type scenario (i.e., \(K > 1\)). Let $N(\bm{\beta}\vert \bm{\mu},\bm{\Sigma})$ be a shorthand for the multivariate normal pdf with real vector-valued mean $\bm{\mu}$ and positive definite covariance matrix $\bm{\Sigma}$. 

The \(N\)-dimensional vector \(\boldsymbol{\tau}_{y} = (\tau_{y,1,1},\ldots, \tau_{y,1,N})^{\prime}\) is the difference/error between the traditional mixed effects model \(\mathbf{X}\boldsymbol{\beta} + \mathbf{G}\boldsymbol{\eta} + \boldsymbol{\xi}\) and the true latent process, where  the $N\times p$ matrix $\textbf{X} = (\textbf{x}_{1,1}^{\prime},\ldots, \textbf{x}_{1,N}^{\prime})^{\prime}$ and the $N\times r$ basis function matrix $\textbf{G} = (\textbf{g}_{1,1}^{\prime},\ldots, \textbf{g}_{1,N}^{\prime})^{\prime}$. The term \(\boldsymbol{\theta}\) is a \(d\)-dimensional parameter vector with prior distribution \(\pi(\boldsymbol{\theta})\). Let \(\mathbf{D}_{\beta}(\boldsymbol{\theta})\) be the matrix square root of a positive definite matrix \(\boldsymbol{\Sigma}_{\beta}(\boldsymbol{\theta}) \equiv \mathbf{D}_{\beta}(\boldsymbol{\theta})\mathbf{D}_{\beta}(\boldsymbol{\theta})'\), and \(\mathbf{D}_{\eta}(\boldsymbol{\theta})\) be the matrix square root of a positive definite matrix \(\boldsymbol{\Sigma}_{\eta}(\boldsymbol{\theta}) \equiv \mathbf{D}_{\eta}(\boldsymbol{\theta})\mathbf{D}_{\eta}(\boldsymbol{\theta})'\). The specification of the prior means \(\mathbf{D}_{\beta}\boldsymbol{\tau}_{\beta}\) for \(\boldsymbol{\beta}\) and \(\mathbf{D}_{\eta}\boldsymbol{\tau}_{\eta}\) for \(\boldsymbol{\eta}\) are important as they lead to an expression of the posterior distribution that can be sampled from directly. 

The \(2N\)-dimensional vector \(\boldsymbol{\tau}_{\xi}^{*} = (\boldsymbol{\tau}_y^{\prime} - \boldsymbol{\beta}^{\prime}\mathbf{X}^{\prime}- \boldsymbol{\eta}^{\prime}\mathbf{G}^{\prime}, \boldsymbol{\tau}_{\xi}^{\prime})^{\prime}\), the \(2N \times N\) matrix \(\mathbf{H}_{\xi} = (\sigma_{\xi}\mathbf{I}_N, \mathbf{I}_N)^{\prime}\), the \(2N \times 2N\) matrix \(\mathbf{D}_{\xi} = \sigma_{\xi}\mathbf{I}_{2N}\), \(\sigma_{\xi}^2 \in \boldsymbol{\theta}\), $\pi_{\xi}(\bm{\theta})$ is a point-mass function on the element $\bm{\theta}$, \(\boldsymbol{\tau}_{\xi}\) is a \(N\)-dimensional vector, \(\boldsymbol{\psi}_{\xi}\) is a $2N$-dimensional function where the first $N$ block consists of functions equal to the unit log-partition function of the data and the second $N$-dimensional block consists of elements equal to the unit-log partition function of the Gaussian distribution, and \(\boldsymbol{\alpha}_{\xi}\) and \(\boldsymbol{\kappa}_{\xi}\) are \(2N\)-dimensional shape parameters.

The \((2N + p + r)\)-dimensional vector  \(\boldsymbol{\tau} = (\boldsymbol{\tau}_{y}', \boldsymbol{\tau}_{\beta}', \boldsymbol{\tau}_{\eta}',\boldsymbol{\tau}_{\xi}')' = -\mathbf{D}(\boldsymbol{\theta})^{-1}\boldsymbol{Q}\mathbf{q}\) is called a discrepancy parameter, where $\textbf{q}$ is unknown, \(\mathbf{D}(\boldsymbol{\theta})^{-1} = \text{blkdiag}(\mathbf{I}_{N}, \mathbf{D}_{\beta}(\boldsymbol{\theta})^{-1},\mathbf{D}_{\eta}(\boldsymbol{\theta})^{-1}, \frac{1}{\sigma_{\xi}}\mathbf{I}_{N})\), ``blkdiag'' is the block diagonal operator, \(\mathbf{I}_{N}\) is an \(N \times N\) identity matrix,  and \(\sigma_{\xi} \in \boldsymbol{\theta}\). The \((2N + p + r) \times N\) matrix \(\mathbf{Q}\) represents the eigenvectors of the orthogonal complement of the \((2N + p + r) \times (N + p + r)\) matrix \(\mathbf{H}\) associated with non-zero eigenvalues, where $\textbf{H} = (\mathbf{I}_{N} \hspace{4pt} \mathbf{X} \hspace{4pt} \mathbf{G}: \bm{0}_{p,N} \hspace{4pt}  \mathbf{I}_p \hspace{4pt} \bm{0}_{p,r}:  \bm{0}_{r,N+p}\hspace{4pt}\textbf{I}_{r}:\mathbf{I}_{N} \hspace{2pt} \bm{0}_{N,p+r} )$, $(\textbf{A}\hspace{4pt}\textbf{B}:\textbf{C}\hspace{4pt}\textbf{D}) = \left(\begin{array}{cc}
     \textbf{A}&\textbf{B}  \\
     \textbf{C}&\textbf{D}
\end{array}
    \right)$ for generic real-valued matrics $\textbf{A},\textbf{B},\textbf{C}$ and $\textbf{D}$, and \(\mathbf{0}_{a,b}\) is an \(a  \times b\) matrix of zeros. This specification of $\bm{\tau}$ is quite complex, and \citet{bradley2024generating} primarily motivate this choice by showing that \((\boldsymbol{\xi}^{\prime},\boldsymbol{\beta}^{\prime}, \boldsymbol{\eta}^{\prime})^{\prime}\) falls in the column space of \(\mathbf{H}\). As a result, the coefficients of \(\textbf{q}\) and \((\boldsymbol{\xi}^{\prime},\boldsymbol{\beta}^{\prime}, \boldsymbol{\eta}^{\prime})^{\prime}\) are orthogonal, which is a common strategy to avoid collinearity issues \citep{reich2006effects}.

Multiplying the data, process, and parameter models in (\ref{EPR.hier.mod}), and marginalizing across \(\boldsymbol{\theta}\) results in a posterior distribution that is GCM and therefore can be sampled from directly through (\ref{eq.gcm.transform}). \citet{bradley2024generating} show independent replicates from this GCM posterior can be written as a projection onto the column space of $\textbf{H}$. This projection onto $\textbf{H}$ is referred to as EPR (see Theorem \ref{theorem1} for more details). The terms $\bm{\xi}$ and $\bm{\tau}_{y}$ are interpreted as error terms, and hence, predictions are based on posterior summaries of $Y_{k,i}\equiv \textbf{x}_{k,i}^{\prime}\bm{\beta}+\textbf{g}_{k,i}^{\prime}\bm{\eta}$. \citet{bradley2024generating} considered several settings (e.g., geostatistical models and basis function models for univariate Gaussian, Poisson, and Bernoulli data), and found (empirically) that predictions from this model are very similar to corresponding standard spatial generalized linear mixed models. However, their results suggest that EPR generally produced posterior variances that were similar to slightly larger to that of the corresponding standard spatial generalized linear mixed model.

\vspace{-25pt}
\section{Methodology}\label{method}
\vspace{-15pt}
\subsection{Scalable Multivariate Exact Posterior Regression}\label{SEPR}
\vspace{-10pt}

We present a statement of the model that combines the data subset approach \citep{SS, saha2023incorporating} with Exact Posterior Regression \citep{bradley2024generating}, and extend to the multivariate spatial context for our bivariate (i.e., $K = 2$) logit-beta and Weibull distributed data. The data, process, parameter, and subset models for SM-EPR are defined as follows: \vspace{-20pt}
\begin{align}
    f(\mathbf{z} | \boldsymbol{\beta}, \boldsymbol{\eta}, \boldsymbol{\xi}, \boldsymbol{\theta}, \mathbf{q}, \boldsymbol{\delta}) &= v(\mathbf{z}_{-\delta}) \prod_{\{i: \delta_i = 1\}} \text{Logit-Beta}(Z_{1,i} \vert \mathbf{x}_{1, i}^{\prime}\boldsymbol{\beta} + \mathbf{g}_{1, i}^{\prime}\boldsymbol{\eta} + \xi_{1,i}- \tau_{y,1,i}, \sigma_i^2, \alpha_{z}, \kappa_{z}) \notag \\
    & \times  \prod_{\{i: \delta_i = 1\}} \text{Weibull}(Z_{2,i} \vert \rho_z, \text{exp}\{ -(\mathbf{x}_{2,i}^{\prime}\boldsymbol{\beta} + \mathbf{g}_{2,i}^{\prime}\boldsymbol{\eta} + \xi_{2,i} - \tau_{y,2,i})\}) 
  \notag \\
{f(\boldsymbol{\xi}|\boldsymbol{\beta},\boldsymbol{\eta},\boldsymbol{\theta},\mathbf{q},\bm{\delta})} &{= N\left(\bm{\xi}\vert (\bm{\tau}_{\delta,\xi}^{\prime},\bm{0}_{1,2N-2n})^{\prime},\sigma_{\xi}^{2}\textbf{I}_{2N}\right)}\notag\\
    \boldsymbol{\beta}|\boldsymbol{\theta},\mathbf{q},\bm{\delta} &\sim N(\boldsymbol{\beta}\vert\mathbf{D}_{\beta}\boldsymbol{\tau}_{\beta}, \boldsymbol{\Sigma}_{\beta}(\boldsymbol{\theta})) \notag\\
    \boldsymbol{\eta}|\boldsymbol{\theta},\mathbf{q},\bm{\delta}  &\sim N(\boldsymbol{\eta}\vert\mathbf{D}_{\eta}\boldsymbol{\tau}_{\eta}, \boldsymbol{\Sigma}_{\eta}(\boldsymbol{\theta})) \notag\\
    f(\mathbf{q}\vert \boldsymbol{\delta}) &= 1 \notag\\
    &\pi(\boldsymbol{\theta}) \notag\\
    &f(\boldsymbol{\delta}|n),
    \label{hier.mod.eq}
\end{align}
\noindent where ``Weibull$(Z\vert \rho, \lambda)$'' is a shorthand for the Weibull distribution with shape parameter $\rho>0$ and scale parameter $\lambda$, ``Logit-Beta$(Z\vert \mu, \sigma^{2},\alpha,\kappa)$'' is a shorthand for the logit-beta distribution (a member of the DY family of distributions), which is equal in distribution to the transformation $\mu + \sigma \mathrm{log}\left(\frac{w}{1-w}\right)$ with $w$ distributed as beta with shape parameters $\alpha>0$ and $\kappa-\alpha>0$. When \(\delta_i = 1\) and \(k = 1\), we assume that logit(AOT) is distributed according to the logit-beta distribution with mean \(\mathbf{x}_{1,i}^{\prime}\boldsymbol{\beta} + \mathbf{g}_{1,i}^{\prime}\boldsymbol{\eta} + \xi_{1,i} - \tau_{y,1,i}\), shapes \(\alpha_{z}\) and \(\kappa_{z}\), which is reasonable considering AOT is bounded between zero and one. When \(\delta_i = 1\) and \(k =2\), we assume \(\text{PM}_{2.5}\) is Weibull distributed with shape \(\rho_z\) and scale \(\text{exp}\{-(\mathbf{x}_{2,i}^{\prime}\boldsymbol{\beta} + \mathbf{g}_{2,i}^{\prime}\boldsymbol{\eta} + \xi_{2,i} - \tau_{y,2,i})\}\) \citep{xu2023latent}. The terms \(\mathbf{D}_{\beta}(\boldsymbol{\theta})\), \(\mathbf{D}_{\eta}(\boldsymbol{\theta})\), \(\boldsymbol{\Sigma}_{\beta}(\boldsymbol{\theta})\), and \(\boldsymbol{\Sigma}_{\eta}(\boldsymbol{\theta})\) have the same definition as in Section \ref{prelim_epr}. The terms $v(\textbf{z}_{-\delta})$ and $f(\bm{\delta}\vert n)$ have the same definition as in Section~\ref{prelim_ss}, and it is again assumed that $\textbf{z}$ and $\bm{\delta}$ are independent.

The \((4n + p + r)\)-dimensional vector  \(\boldsymbol{\tau}_{\delta} = (\boldsymbol{\tau}_{{\delta},y}', \boldsymbol{\tau}_{\beta}', \boldsymbol{\tau}_{\eta}',\boldsymbol{\tau}_{{\delta},\xi}')' = -\mathbf{D}_{\delta}(\boldsymbol{\theta})^{-1}\boldsymbol{Q}_{\delta}\mathbf{q}\) is called a discrepancy parameter, where $\textbf{q}$ is unknown, \(\mathbf{D}_{\delta}(\boldsymbol{\theta})^{-1} = \text{blkdiag}(\mathbf{I}_{2n}, \mathbf{D}_{\beta}(\boldsymbol{\theta})^{-1},\mathbf{D}_{\eta}(\boldsymbol{\theta})^{-1}, \frac{1}{\sigma_{\xi}}\mathbf{I}_{2n})\),  \(\sigma_{\xi} \in \boldsymbol{\theta}\), and the ${2}n$-dimensional vector function $\bm{\tau}_{\delta,y} = (\tau_{y,k,i}: \delta_{i} = 1,k = 1, 2)^{\prime}$. When $\delta_{j} = 0$ we define $\tau_{y,k,j} = 0$. This \((4n + p + r) \times 2n\) matrix \(\mathbf{Q}_{\delta}\) represents the eigenvectors of the orthogonal complement of the \((4n + p + r) \times (2n + p + r)\) matrix \(\mathbf{H}_{\delta}\) associated with non-zero eigenvalues, where $\textbf{H}_{\delta} = (\mathbf{I}_{2n} \hspace{4pt} \mathbf{X}_{\delta} \hspace{4pt} \mathbf{G}_{\delta}: \bm{0}_{p,2n} \hspace{4pt}  \mathbf{I}_p \hspace{4pt} \bm{0}_{p,r}:  \bm{0}_{r,2n+p}\hspace{4pt}\textbf{I}_{r}:\mathbf{I}_{2n} \hspace{2pt} \bm{0}_{2n,p+r} )$, the ${2}n\times p$ matrix $\textbf{X}_{\delta} = (\textbf{x}_{k,i}^{\prime}: \delta_{i} = 1, k = {1,2})^{\prime}$, and the ${2}n\times r$ matrix $\textbf{G}_{\delta} = (\textbf{g}_{k,i}^{\prime}: \delta_{i} = 1, k = {1,2})^{\prime}$. Similar to the motivation in \citet{bradley2024generating} this specification of \(\boldsymbol{\tau}_{\bm{\delta}}\) implies that $\textbf{q}$ lies in a column space orthogonal to the column space associated with \((\boldsymbol{\xi}_{\delta}^{\prime},\boldsymbol{\beta}^{\prime}, \boldsymbol{\eta}^{\prime})^{\prime}\) (see Theorem \ref{theorem2} for verification), which again avoids collinearity issues between $(\boldsymbol{\xi}_{\delta}^{\prime},\boldsymbol{\beta}^{\prime}, \boldsymbol{\eta}^{\prime})^{\prime}$ and $\bm{\tau}_{\delta}$, where the ${2}n$-dimensional vector $\boldsymbol{\xi}_{\delta}^{\prime} = (\xi_{k,i}: \delta_{i} = 1, {k =1,2})^{\prime}$.

The statement of our SM-EPR model is specific for our assumptions of logit-beta and Weibull distributed data. However, this hierarchical model can be easily modified for other data models in the exponential family. See Supplementary Appendix \ref{appen:details} for modifications of the hierarchical model when the data are distributed according to other members of the exponential family and for further details on hyperprior specifications (i.e., \(\pi(\boldsymbol{\theta})\)).

\vspace{-15pt}
\subsection{The Posterior Distribution}\label{technical.dev}
 Our derivation of the posterior distribution is similar to results in \citet{bradley2024generating} with three important key differences. First, we allow for multivariate spatial data, whereas \citet{bradley2024generating} only allowed for univariate spatial data. Second, the two different multivariate spatial data vectors can belong to two entirely different data-types (e.g., logit-beta distributed data and Weibull distributed data). Third, the subsampling indicator \(\boldsymbol{\delta}\) is incorporated leading to scalable inference for effectively any \(KN\). Multiplying the densities in (\ref{hier.mod.eq}) and marginalizing over \(\boldsymbol{\theta}\) leads to the GCM posterior distribution stated in Theorem \ref{theorem1}.

\begin{theorem}
    Assume the hierarchical model in (\ref{hier.mod.eq}). Then 
    \vspace{-15pt}
    \begin{align}
        f(\boldsymbol{\delta}\vert \mathbf{z}, n) &= f(\boldsymbol{\delta} \vert n) \notag\\
        (\boldsymbol{\xi}_{\delta}',\boldsymbol{\beta}', \boldsymbol{\eta}', \mathbf{q}')' \vert \mathbf{z}, \boldsymbol{\delta} &\sim \text{GCM}(\boldsymbol{\alpha}_M, \boldsymbol{\kappa}_M, \bm{0}_{{4}n + p + r, 1}, \mathbf{V}, \pi, \mathbf{D}_{\delta}; \boldsymbol{\psi}),
    \end{align}

    \vspace{-15pt}
    \noindent where $\bm{\xi}_{\delta} = (\xi_{k,i}: \delta_{i} = 1,{k = 1,2})^{\prime}$, \(\mathbf{V}^{-1} = (\mathbf{H}_{\delta}, \mathbf{Q}_{\delta})\), { \(\boldsymbol{\psi}(\mathbf{h})\) \(= (\psi_{1}(h_1), \dots,\psi_{1}(h_{n}),\psi_2(h_{n+1}),\dots,\) \(\psi_{2}(h_{2n}), \psi_{*}(h_1^*),\) \(\dots, \psi_{*}(h_{{2n} + p + r}^*))\)} is the \(({4n} + p + r)\)-dimensional unit-log partition function for \(\mathbf{h}\) \(= (h_1, \dots, h_{{2}n},\) \(h_1^*, \dots, h_{{2}n + p + r}^*)' \in \mathbb{R}^{{4}n + p + r}\). {Let \(\psi_{1}(\cdot) = \text{log}(1 +  \text{exp}(\cdot))\), \(\psi_{2}(\cdot) = \text{exp}(\cdot)\), and \(\psi_{*}(\cdot) = (\cdot)^2 \).} The term \(\boldsymbol{\alpha}_M = (\alpha_{k,1}, \dots \alpha_{k,{2}n}, \bm{0}_{1, {2}n + p + r})'\) and the term \(\boldsymbol{\kappa}_M = (\kappa_{k,1}, \dots, \kappa_{k,{2}n}, \frac{1}{2}\mathbf{1}_{1, {2}n + p + r})'\), {where \(\alpha_{1,i} = -\alpha_z \sigma_i^2\), \(\kappa_{1,i} = \kappa_z\), \(\alpha_{2,i} = 1\) and \(\kappa_{2,i} = Z_{2,i}^{\rho_{z}}\).}
    \label{theorem1}
\end{theorem}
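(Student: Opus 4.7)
The plan is to handle the two conclusions of Theorem \ref{theorem1} separately. The equality $f(\bm{\delta}\vert\mathbf{z},n)=f(\bm{\delta}\vert n)$ follows immediately from the assumed independence of $\bm{\delta}$ and $\mathbf{z}$ that was established in Section \ref{prelim_ss} and carried into (\ref{hier.mod.eq}). For the main posterior claim, my first reduction is to strip away whatever does not interact with the target vector $(\bm{\xi}_{\delta},\bm{\beta},\bm{\eta},\mathbf{q})$: the factor $v(\mathbf{z}_{-\delta})$ is a constant in this conditional density, while the $2N$-dimensional process model $N(\bm{\xi}\vert(\bm{\tau}_{\delta,\xi}',\bm{0}')',\sigma_{\xi}^{2}\mathbf{I}_{2N})$ has independent $N(0,\sigma_{\xi}^{2})$ holdout components that integrate out cleanly, leaving the effective Gaussian $N(\bm{\xi}_{\delta}\vert\bm{\tau}_{\delta,\xi},\sigma_{\xi}^{2}\mathbf{I}_{2n})$ on the training subset.

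Next I would rewrite every surviving conditional density as a DY kernel in a linear functional of $(\bm{\xi}_{\delta},\bm{\beta},\bm{\eta},\mathbf{q})$. The logit-beta factor for $Z_{1,i}$ is proportional to $\exp\{-\alpha_{z}\sigma_{i}^{2}\mu_{1,i}-\kappa_{z}\log(1+e^{\mu_{1,i}})\}$ with $\mu_{1,i}=\mathbf{x}_{1,i}'\bm{\beta}+\mathbf{g}_{1,i}'\bm{\eta}+\xi_{1,i}-\tau_{y,1,i}$, which is DY with $\alpha_{1,i}=-\alpha_{z}\sigma_{i}^{2}$, $\kappa_{1,i}=\kappa_{z}$, and $\psi_{1}(w)=\log(1+e^{w})$. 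Absorbing the $Z_{2,i}^{\rho_{z}-1}$ term into the $\bm{\theta}$-dependent constant places the Weibull factor in DY form with $\alpha_{2,i}=1$, $\kappa_{2,i}=Z_{2,i}^{\rho_{z}}$, and $\psi_{2}(w)=e^{w}$. The three Gaussian pieces (the priors on $\bm{\beta}$ and $\bm{\eta}$ plus the reduced process model on $\bm{\xi}_{\delta}$) rewrite as Gaussian-DY kernels with shape $0$, scale $\tfrac{1}{2}$, and $\psi_{\ast}(w)=w^{2}$ once their quadratic exponents are expanded as $\tfrac{1}{2}\|\mathbf{D}_{\beta}^{-1}\bm{\beta}-\bm{\tau}_{\beta}\|^{2}$, $\tfrac{1}{2}\|\mathbf{D}_{\eta}^{-1}\bm{\eta}-\bm{\tau}_{\eta}\|^{2}$, and $\tfrac{1}{2}\|\bm{\xi}_{\delta}/\sigma_{\xi}-\bm{\tau}_{\delta,\xi}\|^{2}$.

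The crucial algebraic step is to verify that these DY arguments coincide, block by block, with the components of $\mathbf{D}_{\delta}(\bm{\theta})^{-1}\mathbf{V}^{-1}(\bm{\xi}_{\delta}',\bm{\beta}',\bm{\eta}',\mathbf{q}')'$ with $\mathbf{V}^{-1}=(\mathbf{H}_{\delta},\mathbf{Q}_{\delta})$. Using the identity $\bm{\tau}_{\delta}=-\mathbf{D}_{\delta}(\bm{\theta})^{-1}\mathbf{Q}_{\delta}\mathbf{q}$ and the explicit four-block structure of $\mathbf{H}_{\delta}$, a direct computation yields first block $\bm{\xi}_{\delta}+\mathbf{X}_{\delta}\bm{\beta}+\mathbf{G}_{\delta}\bm{\eta}-\bm{\tau}_{\delta,y}$ (the $2n$ data natural parameters), second block $\mathbf{D}_{\beta}^{-1}\bm{\beta}-\bm{\tau}_{\beta}$, third block $\mathbf{D}_{\eta}^{-1}\bm{\eta}-\bm{\tau}_{\eta}$, and fourth block $\bm{\xi}_{\delta}/\sigma_{\xi}-\bm{\tau}_{\delta,\xi}$, which match the DY arguments above. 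Loading the individual shapes and scales into the vectors $\bm{\alpha}_{M}$ and $\bm{\kappa}_{M}$ declared in the theorem and setting $\bm{\mu}=\bm{0}$, the combined exponent collapses exactly to $\bm{\alpha}_{M}'\mathbf{D}_{\delta}^{-1}\mathbf{V}^{-1}\mathbf{h}-\bm{\kappa}_{M}'\bm{\psi}\{\mathbf{D}_{\delta}^{-1}\mathbf{V}^{-1}\mathbf{h}\}$.

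What remains is to reintroduce $\bm{\theta}$ and identify the factor $1/\mathscr{N}(\bm{\theta})$. The $\bm{\theta}$-dependent normalizing constants from the two Gaussian priors and the reduced $\bm{\xi}_{\delta}$ model contribute a factor proportional to $\det(\mathbf{D}_{\beta}(\bm{\theta}))^{-1}\det(\mathbf{D}_{\eta}(\bm{\theta}))^{-1}\sigma_{\xi}^{-2n}$, which up to a $\bm{\theta}$-free constant equals $\det(\mathbf{D}_{\delta}(\bm{\theta}))^{-1}$, matching the form of $\mathscr{N}(\bm{\theta})$ stipulated in Section \ref{prelim_gcm}. Integrating the resulting $\bm{\theta}$-dependent integrand against $\pi(\bm{\theta})$ then produces the GCM density in the theorem. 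The step I expect to be the main obstacle is precisely this last bookkeeping: tracking every $\bm{\theta}$-dependent multiplicative constant arising from the Gaussian factors, the $Z_{2,i}^{\rho_{z}-1}$ contribution, and the Jacobian of the linear change of variables, and verifying that these combine cleanly into $\mathscr{N}(\bm{\theta})^{-1}$ rather than leaving behind a residual function of $\bm{\theta}$ that would only identify the posterior as a GCM under a reweighted hyperprior $\tilde{\pi}(\bm{\theta})$.
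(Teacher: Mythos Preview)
Your proposal is correct and follows essentially the same route as the paper: reduce to the training block, cast each surviving factor (logit-beta, Weibull, and the three Gaussian pieces) as a DY kernel, and then use the identity $\bm{\tau}_{\delta}=-\mathbf{D}_{\delta}(\bm{\theta})^{-1}\mathbf{Q}_{\delta}\mathbf{q}$ together with the block structure of $\mathbf{H}_{\delta}$ to recognize the stacked exponent as $\bm{\alpha}_{M}'\mathbf{D}_{\delta}^{-1}(\mathbf{H}_{\delta},\mathbf{Q}_{\delta})\mathbf{h}-\bm{\kappa}_{M}'\bm{\psi}\{\mathbf{D}_{\delta}^{-1}(\mathbf{H}_{\delta},\mathbf{Q}_{\delta})\mathbf{h}\}$ before integrating over $\bm{\theta}$. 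Your flagged concern about the $\bm{\theta}$-dependent multiplicative constants (in particular the $\rho_{z}Z_{2,i}^{\rho_{z}-1}$ factor from the Weibull likelihood) is legitimate; the paper's proof simply collects all such terms into a leading constant $N$ and the $\det\{\mathbf{D}_{\delta}(\bm{\theta})\}^{-1}$ factor without further accounting, so any residual $\bm{\theta}$-dependence is implicitly absorbed into the hyperprior appearing in the GCM, exactly the reweighting you anticipated.
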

\begin{proof}
    See Supplementary Appendix \ref{appen:tech.dev}.
\end{proof}
\noindent  We focus on the $K = 2$ case because our motivating dataset is bivariate. However, Theorem \ref{theorem1} can easily be generalized to allow for several other classes of data models (e.g., members of the exponential family or DY distributions) with $K>2$. For more discussion on generalizing Theorem \ref{theorem1} to $K>2$ see {Supplementary Appendix \ref{appen:tech.dev}. }


\vspace{-10pt}
\subsection{Computational Considerations}\label{computation}
\vspace{-10pt}
An efficient composite sampler can be developed for the SM-EPR's posterior distribution that scales with $n$ instead of $N$. In particular, we sample from {$f(\boldsymbol{\delta}\vert n)$} (e.g., SRS) and then sample from the density $f(\boldsymbol{\beta}, \boldsymbol{\eta}, \boldsymbol{\xi}_{\delta}\vert \mathbf{z}, \boldsymbol{\delta})$ that only makes use of $n$ observations. From Theorem \ref{theorem1}, we have that posterior replicates of {\((\boldsymbol{\xi}_{\delta}^{\prime},\boldsymbol{\beta}^{\prime},\boldsymbol{\eta^{\prime}},\mathbf{q}^{\prime})^{\prime}\)} are equal in distribution to a GCM distribution. The expression for replicates from this specific GCM distribution is stated in Theorem 2.
\begin{theorem}
    Replicates of $\mathbf{q}$, \(\boldsymbol{\delta}\), \(\boldsymbol{\beta}\), \(\boldsymbol{\eta}\), and \(\boldsymbol{\xi}_{\delta} = (\xi_{k,i}: \delta_{i} = 1,k = {1,2})^{\prime}\) from \(f(\boldsymbol{\xi}_{\delta}, \boldsymbol{\beta}, \boldsymbol{\eta}, \mathbf{q}, \boldsymbol{\delta}|\mathbf{z})\) from Theorem \ref{theorem1} have the following property: $\boldsymbol{\delta}_{rep} \sim f(\boldsymbol{\delta}\vert n)$,
           \vspace{-15pt}
    \begin{equation*}
(\boldsymbol{\xi}_{\delta,rep}^{\prime}, \boldsymbol{\beta}_{rep}^{\prime}, \boldsymbol{\eta}_{rep}^{\prime} )^{\prime} = (\mathbf{H}_{\delta}'\mathbf{H}_{\delta})^{-1}\mathbf{H}_{\delta}'\mathbf{w}_{rep}    \vspace{-20pt},
    \end{equation*}
    and $\mathbf{q}_{rep} = \mathbf{Q}_{\delta}'\mathbf{w}_{rep}$, where the subscript ``rep'' represents a single replicate from the posterior distribution, \(\mathbf{w}_{rep} \equiv (\mathbf{y}_{\delta, rep}^{\prime}, \mathbf{w}_{\beta}^{\prime}, \mathbf{w}_{\eta}^{\prime}, \mathbf{w}_{\xi}^{\prime})^{\prime} \), \(\mathbf{y}_{\delta,rep}\) consists of independent DY random variables with \(i\)-th element corresponding to type \(k\). The terms \(\mathbf{w}_{\beta}\), \(\mathbf{w}_{\eta}\), and \(\mathbf{w}_{\xi}\) are obtained by first sampling $\bm{\theta}^{*}$ from its respective prior distribution and then \(\mathbf{w}_{\xi}\) is sampled from a mean zero normal distribution with covariance $\sigma_{\xi}^{2*}\textbf{I}_{{4}n}(\bm{\theta}^{*})$ for $\sigma_{\xi}^{2*}\in \bm{\theta}^{*}$, \(\mathbf{w}_{\beta}\) is sampled from a mean zero normal distribution with covariance $\textbf{D}_{\beta}(\bm{\theta}^{*})\textbf{D}_{\beta}(\bm{\theta}^{*})^{\prime}$, and \(\mathbf{w}_{\eta}\) is sampled from a mean zero normal distribution with covariance $\textbf{D}_{\eta}(\bm{\theta}^{*})\textbf{D}_{\eta}(\bm{\theta}^{*})^{\prime}$. 
\label{theorem2}
\end{theorem}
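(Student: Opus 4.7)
The plan is to leverage Theorem \ref{theorem1}, which establishes that the posterior of $(\boldsymbol{\xi}_{\delta}^{\prime},\boldsymbol{\beta}^{\prime},\boldsymbol{\eta}^{\prime},\mathbf{q}^{\prime})^{\prime}$ given $(\mathbf{z},\boldsymbol{\delta})$ is a GCM with $\boldsymbol{\mu}=\mathbf{0}_{4n+p+r,1}$, scale operator $\mathbf{D}_{\delta}$, and $\mathbf{V}^{-1}=(\mathbf{H}_{\delta},\mathbf{Q}_{\delta})$. Combining this with the defining transformation (\ref{eq.gcm.transform}) of the GCM tells us that a replicate from the posterior has the form $\mathbf{h}_{rep}=\mathbf{V}\,\mathbf{D}_{\delta}(\boldsymbol{\theta}^{*})\mathbf{w}$, where $\boldsymbol{\theta}^{*}\sim\pi(\boldsymbol{\theta})$ and the entries of $\mathbf{w}$ are independent DY variables with shape parameters $\boldsymbol{\alpha}_{M},\boldsymbol{\kappa}_{M}$ and log-partition functions $\boldsymbol{\psi}$ specified in Theorem \ref{theorem1}. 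The proof therefore reduces to (i) explicitly inverting the block row $(\mathbf{H}_{\delta},\mathbf{Q}_{\delta})$, and (ii) reading off the marginal distributions of the blocks of $\mathbf{w}_{rep}\equiv\mathbf{D}_{\delta}(\boldsymbol{\theta}^{*})\mathbf{w}$.

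For step (i), $\mathbf{H}_{\delta}$ has full column rank because its block structure contains $\mathbf{I}_{2n}$, $\mathbf{I}_{p}$, and $\mathbf{I}_{r}$, and the columns of $\mathbf{Q}_{\delta}$ form an orthonormal basis for the orthogonal complement of the column space of $\mathbf{H}_{\delta}$, so $\mathbf{Q}_{\delta}^{\prime}\mathbf{Q}_{\delta}=\mathbf{I}_{2n}$ and $\mathbf{Q}_{\delta}^{\prime}\mathbf{H}_{\delta}=\mathbf{0}$. A direct calculation then shows that
\[
\mathbf{V}=\begin{pmatrix}(\mathbf{H}_{\delta}^{\prime}\mathbf{H}_{\delta})^{-1}\mathbf{H}_{\delta}^{\prime}\\ \mathbf{Q}_{\delta}^{\prime}\end{pmatrix}
\]
satisfies $\mathbf{V}(\mathbf{H}_{\delta},\mathbf{Q}_{\delta})=\mathbf{I}_{4n+p+r}$ and is therefore the required inverse of $\mathbf{V}^{-1}$. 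Substituting into $\mathbf{h}_{rep}=\mathbf{V}\mathbf{w}_{rep}$ and partitioning the output into its first $2n+p+r$ rows and its last $2n$ rows produces $(\boldsymbol{\xi}_{\delta,rep}^{\prime},\boldsymbol{\beta}_{rep}^{\prime},\boldsymbol{\eta}_{rep}^{\prime})^{\prime}=(\mathbf{H}_{\delta}^{\prime}\mathbf{H}_{\delta})^{-1}\mathbf{H}_{\delta}^{\prime}\mathbf{w}_{rep}$ and $\mathbf{q}_{rep}=\mathbf{Q}_{\delta}^{\prime}\mathbf{w}_{rep}$, which are the two identities claimed by the theorem.

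For step (ii), $\mathbf{D}_{\delta}(\boldsymbol{\theta})=\mathrm{blkdiag}(\mathbf{I}_{2n},\mathbf{D}_{\beta}(\boldsymbol{\theta}),\mathbf{D}_{\eta}(\boldsymbol{\theta}),\sigma_{\xi}\mathbf{I}_{2n})$ acts blockwise on $\mathbf{w}$. The first $2n$ entries of $(\boldsymbol{\alpha}_{M},\boldsymbol{\kappa}_{M})$ are the data-specific $(\alpha_{k,i},\kappa_{k,i})$ paired with $\psi_{1}(w)=\log(1+e^{w})$ and $\psi_{2}(w)=e^{w}$, so the leading $\mathbf{I}_{2n}$ block leaves these DY entries untouched and they form $\mathbf{y}_{\delta,rep}$. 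The remaining $2n+p+r$ entries of $\mathbf{w}$ all have $\alpha=0$, $\kappa=1/2$, and $\psi_{*}(w)=w^{2}$, so their DY densities reduce to $\exp(-w^{2}/2)$ up to normalization, i.e., they are independent standard normals; multiplication by $\mathbf{D}_{\beta}(\boldsymbol{\theta}^{*})$, $\mathbf{D}_{\eta}(\boldsymbol{\theta}^{*})$, and $\sigma_{\xi}\mathbf{I}_{2n}$ then delivers $\mathbf{w}_{\beta}\sim N(\mathbf{0},\mathbf{D}_{\beta}\mathbf{D}_{\beta}^{\prime})$, $\mathbf{w}_{\eta}\sim N(\mathbf{0},\mathbf{D}_{\eta}\mathbf{D}_{\eta}^{\prime})$, and $\mathbf{w}_{\xi}\sim N(\mathbf{0},\sigma_{\xi}^{2*}\mathbf{I}_{2n})$ as stated. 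The marginal claim $\boldsymbol{\delta}_{rep}\sim f(\boldsymbol{\delta}\vert n)$ is immediate from the first line of Theorem \ref{theorem1}.

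The main obstacle I anticipate is the change-of-variables book-keeping that justifies sampling $\boldsymbol{\theta}^{*}$ directly from $\pi(\boldsymbol{\theta})$. Pushing the joint law $\pi\times\prod\mathrm{DY}$ of $(\boldsymbol{\theta}^{*},\mathbf{w})$ forward under $\mathbf{h}_{rep}=\mathbf{V}\mathbf{D}_{\delta}(\boldsymbol{\theta}^{*})\mathbf{w}$ introduces a Jacobian factor $\lvert\det\mathbf{D}_{\delta}(\boldsymbol{\theta}^{*})\rvert^{-1}$, which must cancel against the $\mathscr{N}(\boldsymbol{\theta}^{*})^{-1}$ sitting in the GCM integrand if the induced marginal is to agree with the density in Theorem \ref{theorem1}. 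The identity $\mathscr{N}(\boldsymbol{\theta})\propto\det\{\mathbf{D}_{\delta}(\boldsymbol{\theta})\}$ recorded in Section \ref{prelim_gcm} provides exactly this cancellation; once it is invoked, the remainder of the argument is a routine substitution.
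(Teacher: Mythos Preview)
Your proposal is correct and follows essentially the same route as the paper's proof: invoke the GCM transformation $\mathbf{h}=\mathbf{V}\mathbf{D}_{\delta}(\boldsymbol{\theta})\mathbf{w}$ from Theorem~\ref{theorem1}, compute $\mathbf{V}=(\mathbf{H}_{\delta},\mathbf{Q}_{\delta})^{-1}=\bigl((\mathbf{H}_{\delta}'\mathbf{H}_{\delta})^{-1}\mathbf{H}_{\delta}';\ \mathbf{Q}_{\delta}'\bigr)$, and read off the block structure of $\mathbf{w}_{rep}=\mathbf{D}_{\delta}(\boldsymbol{\theta}^{*})\mathbf{w}$. Your write-up is in fact more explicit than the paper's---you verify the inverse directly via $\mathbf{Q}_{\delta}'\mathbf{H}_{\delta}=\mathbf{0}$ and $\mathbf{Q}_{\delta}'\mathbf{Q}_{\delta}=\mathbf{I}$, spell out why the last $2n+p+r$ DY entries are standard normals, and note the Jacobian/$\mathscr{N}(\boldsymbol{\theta})$ cancellation---whereas the paper simply cites the GCM sampling result of \citet{bradley2024generating} and states the form of $\mathbf{V}$ without further justification.
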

\begin{proof}
    See Supplementary Appendix \ref{appen:tech.dev}.
\end{proof}

\noindent In Supplementary Appendix \ref{appen:algo}, we explicitly list all the steps required for implementation using Theorem \ref{theorem2}. {While Theorem \ref{theorem2} holds for the bivariate $K = 2$ case, it can easily be extended to $K>2$. That is, this projection representation holds for other data types and for $K > 2$. See Supplementary Appendix \ref{appen:tech.dev} for the formal statement on sampling for generic \(K\).}

EPR is a special case of Theorem \ref{theorem2} when $K = 1$ and the probability of $\delta_{i} = 1$ for all $i$. This novel combination of EPR with the data subset method is extremely exciting, as it is (to our knowledge) the first time a Bayesian spatial GLMM is capable of handling datasets of \textit{any size} without discarding data, imposing additional assumptions on the dependence structure of the data, or requiring MCMC for implementation. We say ``of any size'' since the approach scales with $Kn+p^{3} + r^{3}$ as opposed to $KN+p^{3}+r^{3}$, and \(n\) can be chosen to be arbitrarily small (see Section \ref{prelim_ss} for discussions on the trade-off on this choice). Additionally, we say our model doesn't impose additional assumptions as small \(n\) leads to higher dimensional \(\mathbf{z}_{-\delta}\), which implies fewer assumptions on the data (see Section \ref{prelim_ss} for more details). 

\subsection{Statistical Properties}\label{stat.prop}
Theorems~\ref{theorem1} and \ref{theorem2} demonstrate the computational benefits of our proposed model. Namely, we are able to sample independent posterior replicates of fixed and random effects without the use of MCMC or approximate Bayesian strategies, which is crucial for our goal of analyzing our motivating air pollution dataset. To achieve this goal, the SM-EPR represents a combination of two new strategies in the literature: a discrepancy term (i.e., $\boldsymbol{\tau}$) and the data subset approach. In addition to these computational benefits, we describe two advantageous statistical properties: the ability to model cross-signal-to-noise covariances and robustness to model misspecification.  

The discrepancy term allows us to model signal-to-noise dependence (or feedback) known to be present in air pollution \citep[e.g., see][]{liu2019two}. The posterior distribution in SM-EPR gives a particular parametric form of the posterior cross-signal-to-noise covariance matrix, which is similar to existing parametric forms found in the literature \citep{bradley2023deep}. In particular, the parametric form bears similarity to that of the cross-covariance between the ordinary least squares (OLS) estimator and the OLS residuals.
\begin{theorem}
Let $\boldsymbol{\xi}_{\delta}$, $\boldsymbol{\beta}$, $\boldsymbol{\eta}$, $\textbf{q}$, and $\boldsymbol{\delta}$ follow the GCM model stated in Theorem \ref{theorem1}. Let $\textbf{y}_{\delta} = \textbf{X}_{\delta}\bm{\beta} + \textbf{G}_{\delta}\bm{\eta}+\bm{\xi}_{\delta}$, and let $\textbf{w}_{rep}\equiv (\textbf{y}_{\delta,rep}^{\prime},\textbf{w}_{\beta}^{\prime},\textbf{w}_{\eta}^{\prime},\textbf{w}_{\xi}^{\prime})^{\prime}$ as defined in Theorem \ref{theorem2}. Then, \vspace{-20pt}
\begin{equation*}
cov(\textbf{y}_{\delta},\boldsymbol{\tau}_{y}\vert \textbf{z},\boldsymbol{\delta}) = -\textbf{J}\textbf{H}_{\delta}(\textbf{H}_{\delta}^{\prime}\textbf{H}_{\delta})^{-1}\textbf{H}_{\delta}^{\prime}\text{cov}(\textbf{w}_{rep}\vert \boldsymbol{\alpha}_M,\boldsymbol{\kappa}_M)\left\lbrace\textbf{I}_{{4}n+p+r} - \textbf{H}_{\delta}(\textbf{H}_{\delta}^{\prime}\textbf{H}_{\delta})^{-1}\textbf{H}_{\delta}^{\prime}\right\rbrace \textbf{J}^{\prime},
\end{equation*}

\vspace{-15pt}
\noindent
where $\textbf{J} = (\textbf{I}_{{2}n},\bm{0}_{{2}n,p},\bm{0}_{{2}n,r},\bm{0}_{{2}n,{2}n})$ and $\boldsymbol{\alpha}_M$ and $\boldsymbol{\kappa}_M$ are the same as those defined in Theorem \ref{theorem1}.
\label{theorem3}
\end{theorem}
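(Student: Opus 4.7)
The plan is to write both $\textbf{y}_{\delta}$ and $\boldsymbol{\tau}_{y}$ as linear functions of the underlying augmented random vector $\textbf{w}_{rep}$ that drives the posterior replicates in Theorem~\ref{theorem2}, and then read off the cross-covariance using the fact that $\textbf{H}_{\delta}$ and $\textbf{Q}_{\delta}$ span orthogonal complementary subspaces.

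First I would rewrite the two pieces as linear maps. Because the top $2n$ rows of $\textbf{H}_{\delta}$ are $(\textbf{I}_{2n}\;\textbf{X}_{\delta}\;\textbf{G}_{\delta})$, we have $\textbf{y}_{\delta} = \boldsymbol{\xi}_{\delta} + \textbf{X}_{\delta}\boldsymbol{\beta} + \textbf{G}_{\delta}\boldsymbol{\eta} = \textbf{J}\,\textbf{H}_{\delta}(\boldsymbol{\xi}_{\delta}^{\prime},\boldsymbol{\beta}^{\prime},\boldsymbol{\eta}^{\prime})^{\prime}$. Plugging in the replicate representation from Theorem~\ref{theorem2} then gives $\textbf{y}_{\delta,rep} = \textbf{J}\,\textbf{H}_{\delta}(\textbf{H}_{\delta}^{\prime}\textbf{H}_{\delta})^{-1}\textbf{H}_{\delta}^{\prime}\textbf{w}_{rep}$. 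For the discrepancy term, recall $\boldsymbol{\tau}_{\delta} = -\textbf{D}_{\delta}(\boldsymbol{\theta})^{-1}\textbf{Q}_{\delta}\textbf{q}$, and since $\textbf{D}_{\delta}(\boldsymbol{\theta})^{-1}$ is block diagonal with the leading $2n\times 2n$ block equal to $\textbf{I}_{2n}$, the first $2n$ rows give $\boldsymbol{\tau}_{y} = -\textbf{J}\,\textbf{Q}_{\delta}\textbf{q}$. Using $\textbf{q}_{rep}=\textbf{Q}_{\delta}^{\prime}\textbf{w}_{rep}$ from Theorem~\ref{theorem2}, this becomes $\boldsymbol{\tau}_{y,rep} = -\textbf{J}\,\textbf{Q}_{\delta}\textbf{Q}_{\delta}^{\prime}\textbf{w}_{rep}$.

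Next I would form the cross-covariance. Since both quantities are affine functions of the single random vector $\textbf{w}_{rep}$,
\begin{equation*}
\mathrm{cov}(\textbf{y}_{\delta},\boldsymbol{\tau}_{y}\mid \textbf{z},\boldsymbol{\delta}) = -\textbf{J}\,\textbf{H}_{\delta}(\textbf{H}_{\delta}^{\prime}\textbf{H}_{\delta})^{-1}\textbf{H}_{\delta}^{\prime}\,\mathrm{cov}(\textbf{w}_{rep})\,\textbf{Q}_{\delta}\textbf{Q}_{\delta}^{\prime}\textbf{J}^{\prime}.
\end{equation*}
The final step is to replace $\textbf{Q}_{\delta}\textbf{Q}_{\delta}^{\prime}$ by $\textbf{I}_{4n+p+r} - \textbf{H}_{\delta}(\textbf{H}_{\delta}^{\prime}\textbf{H}_{\delta})^{-1}\textbf{H}_{\delta}^{\prime}$, which is immediate from the definition of $\textbf{Q}_{\delta}$ as an orthonormal basis for the orthogonal complement of the column space of $\textbf{H}_{\delta}$ (so $\textbf{H}_{\delta}^{\prime}\textbf{Q}_{\delta}=\bm{0}$ and $\textbf{H}_{\delta}(\textbf{H}_{\delta}^{\prime}\textbf{H}_{\delta})^{-1}\textbf{H}_{\delta}^{\prime}+\textbf{Q}_{\delta}\textbf{Q}_{\delta}^{\prime}=\textbf{I}$). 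This yields the stated expression.

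The main obstacle is bookkeeping rather than mathematical depth: one must keep track of the $4n+p+r$ versus $2n+p+r$ ambient dimensions, carefully verify that the top block of $\textbf{D}_{\delta}(\boldsymbol{\theta})^{-1}$ is indeed $\textbf{I}_{2n}$ so that no parameter-dependent factors leak into $\textbf{J}\,\textbf{D}_{\delta}^{-1}$, and justify that $\mathrm{cov}(\textbf{w}_{rep}\mid \boldsymbol{\alpha}_{M},\boldsymbol{\kappa}_{M})$ is the appropriate (conditional) covariance to use here, given that $\textbf{w}_{rep}$ mixes DY components in its first $2n$ entries with Gaussian components driven by $\boldsymbol{\theta}^{*}$. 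Everything else reduces to matrix algebra and the orthogonal projection identity.
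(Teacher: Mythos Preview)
Your proposal is correct and follows essentially the same approach as the paper: express $\textbf{y}_{\delta}=\textbf{J}\textbf{H}_{\delta}(\boldsymbol{\xi}_{\delta}^{\prime},\boldsymbol{\beta}^{\prime},\boldsymbol{\eta}^{\prime})^{\prime}$ and $\boldsymbol{\tau}_{y}=-\textbf{J}\textbf{Q}_{\delta}\textbf{q}$, invoke the replicate representations from Theorem~\ref{theorem2}, and then apply the projection identity $\textbf{Q}_{\delta}\textbf{Q}_{\delta}^{\prime}=\textbf{I}_{4n+p+r}-\textbf{H}_{\delta}(\textbf{H}_{\delta}^{\prime}\textbf{H}_{\delta})^{-1}\textbf{H}_{\delta}^{\prime}$. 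Your write-up is in fact more explicit than the paper's own proof, which simply records the two linear representations and the projection identity without the intermediate justification you supply.
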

\begin{proof}
    See Supplementary Appendix \ref{appen:tech.dev}.
\end{proof}
\noindent
{Similar to Theorem \ref{theorem1} and Theorem \ref{theorem2}, Theorem \ref{theorem3} can be generalized to other data types and we provide the details for generic \(K\) in Supplementary Appendix \ref{appen:tech.dev}.}

In general, the parametric form in Theorem~\ref{theorem3} can be difficult to validate in practice. In our simulation studies, we compare to models that do not include a discrepancy term when the simulated data is generated without such dependence. We find little difference between prediction and regression estimation between these models suggesting that the results are robust to misspecifying that a cross-signal-to-noise-covariance matrix is present when it is not present.
 
 The data subset approach is particularly useful when the model is misspecified. Recall the SM-EPR is assumed to be misspecified, where the true data generating mechanism is actually $v(\textbf{z})$, and for a given $\bm{\delta}$ the SM-EPR correctly assumes that $\textbf{z}_{-\delta}$ is generated from the density $v(\textbf{z}_{-\delta})=\int v(\textbf{z})d\textbf{z}_{\delta}$. As such, intuition would suggest that SM-EPR's model for $(\textbf{z},\bm{\delta})$ is closer to the correct specification $v(\textbf{z})f(\bm{\delta}\vert n)$ than a parametric model assumed for the entire dataset. 
 
We formally explore this intuition through the KL divergence. In particular, consider a generic proper Full Data Model, $\prod_{k = 1}^{K}\prod_{i = 1}^{N}f(Z_{k,i}\vert \bm{\gamma})$ and a distribution $f(\bm{\gamma}\vert \bm{\delta})$, where $\bm{\gamma}\in \mathbb{R}^{\ell}$ is a generic $\ell$-dimensional real-valued vector. Under this setup, the marginal distribution of $(\textbf{z},\bm{\delta})$ from this ``Full Model'' is given by $m_{FULL}(\textbf{z},\bm{\delta}) =  \int \prod_{k = 1}^{K}\prod_{i = 1}^{N}f(Z_{k,i}\vert \bm{\gamma})f(\bm{\gamma}\vert \bm{\delta})d \bm{\gamma} f(\bm{\delta}\vert n)$, and the corresponding Data Subset Model's marginal distribution of $(\textbf{z},\bm{\delta})$ is given by, $m_{SUB}(\textbf{z},\bm{\delta}) = v(\textbf{z}_{-\delta})\int \prod_{k = 1}^{K}\prod_{\{i:\delta_{i} = 1\}}f(Z_{k,i}\vert \bm{\gamma})f(\bm{\gamma}\vert \bm{\delta})d \bm{\gamma}f(\bm{\delta}\vert n)$.
In Theorem 4, we investigate the KL divergence between $m_{FULL}$ and the true marginal distribution $m_{TRUE}(\textbf{z},\bm{\delta}) = v(\textbf{z})f(\bm{\delta}\vert n)$ and the KL divergence between $m_{TRUE}$ and $m_{SUB}$.
 \begin{theorem}
     Suppose $\textbf{z}$ is a sample from the density $v(\textbf{z})$, and let $\bm{\delta}$ be drawn from the Subset Model $f(\bm{\delta}\vert n)$ independently of $\textbf{z}$. Denote the KL divergence between generic models $f$ and $g$ with $KL\{f||g\} \equiv \sum_{\delta}\int f(\textbf{z},\bm{\delta}) log\left(\frac{f(\textbf{z},\bm{\delta})}{g(\textbf{z},\bm{\delta})}\right)d\textbf{z}$. Then we have the following properties: (a) For every fixed $n$ and $N$ with $n<N$ it follows that $KL\{m_{TRUE}(\textbf{z},\bm{\delta}) \hspace{2pt}|| \hspace{2pt}m_{FULL}(\textbf{z},\bm{\delta})\}\ge KL\{m_{TRUE}(\textbf{z},\bm{\delta})\hspace{2pt}||\hspace{2pt} m_{SUB}(\textbf{z},\bm{\delta})\}$; and (b) Suppose that for a fixed $n$ that \\$\underset{N\rightarrow \infty}{lim}KL\{v(\textbf{z})f(\bm{\delta}\vert n) || m_{FULL}(\textbf{z},\bm{\delta}\} = {0}$. Then $\underset{N\rightarrow \infty}{lim}KL\{m_{TRUE}(\textbf{z},\bm{\delta})\hspace{2pt} ||\hspace{2pt} m_{SUB}(\textbf{z},\bm{\delta})\} = {0}$.
     \label{theorem4}
 \end{theorem}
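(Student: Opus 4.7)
The plan is to apply the chain rule for KL divergence on the joint $(\textbf{z},\bm{\delta})$, taking advantage of the fact that all three distributions $m_{TRUE}$, $m_{FULL}$, and $m_{SUB}$ share the same marginal $f(\bm{\delta}\vert n)$ on $\bm{\delta}$ (by the assumed independence of $\textbf{z}$ and $\bm{\delta}$). This immediately reduces each KL to an expectation over $\bm{\delta}$ of the conditional KL divergences of $v(\textbf{z})$ against $m_{full}(\textbf{z}\vert \bm{\delta})$ and against $v(\textbf{z}_{-\delta}) m_\delta(\textbf{z}_\delta\vert \bm{\delta})$, respectively, where $m_\delta$ denotes the inner integral defining the training-data piece of $m_{SUB}$.

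Next I would use the factorization $m_{full}(\textbf{z}\vert \bm{\delta}) = m_\delta(\textbf{z}_\delta\vert \bm{\delta})\, m_{full,-\delta\vert \delta}(\textbf{z}_{-\delta}\vert \textbf{z}_\delta,\bm{\delta})$, which follows because marginalizing $m_{full}(\textbf{z}\vert \bm{\delta})$ over $\textbf{z}_{-\delta}$ returns $m_\delta(\textbf{z}_\delta\vert \bm{\delta})$. The pointwise log-ratio telescopes to
\[
\log \frac{m_{SUB}(\textbf{z},\bm{\delta})}{m_{FULL}(\textbf{z},\bm{\delta})} \;=\; \log \frac{v(\textbf{z}_{-\delta})}{m_{full,-\delta\vert \delta}(\textbf{z}_{-\delta}\vert \textbf{z}_\delta,\bm{\delta})},
\]
so that proving part (a) reduces to showing that the $m_{TRUE}$-expectation of the right-hand side is non-negative, i.e.\ that $\tilde m_{SUB}(\textbf{z}\vert\bm{\delta}) := v(\textbf{z}_{-\delta})m_\delta(\textbf{z}_\delta\vert\bm{\delta})$ is at least as close to $v$ in KL (averaged over $\bm{\delta}$) as is $m_{full}(\textbf{z}\vert\bm{\delta})$.

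The hard part is verifying this non-negativity, which I would attack via a second application of the chain rule. Decomposing $KL\{v(\textbf{z})\Vert m_{full}(\textbf{z}\vert \bm{\delta})\}$ with the split $(\textbf{z}_\delta,\textbf{z}_{-\delta})$ gives $KL\{v(\textbf{z}_\delta)\Vert m_\delta(\textbf{z}_\delta\vert \bm{\delta})\} + E_{v(\textbf{z}_\delta)}[KL\{v(\textbf{z}_{-\delta}\vert \textbf{z}_\delta)\Vert m_{full,-\delta\vert \delta}\}]$, while the analogous decomposition of $KL\{v(\textbf{z})\Vert \tilde m_{SUB}(\textbf{z}\vert\bm{\delta})\}$ collapses to $KL\{v(\textbf{z}_\delta)\Vert m_\delta\} + I_v(\textbf{z}_\delta;\textbf{z}_{-\delta})$. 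The common $KL\{v(\textbf{z}_\delta)\Vert m_\delta\}$ terms cancel, leaving a comparison between an expected conditional KL to $m_{full,-\delta\vert \delta}$ and the mutual information $I_v(\textbf{z}_\delta;\textbf{z}_{-\delta})$ under $v$. This is where the main obstacle lies: closing the inequality requires combining Gibbs' inequality with convexity of KL in its second argument, the latter exploiting the representation $m_\delta(\textbf{z}_\delta\vert \bm{\delta}) = \int m_{full,\delta\vert -\delta}(\textbf{z}_\delta\vert \textbf{z}'_{-\delta},\bm{\delta})\, m_{full,-\delta}(\textbf{z}'_{-\delta}\vert \bm{\delta})\, d\textbf{z}'_{-\delta}$ to absorb the mutual-information penalty into the marginal KL of $v(\textbf{z}_{-\delta})$ against $m_{full,-\delta}$.

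Part (b) then follows immediately from part (a) by a squeeze argument: non-negativity of KL combined with (a) gives $0 \le KL\{m_{TRUE}\Vert m_{SUB}\} \le KL\{m_{TRUE}\Vert m_{FULL}\}$ for every $N$, and the hypothesis $\lim_{N\to\infty}KL\{v(\textbf{z})f(\bm{\delta}\vert n)\Vert m_{FULL}\}=0$ sends the upper bound (and hence the middle) to zero, yielding the stated limit with no additional technical machinery.
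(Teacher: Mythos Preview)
Your first two paragraphs and the treatment of part~(b) match the paper exactly: cancel the common $f(\bm{\delta}\mid n)$, factor $m(\textbf{z}\mid\bm{\delta})=m(\textbf{z}_\delta\mid\bm{\delta})\,m(\textbf{z}_{-\delta}\mid\textbf{z}_\delta,\bm{\delta})$, and reduce~(a) to showing
\[
\sum_{\bm{\delta}}f(\bm{\delta}\mid n)\int v(\textbf{z})\,\log\frac{v(\textbf{z}_{-\delta})\,m(\textbf{z}_\delta\mid\bm{\delta})}{m(\textbf{z}\mid\bm{\delta})}\,d\textbf{z}\ \ge\ 0,
\]
after which~(b) follows by squeezing.

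The gap is in your third paragraph. Your second chain-rule split correctly rewrites the integrand (for fixed $\bm{\delta}$) as
\[
E_{v(\textbf{z}_\delta)}\!\big[KL\{v(\textbf{z}_{-\delta}\mid\textbf{z}_\delta)\,\Vert\,m(\textbf{z}_{-\delta}\mid\textbf{z}_\delta,\bm{\delta})\}\big]\;-\;I_v(\textbf{z}_\delta;\textbf{z}_{-\delta}),
\]
but the tools you invoke do not control this difference. Convexity of $KL$ in its second argument applied to the mixture $m(\textbf{z}_\delta\mid\bm{\delta})=\int m(\textbf{z}_\delta\mid\textbf{z}'_{-\delta},\bm{\delta})\,m(\textbf{z}'_{-\delta}\mid\bm{\delta})\,d\textbf{z}'_{-\delta}$ yields an \emph{upper} bound on a $KL$ involving $m(\textbf{z}_\delta\mid\bm{\delta})$, which runs the wrong direction, and in any case the mixing measure is $m(\cdot\mid\bm{\delta})$ rather than $v$, so it does not link back to the $v$-expectations you need. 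More concretely, there is no generic reason for the expected conditional KL to dominate the mutual information: if one takes $v$ with $v(\textbf{z}_{-\delta}\mid\textbf{z}_\delta)=m(\textbf{z}_{-\delta}\mid\textbf{z}_\delta,\bm{\delta})$ but $I_v>0$, the first term vanishes and your displayed difference is strictly negative for that $\bm{\delta}$. So the convexity/mixture sketch, as stated, cannot be completed into a proof.

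The paper abandons the chain-rule/mutual-information route at this point and instead attacks the displayed integral directly by partitioning the domain into $A_\delta=\{\textbf{z}:v(\textbf{z})\ge v(\textbf{z}_{-\delta})m(\textbf{z}_\delta\mid\bm{\delta})\}$ and $A_\delta^{c}$. On $A_\delta$ it replaces the weight $v(\textbf{z})$ by the smaller $v(\textbf{z}_{-\delta})m(\textbf{z}_\delta\mid\bm{\delta})$ and recognises the result as a (restricted) KL of $v(\textbf{z}_{-\delta})m(\textbf{z}_\delta\mid\bm{\delta})$ against $m(\textbf{z}\mid\bm{\delta})$; on $A_\delta^{c}$ it inserts $\pm\log v(\textbf{z})$ to split off a strictly positive piece (since $v(\textbf{z}_{-\delta})m(\textbf{z}_\delta\mid\bm{\delta})>v(\textbf{z})$ there) plus a second restricted KL, and handles the boundary cases $P(A_\delta)\in\{0,1\}$ separately. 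If you want to align with the paper, drop the mutual-information decomposition and carry out this set-partition argument instead.
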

\begin{proof}
    Statement (b) follows immediately from Statement (a). See Supplementary Appendix \ref{appen:tech.dev} for the proof of Statement (a).
\end{proof}
Theorem \ref{theorem4} immediately applies to SM-EPR in (\ref{hier.mod.eq}) if we set $K = 2$, $\bm{\gamma} = \{\bm{\xi},\bm{\beta},\bm{\eta},\textbf{q},\bm{\theta}\}$, and define the Full Data Model and $f(\bm{\gamma}\vert \bm{\delta})$ according to (\ref{hier.mod.eq}).
Similarly, Theorem \ref{theorem4} immediately applies to the DSM in (\ref{equation3}) if we set $K = 1$, $\bm{\gamma} = \{\bm{\xi},\bm{\beta},\bm{\eta},\bm{\theta}\}$, and define the Full Data Model and $f(\bm{\gamma}\vert \bm{\delta})$ according to (\ref{equation3}). For any $n<N$, Theorem \ref{theorem4} suggests that it is more reasonable to assume $m_{TRUE}(\textbf{z},\bm{\delta})\approx m_{SUB}(\textbf{z},\bm{\delta})$ than $m_{TRUE}(\textbf{z},\bm{\delta})\approx m_{FULL}(\textbf{z},\bm{\delta})$. Thus, the data subset approach is more robust to departures from $m_{TRUE}(\textbf{z},\bm{\delta})$ relative to the model that does not subsample. As such, Theorem \ref{theorem4} provides support for augmenting $m_{SUB}$ with $\bm{\beta},\bm{\eta},\bm{\xi}, \textbf{q}$ and $\bm{\theta}$ leading to SM-EPR instead of augmenting $m_{FULL}$ with $\bm{\beta},\bm{\eta},\bm{\xi}, \textbf{q}$ and $\bm{\theta}$ for Bayesian inference. However, as discussed in Section~\ref{prelim_ss}, there is a trade-off between choosing \(n\) for model robustness and statistical precision. In practice, we suggest the use of an elbow plot based on a predictive metric to choose the value of \(n\).

\vspace{-20pt}
\section{Simulations}\label{sim_study}
\vspace{-10pt}
In this section, we illustrate the inferential/computational advantages of using SM-EPR, and compare to standard approaches. Additionally, we compare our method to SM-EPR without subsampling (i.e., $\delta_{i}\equiv 1$ with probability one), which will be referred to as Multivariate Exact Posterior Regression (M-EPR), to highlight the computational benefits of the data subset approach. M-EPR and SM-EPR are ``discrepancy models'' since they include $\textbf{q}$. We will compare to a spatial GLMM that sets $\textbf{q} = \bm{0}_{2N,1}$ that is fitted using INLA. 
\begin{figure}[t!]
\centering
  \centering
  \includegraphics[width=0.75\linewidth]{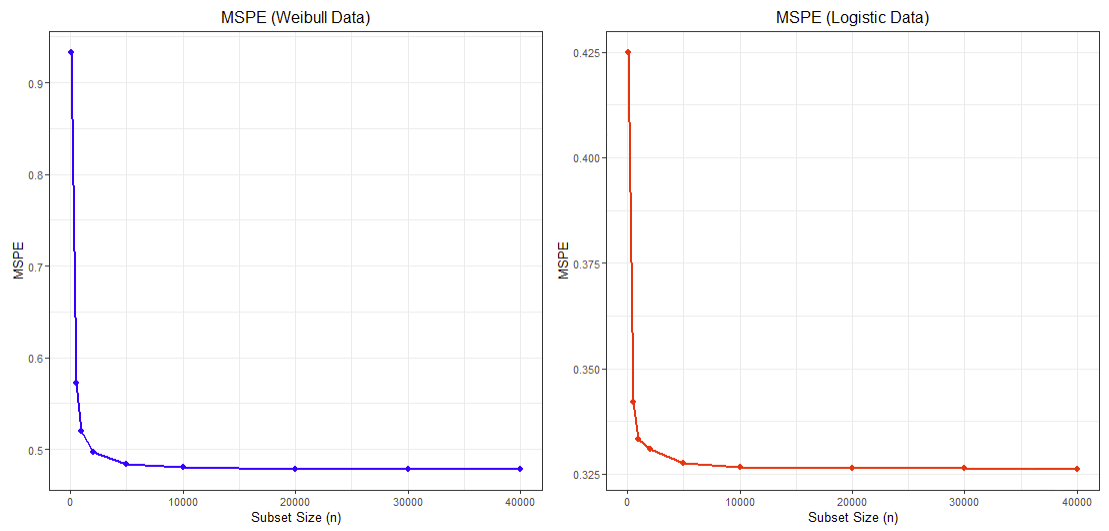}  
\caption{MSPE over various subset sizes (n) for a single simulated dataset.}
\label{large_elbow}
\end{figure}

Define the locations \(\mathbf{s} \in \{1, 2, \dots, {M}\}\) in a one-dimensional spatial domain. We observe \(N=0.8\times {M}\) randomly selected locations out of \({M}\). Both data types will be observed at the same randomly selected locations resulting in a total of \(0.8\times {2M}\) total observations and  \(0.2\times {2M}\) observations held out for testing prediction. We set $M = 60,000$ to be fairly large to assess the computational performance of SM-EPR. In Supplementary Appendix \ref{appen:sim_study_mcmc}, we compare SM-EPR to a spatial GLMM with $\textbf{q} = \bm{0}_{2N,1}$ implemented via Stan; however, this study required $N$ to be small as MCMC was too computationally demanding otherwise. We simulate data with two response variables distributed according to the logistic (a special case of the logit-beta distribution when $\alpha = \kappa/2 = 1$) and Weibull distributions to match the assumptions in Section \ref{joint_anal}. Additional univariate simulations of Gaussian, Poisson, and Bernoulli univariate spatial data are illustrated in Supplementary Appendix \ref{appen:addsims}. 

We consider two Bernoulli distributed covariates \(x_1^{{*}}(s_i)\) observed with probability \(\text{expit}(\frac{1}{{M}} s_i)\) and \(x_2^{{*}}(s_i)\) observed with probability \(\text{expit}(\frac{-0.01}{{M}} s_i)\), where $\text{expit}(x) = \text{exp}(x)/(1+\text{exp}(x))$. Let $Z_{1,i}$ be generated from the logistic distribution with mean 
\vspace{-20pt}
\begin{equation*}
1 -2x_1^{{*}}(s_i) - 2x_2^{{*}}(s_i) + \sum_{j = 1}^{15} \left[ g_{1,j}^{{*}}(s_i)\eta_{1,j}^{*} + g_{M, j}^{{*}}(s_i)\eta_{M, j}^{*} \right] + \xi_{{1}}(s_i)\vspace{-20pt},
\end{equation*}
\noindent and scale 0.6. Let $Z_{2,i}$ be distributed Weibull with scale parameter one and minus log conditional mean, $- 0.7 - 1.5x_1^{{*}}(s_i) - x_2^{{*}}(s_i)  - 0.25Z_1(s_i)+ \sum_{j = 1}^{15} \left[ g_{2,j}^{{*}}(s_i)\eta_{2, j}^{*} + g_{M, j}^{{*}}(s_i)\eta_{M, j}^{*}\right] + \xi_{{2}}(s_i)$, where $\eta_{1,j}^{{*}}\sim N(\eta_{1,j}^{{*}}\vert 0,0.81)$, $\eta_{2,j}^{{*}}\sim N(\eta_{2,j}^{{*}}\vert 0,0.04)$, $\eta_{M,j}^{{*}}\sim N(\eta_{M,j}^{{*}}\vert 0,0.81)$, $\xi_{1}(s_{i})\sim N(\xi_{1}(s_{i})\vert 0,0.15)$, and $\xi_{2}(s_{i})\sim N(\xi_{2}(s_{i})\vert 0,0.08)$. The terms \(\{g_{1,j}^{{*}}\}\),\(\{g_{2,j}^{{*}}\}\), and \(\{g_{M,j}^{{*}}\}\) are Gaussian radial basis functions with equally spaced knot points. For an example realization see Supplementary Appendix \ref{sim_fig}.

{We implement each model assuming \(\mathbf{x}_{1,i} = (1, x_{1}^{*}(s_i), x_2^*(s_i), \mathbf{0}_{1,4})^{\prime}\), \(\mathbf{x}_{2,i} = (\mathbf{0}_{1,3},1, x_1^*(s_i),\)\\\(x_2^*(s_i), Z_1(s_i))^{\prime}\),\(\mathbf{g}_{1,i} = (g_{1,1}^*(s_i), \dots g_{1,15}^*(s_i),\mathbf{0}_{1,15},g_{M,1}^*(s_i),\dots,g_{M,15}^*(s_i))^{\prime}\), and \(\mathbf{g}_{2,i} = (\mathbf{0}_{1,15},\)\\\(g_{2,1}^*(s_i), \dots g_{2,15}^*(s_i),g_{M,1}^*(s_i),\dots,g_{M,15}^*(s_i))^{\prime}\).} {In this section, the basis functions are correctly specified, although SM-EPR and EPR both incorrectly include a discrepancy term. We consider the case of misspecified basis functions in Supplementary Appendix \ref{appen:miss}. We adopt the hyperpriors stated in Appendix \ref{appen:details}. The subset size $n = 5,000$ for SM-EPR was selected using the elbow plots in Figure \ref{large_elbow} of the mean squared prediction error for the two response variables from one example simulation.} 




\begin{table}[ht!]
\caption{Evaluation metrics for models fit using SM-EPR with subset size \(n = 5,000\), M-EPR, ILNA, and Stan for each type of data for \({2}M = 120,000\).}
\label{sim_n_60000_biv}
\begin{center}
\begin{tabular}{ccccc}
\toprule
Approach & CPU & MSPE & MSE & CRPS \\ \midrule
\multicolumn{5}{c}{Weibull Response Data} \\ \midrule
SM-EPR & 12.1375 & 0.4838 &  0.2055 & 0.1504 \\ 
 & ( 11.3122, 12.9627) & (0.4719, 0.4957) & (0.1766, 0.2343) & (0.1489, 0.1519) \\ \midrule
M-EPR  & 72.1699 & 0.4777 &  0.1899 & 0.1765 \\ 
 & (70.4229 73.9169) & (0.4688, 0.4866) & (0.1473, 0.2325) & (0.1754, 0.1775) \\ \midrule
INLA & 137.2539 &  0.4958 &  18.0356 & 0.1787 \\ 
 & (125.4787, 149.0291) & (0.4814, 0.5102) & (0.0000, 37.5654) & (0.1775, 0.1799) \\ \midrule
 \multicolumn{5}{c}{Logistic Response Data} \\  \midrule
SM-EPR & 12.1375 & 0.3275 &  0.2055 & 0.3359 \\ 
 & ( 11.3122, 12.9627) & (0.3228, 0.3321) & (0.1766, 0.2343) & (0.3208, 0.3510) \\ \midrule
M-EPR  & 72.1699 &  0.3260 &  0.1899 & 0.3843 \\ 
 & (70.4229, 73.9169) & (0.3252, 0.3268) & (0.1473, 0.2325) & (0.3757, 0.3930) \\ \midrule
INLA & 137.2539 & 0.3257 &  18.0356 & 0.4280 \\ 
 & (125.4787, 149.0291) & (0.3256, 0.3258) & (0.0000, 37.5654) & (0.4277, 0.4282) \\ \bottomrule
\end{tabular}
\end{center}
\begin{flushleft}
\textit{Note}: The computational approach, the average CPU time in seconds, the mean square prediction error \(\text{MSPE} = \frac{1}{N} \sum_{i = 1}^N (Y_i - \text{E}[{Y}_i \vert \textbf{z}])^2 \), the average MSE between true and estimated coefficients, and the average CRPS. All averages are taken over 50 replicates along with plus or minus two standard deviations. The MSPE was calculated on the log-scale for the Weibull setting.
\vspace{-20pt}
\end{flushleft}
\end{table}
The central processing unit (CPU) time measured in seconds, the mean squared prediction error (MSPE) between the true latent process and the predicted latent process, the mean squared estimation error (MSE) between the true coefficients of the fixed and random effects and the estimated \(\boldsymbol{\beta}\) and \(\boldsymbol{\eta}\), and the continuous rank probability score (CRPS) were used to evaluate the different approaches. Table \ref{sim_n_60000_biv} presents the average CPU time, MSPE, MSE, and CRPS over 50 replicates with confidence intervals constructed using plus or minus two standard deviations. { SM-EPR with subset size of \(n = 5,000\) produces similar inference to M-EPR and INLA illustrated by the overlapping confidence intervals for MSPE and MSE while resulting in a lower CRPS and a significant computational advantage. Thus, these results illustrate that it is possible to use SM-EPR to obtain both faster implementation (i.e., CPU time) with similar to better inferential performance (as measured by MSPE, MSE, and CRPS).}



\vspace{-25pt}
\section[Joint Analysis of AOT and PM 2.5 ]{Joint Analysis of AOT and \(\text{PM}_{2.5}\) Data}\label{joint_anal}
\vspace{-10pt}
For our application we build on the motivating univariate analysis in Section 2 followed by our multivariate analysis of primary interest.
Covariates include average land and sea surface temperature, the Normalized Difference Vegetation Index (NDVI), cloud fraction, and logit-transformed AOT, all  downloaded from \url{https://neo.gsfc.nasa.gov/}. We use a bisquare basis function expansion \citep{cressie2008fixed}. For univariate analysis of \(\text{PM}_{2.5}\), 198 basis functions were chosen based on a sensitivity analysis (see Supplementary Appendix \ref{basis_sens}). For the bivariate analysis of \(\text{PM}_{2.5}\) and AOT, \(594\) basis functions were used with \(198\) unique to each response and \(198\) common across both responses.

To demonstrate the computational benefits of the data subset approach, we compare EPR to the univariate version of SM-EPR, referred to as Scalable Exact Posterior Regression (S-EPR), where AOT is treated as a linear covariate for \(\text{PM}_{2.5}\). Recall from Section \ref{motivate}, the bivariate analysis using M-EPR failed on the author’s laptop. {Table \ref{mot_dat_epr} presents a sensitivity analysis comparing various subset sizes for S-EPR. Notably, a subsample size of \(300,000\) yields results comparable to EPR while reducing the computation time from \(18.63\) to \(1.77\) hours.}  Table \ref{mot_dat_epr} shows the benefit of using a data subset approach over a model that does not subsample in the univariate context, and indicates that $n=300,000$ is reasonable. 

Another benefit of the data subset approach is that it gives us the capability to conduct a bivariate spatial analysis. In particular, SM-EPR using a subset size of \(n=300,000\) did not encounter memory error and had a CPU time of 18.82 hours. This CPU time is noteworthy, as it is similar to the 18.63 hours CPU time for the univariate analysis of \(\text{PM}_{2.5}\) with EPR. That is, SM-EPR can model two large spatial response variables in approximately the time it takes to model one response variable with EPR. The evaluation metrics for SM-EPR are presented in Table \ref{biv_anal_tab}, alongside those from the univariate model fitted with EPR revealing a significantly smaller prediction error on hold out data as well as smaller PMCC, CRPS, and WAIC for SM-EPR compared to EPR. This indicates that we are able to successfully leverage multivariate dependence to improve prediction.

\begin{table}[t!]
\caption{Evaluation metrics for models fit with EPR and S-EPR in univariate analysis of \(\text{PM}_{2.5}\).}
\label{mot_dat_epr}
\begin{center}
\begin{tabular}{cccccc}
\toprule
Subset Size & HOVE & PMCC & CRPS & WAIC & CPU Time \\ \midrule
5,000 & 72.8999 & 91.8573 & 2.7335  & 8.4133 & 1.51 mins \\
10,000 & 72.7569 & 79.1911 & 2.8239  & 6.7660 & 3.48 mins \\ 
50,000 & 70.8170 & 75.3292 & 2.8959  & 6.7445 & 17.86 mins  \\ 
100,000 & 70.3535 & 75.2565 & 2.8867  & 6.7432 & 35.22 mins  \\ 
200,000 & 69.9981 & 74.4716 & 2.9002 & 6.7426 & 1.21 hrs \\
300,000 & 69.8660 & 74.6802 & 2.8829  & 6.7422 & 1.77 hrs \\
\midrule
EPR & 69.5822 & 74.5495 & 2.8800 & 6.7416 & 18.63 hrs \\ \bottomrule
\end{tabular}
\end{center}
\begin{flushleft}
\textit{Note}: The first 6 rows present S-EPR results with various subset sizes and the last row of the table presents EPR results.\vspace{-20pt}
\end{flushleft}
\end{table}
Additionally, modeling the shared random effects between \(\text{PM}_{2.5}\) and AOT not only improves \(\text{PM}_{2.5}\) predictions but also provides insight into the complex relationship between the two variables in different global regions. Figure \ref{biv_anal_plot} displays predictions for both variables compared to the observed data along with the posterior mean of the shared basis functions illustrating the influence of the shared random effects on both variables in certain regions of the world. The positive effects observed in the southern region of Russia indicate that the spatial variability captured by the basis functions positively impacts both variables. These findings are valuable for scientists and policy makers because identifying and targeting areas with positive shared random effects could lead to interventions that simultaneously improve \(\text{PM}_{2.5}\) and AOT levels by leveraging their underlying spatial relationship. In the western region off the coast of Europe, the negative shared random effects suggest the spatial variability contribute to the lower AOT and \(\text{PM}_{2.5}\) measurements. Typically, high values of AOT and \(\text{PM}_{2.5}\) are associated with hazy conditions and poor air quality, while lower values indicate clear skies and healthier air conditions. 

\begin{table}[t!]
\caption{Evaluation metrics for the bivariate model for \(\text{PM}_{2.5}\) and logit(AOT) implemented with SM-EPR with subset size \(n = 300,000\) and for the univariate model for \(\text{PM}_{2.5}\) implemented with EPR.}
\label{biv_anal_tab}
\begin{center}
\begin{tabular}{cccccc}
\toprule
Variable & HOVE & PMCC & CRPS & WAIC & CPU \\  \midrule
\multicolumn{6}{c}{EPR Univariate Model}\\ \midrule
 \(\text{PM}_{2.5}\) & 69.5822 & 74.5495 & 2.8800 & 6.7416 & 18.63 hrs\\  \midrule
\multicolumn{6}{c}{SM-EPR Bivariate Model} \\  \midrule
 \(\text{PM}_{2.5}\) & 57.8830 & 61.9588 & 2.3826 & 6.7124 & 18.82 hrs \\ 
logit(AOT) & 0.3314 & 0.3366 & 0.3529 & 1.7941 & 18.82 hrs  \\ 
 \bottomrule
\end{tabular}
\vspace{-10pt}
\end{center}
\end{table}
In addition to the insight into the non-linear relationship between the two variables, we can also make inference on the variables used as predictors in our model. We identify significant covariates for predicting \(\text{PM}_{2.5}\) and logit(AOT) by determining which credible intervals contain zero. Using SM-EPR we found that logit(AOT), cloud fraction, land surface temperature, sea surface temperature, and NVDI are significant covariates for \(\text{PM}_{2.5}\) and cloud fraction is a significant covariate for AOT. Various studies utilizing Bayesian approaches and machine learning approaches have also reported significant/important covariates used for predicting \(\text{PM}_{2.5}\) measurements. AOT, temperature, and vegetation index \citep{chen2021estimating, chen2021estimation} are some of the significant variables discussed in the literature that align with our findings.

\begin{figure}[t!]
\centering
  \centering
  \includegraphics[width=1\linewidth]{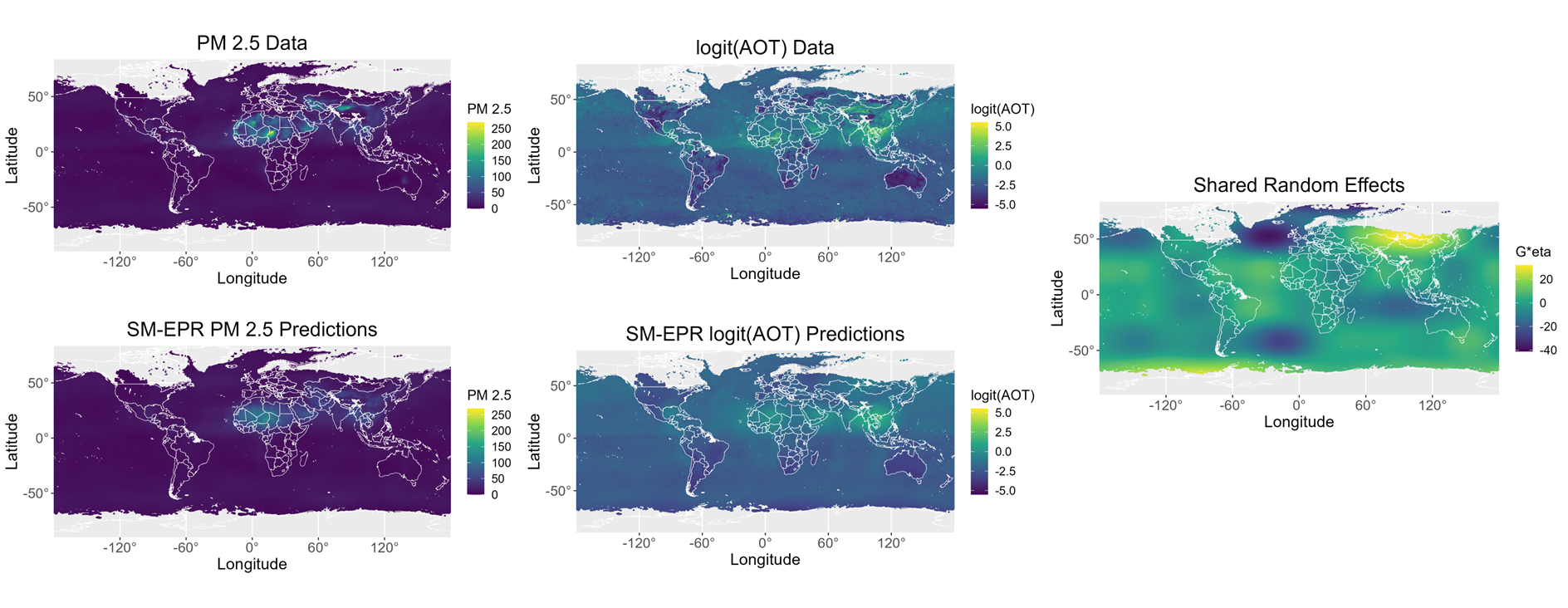}  
  \label{fig:sub-first}
\caption{Predictions of \(\text{PM}_{2.5}\) and logit(AOT) compared to the original data and plot of the posterior mean of the shared random effects \(\{\sum_{j}g_{M,j}^{*}(s_{i})\eta_{M,j}^{*}\}\) between logit(AOT) and \(\text{PM}_{2.5}\).}
\label{biv_anal_plot}
\end{figure}
\vspace{-25pt}
\section{Discussion}
\label{s:discuss}
\vspace{-15pt}
This article presents the analysis of two important climate variables and provides {six} contributions to the literature. The first main contribution is that SM-EPR allows us to simulate exact posterior replicates of the fixed and random effects in a multivariate spatial GLMM without the use of MCMC or approximate Bayesian approaches for effectively any data size. Second, our model can handle multiple response variables that may be distributed to different distributions from the exponential or DY family. A third contribution is SM-EPR includes a discrepancy term which allows us to model feedback that is known to be present in air pollution. {A fourth contribution is we present a theoretical result using KL divergence to show the data subset approach is robust to model misspecification.} A fifth contribution is the ability to jointly model AOT and \(\text{PM}_{2.5}\) on a global scale leading to improved predictions which is valuable given the challenges in obtaining global measurements for both variables. Finally, our {sixth} contribution is that the joint modeling of AOT and \(\text{PM}_{2.5}\) illustrates the possible nonlinear relationship between the two variables and elucidates important predictors for these variables. Our analysis can guide future studies that are focused on improving prediction accuracy of these climate variables which in turn can lead to more effective interventions for mitigating impacts of pollution. A natural future direction for this project involves extending this methodology to a space-time scenario, further exploring the temporal component when modeling these climate variables. 

\vspace{-20pt}
\section*{Declarations}
\vspace*{-10pt}
\subsection*{Conflicts of Interest}
\vspace{-8pt}
All authors declared that they have no conflict of interest. 

\vspace{-30pt}
\begin{singlespace}
    \bibliography{SM-EPR}
\end{singlespace}

\newpage
\bigskip
\begin{center}
{\large\bf SUPPLEMENTARY MATERIAL}
\end{center}

\appendix
\renewcommand{\thesection}{\Alph{section}}

\titleformat{\section}[block]
  {\normalfont\Large\bfseries}{Appendix \thesection:}{1em}{}
\vspace{-35pt}
\section{{Notation Tables}} \label{appen:notation}
\vspace{-25pt}
\subsection{{Notation Table Section 3.1}}
\vspace{-20pt}
\begin{table}[H]
\label{param.table1}
\begin{center}
\begin{tabular}{>{\centering\arraybackslash}p{2.75cm} p{3.5cm} p{9.75cm}}
\toprule
Notation & Location of Definition in Section \ref{prelim_ss}. &  Comments \\ \midrule 
\(\mathbf{z}\), \(Z_{k,i}\) & First paragraph. & The \(NK\)-dimensional data vector and the data point at the \(i\)-th location distributed according to the \(k\)-th distribution, respectively. \(K > 1\) allows for the multi-type scenario.  \\ \midrule
\(N\) and \(K\) & First paragraph. & The total number of locations and the total number of different data types observed at each location, respectively. \\ \midrule
\(v(\cdot)\) & Second paragraph. & The true unknown non-parametric data generating pdf or pmf. \\ \midrule 
\(\delta_i\), \(\boldsymbol{\delta}\) and \(n\) & Second paragraph. & Bernoulli random variable used to include/exclude data points in training subset, the \(N\)-dimensional vector of subset indicators, and the subset size.\\ \midrule
\(\mathbf{z}_{-\delta}\) and \(\mathbf{z}_{\delta}\) & Second paragraph. & The \((N-n)K\)-dimensional vector representing the holdout data and the \(nK\)-dimensional vector representing the training data for a given \(\boldsymbol{\delta}\), respectively. \\ \midrule 
\(\mathbf{x}_{k,i}^{\prime}\boldsymbol{\beta} + \mathbf{g}_{k,i}^{\prime}\boldsymbol{\eta} + \boldsymbol{\xi}_{k,i}\) & Second paragraph. & Mixed effects model representing the natural parameter of a distribution in the exponential family. \\ \midrule 
\(\mathbf{x}_{k,i}^{\prime}\) and \(\mathbf{g}_{k,i}^{\prime}\) & Third paragraph. & The \(p\)-dimensional vector of known covariates and the \(r\)-dimensional vector of pre-specified basis functions, respectively. \\ \midrule
\(\boldsymbol{\beta}\), \(\boldsymbol{\eta}\), and \(\boldsymbol{\xi}\) & Third paragraph. & The \(p\)-dimensional regression effects vector, the \(r\)-dimensional random effects vector, and an \(NK\)-dimensional random effects vector, respectively. \\ \midrule
\(\boldsymbol{\theta}\) & Fourth paragraph. & Generic vector representing hyperparameters with proper distribution \(f(\boldsymbol{\theta})\). \\ \midrule
\(\text{EF}(Z \vert \mu, b; \psi)\) & Fourth paragraph. & A distribution from the natural exponential family. \\ \bottomrule
\end{tabular}
\end{center}
\caption{Notation for Section \ref{prelim_ss} of the main text.}
\end{table}

\subsection{{Notation Table Section 3.2}}
\begin{table}[H]
\label{param.table2}
\begin{center}
\begin{tabular}{>{\centering\arraybackslash}p{5.25cm} p{3.5cm} p{8cm}}
\toprule
Notation & Location of Definition in Section \ref{prelim_gcm}. &  Comments \\ \midrule 
\(w_{k,i}\), \(\alpha_{k,i}\), \(\kappa_{k,i}\), and \(\psi_{k,i}\) & First paragraph. & A univariate DY random variable, shape parameter,  scale parameter, and unit log partition function, respectively. \\ \midrule
\(\text{DY}(w_{k,i}, \vert \alpha_{k,i}, \kappa_{k,i}; \psi_{k,i})\) & First paragraph. & Shorthand notation for a DY distribution. \\ \midrule
\(\mathbf{h}\) & First paragraph. & A generic \(M\)-dimensional GCM random vector.  \\ \midrule 
\(\mathbf{w}\) & First paragraph. & An \(M\)-dimensional vector of independent DY random variables. \\ \midrule 
\(\boldsymbol{\mu}\) &  First paragraph. & The \(M\)-dimensional mean vector of a GCM random vector. \\ \midrule 
\(\mathbf{V}\) & First paragraph. & An \(M \times M\) invertible matrix. \\ \midrule
\(\mathbf{D}\) & First paragraph. & A \(M \times M\) matrix valued function of $\bm{\theta}$. \\ \midrule 
\(\boldsymbol{\psi}\) & First paragraph. & An \(M\)-dimensional vector valued function with elements $\psi_{k,i}(\cdot)$. \\ \midrule
\(\pi(\boldsymbol{\theta})\) & Second paragraph. & The proper hyperprior for \(\boldsymbol{\theta}\). \\ \midrule 
\(\mathscr{N}(\boldsymbol{\theta})\) & Second paragraph. & The known normalizing constant in the pdf of a GCM distribution. \\ \midrule 
\(GCM(\mathbf{h} \vert \boldsymbol{\alpha}, \boldsymbol{\kappa}, \boldsymbol{\mu}, \mathbf{V}, \pi,\mathbf{D}; \boldsymbol{\psi})\) & Second paragraph. & Shorthand notation for a GCM distribution. \\ \midrule 
\(\mathbf{h}_1\) and \(\mathbf{h}_2\) & Third paragraph. & Generic \(l\)-dimensional and \((M-l)\)-dimensional sub-vectors of GCM random vector \(\mathbf{h}\). \\ \midrule 
\(cGCM(\mathbf{h}_1 \vert \boldsymbol{\alpha}, \boldsymbol{\kappa}, \boldsymbol{\mu}^{*}, \mathbf{H}, \pi,\mathbf{D}; \boldsymbol{\psi})\) & Last paragraph. & Shorthand notation for a cGCM distribution.
\\\bottomrule
\end{tabular}
\end{center}
\caption{Notation for Section \ref{prelim_gcm} of the main text.}
\end{table}

\subsection{{Notation Table Section 3.3}}
\begin{table}[H]
\label{param.table3}
\begin{center}
\begin{tabular}{>{\centering\arraybackslash}p{4cm}  p{3.5cm} p{8.5cm}}
\toprule
Notation & Location of Definition in Section \ref{prelim_epr} &  Comments \\ \midrule 
\(\mathbf{x}_{k,i}^{\prime}\boldsymbol{\beta} + \mathbf{g}_{k,i}^{\prime}\boldsymbol{\eta} + \xi_{k,i} - \tau_{y,k,i}\) & First paragraph. & Discrepancy model. \\ \midrule 
\(N(\boldsymbol{\beta}\vert\boldsymbol{\mu}, \boldsymbol{\Sigma})\) & First Paragraph. & Shorthand notation for the multivariate normal pdf with mean \(\boldsymbol{\mu}\) and covariance \(\boldsymbol{\Sigma}\). \\ \midrule
\(\boldsymbol{\tau}_y\) & Second paragraph. & The \(N\)-dimensional discrepancy error caused by allowing for non-zero difference between the standard mixed effects model and the true latent process. \\ \midrule
\(\mathbf{X}\) and \(\mathbf{G}\) & Second paragraph. & The \(N \times p\) matrix of known covariates and the \(N \times r\) matrix of pre-specified basis functions, respectively. \\ \midrule
\(\mathbf{D}_{\beta}\boldsymbol{\tau}_{\beta}\) and  \(\boldsymbol{\Sigma}_{\beta}(\boldsymbol{\theta})\) & Second paragraph. & The mean and positive definite covariance matrix of the Gaussian prior for \(\boldsymbol{\beta}\), respectively. \\ \midrule
\(\mathbf{D}_{\eta}\boldsymbol{\tau}_{\eta}\) and  \(\boldsymbol{\Sigma}_{\eta}(\boldsymbol{\theta})\) & Second paragraph. & The mean and positive definite covariance matrix of the Gaussian prior for \(\boldsymbol{\eta}\), respectively. \\ \midrule
\(\boldsymbol{\alpha}_{\xi}\), \(\boldsymbol{\kappa}_{\xi}\), \(\boldsymbol{\tau}_{\xi}^*\), \(\mathbf{H}_{\xi}\), \(\pi_{\xi}\), \(\mathbf{D}_{\xi}\), \(\boldsymbol{\psi}_{\xi}\)  & Third paragraph. & The parameters that define the cGCM distribution for \(\boldsymbol{\xi}\). \\ \midrule
\(\boldsymbol{\tau}\) & Fourth paragraph. & The \((2N + p + r)\)-dimensional vector \(\boldsymbol{\tau} = (\boldsymbol{\tau}_y^{\prime}, \boldsymbol{\tau}_{\beta}^{\prime}, \boldsymbol{\tau}_{\eta}^{\prime}, \boldsymbol{\tau}_{\xi}^{\prime})^{\prime} = -\mathbf{D}(\boldsymbol{\theta})^{-1}\mathbf{Q}\mathbf{q}\).\\ \midrule
\(\mathbf{D}(\boldsymbol{\theta})^{-1}\) & Fourth paragraph. & The \((2N + p + r) \times (2N + p +r)\)-dimensional matrix defined as \(\text{blkdiag}(\mathbf{I}_N, \boldsymbol{D}_{\beta}(\boldsymbol{\theta})^{-1}, \mathbf{D}_{\eta}(\boldsymbol{\theta})^{-1}, \frac{1}{\sigma_{\xi}}\mathbf{I}_N)\). \\ \midrule 
\(\mathbf{q}\) & Fourth paragraph. & The unknown term referred to as the discrepancy term which is typically set equal to \(\mathbf{0}_{N,1}\) in the literature. In EPR, \(\mathbf{q}\) is given an improper prior. \\ \midrule
\(\mathbf{H}\) and \(\mathbf{Q}\) & Fourth paragraph. & Matrix valued covariance parameters in the GCM posterior distribution for \((\boldsymbol{\beta}^{\prime}, \boldsymbol{\eta}^{\prime}, \boldsymbol{\xi}^{\prime}, \mathbf{q}^{\prime})^{\prime}\). \\ \midrule
\(Y_{k,i} = \mathbf{x}_{k,i}^{\prime}\boldsymbol{\beta} + \mathbf{g}_{k,i}^{\prime}\boldsymbol{\eta}\) & Fifth paragraph. & Posterior summaries of this linear combination are used for prediction.
\\\bottomrule
\end{tabular}
\end{center}
\caption{Notation for Section \ref{prelim_epr} of the main text.}
\end{table}

\subsection{{Notation Table Section 4}}
\vspace{-20pt}
\begin{table}[H]
\label{param.table4}
\begin{center}
\begin{tabular}{>{\centering\arraybackslash}p{4cm}  p{4cm} p{8cm}}
\toprule
Notation & Location of Definition &  Comments \\ \midrule 
\(\mathbf{x}_{1,i}^{\prime}\boldsymbol{\beta} + \mathbf{g}_{1,i}^{\prime}\boldsymbol{\eta} + \xi_{1,i} - \tau_{y,1,i}\), \(\sigma_i\), \(\alpha_z\), and \(\kappa_z\) & First paragraph in Section \ref{SEPR}. & The mean, variance, and scale parameters of the logit-beta data, respectively. \\ \midrule
\(\text{exp}\{-(\mathbf{x}_{2,i}^{\prime}\boldsymbol{\beta} + \mathbf{g}_{2,i}^{\prime}\boldsymbol{\eta} + \xi_{2,i} - \tau_{y,2,i})\}\) and \(\rho_z\) & First paragraph in Section \ref{SEPR}. & The scale and shape parameter for the Weibull distributed data, respectively. \\\midrule
\(\mathbf{X}_{\delta}\) and \(\mathbf{G}_{\delta}\)& Second paragraph in Section \ref{SEPR}. & The \(2n\times p\) matrix consisting of rows of \(\mathbf{X}\) such that \(\delta_i = 1\) and the \(2n\times r\) matrix consisting of rows of \(\mathbf{G}\) such that \(\delta_i = 1\). \\ \midrule 
\(\boldsymbol{\tau}_{\delta}\) & Second paragraph in Section \ref{SEPR}. & The parameter \(\boldsymbol{\tau}_{\delta} = (\boldsymbol{\tau}_{\delta,y}^{\prime}, \boldsymbol{\tau}_{\beta}^{\prime}, \boldsymbol{\tau}_{\eta}^{\prime}, \boldsymbol{\tau}_{\delta,\xi}^{\prime})^{\prime} = \text{-blkdiag}(\mathbf{I}_{2n}, \boldsymbol{D}_{\beta}(\boldsymbol{\theta})^{-1}, \mathbf{D}_{\eta}(\boldsymbol{\theta})^{-1}, \frac{1}{\sigma_{\xi}}\mathbf{I}_{2n})\mathbf{Q}_{\delta}\mathbf{q}\).\\ \midrule
\(\boldsymbol{\tau}_{\delta, y}\) & Second paragraph in Section \ref{SEPR}. & The \(2n\)-dimensional vector function consisting of elements of \(\boldsymbol{\tau}_y\) such that \(\delta_i = 1\). \\ \midrule 
\(\boldsymbol{\tau}_{\delta, \xi}\) & Second paragraph in Section \ref{SEPR}. & The \(2n\)-dimensional vector function consisting of elements of \(\boldsymbol{\tau}_{\xi}\) such that \(\delta_i = 1\). \\ \midrule 
\(\boldsymbol{\xi}_{\delta}\) & Second paragraph in Section \ref{SEPR}. & The \(2n\)-dimensional vector consisting of elements of \(\boldsymbol{\xi}\) such that \(\delta_i = 1\). \\ \midrule 
\(\mathbf{H}_{\delta}\) and \(\mathbf{Q}_{\delta}\) & Second paragraph in Section \ref{SEPR}. & Matrix valued covariance parameters in the GCM posterior distribution for \((\boldsymbol{\beta}^{\prime}, \boldsymbol{\eta}^{\prime}, \boldsymbol{\xi}_{\delta}^{\prime}, \mathbf{q}^{\prime})^{\prime}\). \\ \midrule
\(\boldsymbol{\alpha}_M\) and \(\boldsymbol{\kappa}_M\) & Theorem \ref{theorem1} in Section \ref{technical.dev}. & The DY parameters for the GCM posterior. \\ \midrule 
Subscript ``rep'' & Theorem \ref{theorem2} in Section \ref{computation}. & Represents a single replicate from the posterior. \\ \midrule 
\(\boldsymbol{\gamma}\) & Fifth paragraph in Section \ref{stat.prop}. & Generic \(l\)-dimensional real-valued parameter vector.  \\ \midrule 
\(m_{FULL}(\mathbf{z}, \boldsymbol{\delta})\), \(m_{SUB}(\mathbf{z}, \boldsymbol{\delta})\), \(m_{TRUE}(\mathbf{z}, \boldsymbol{\delta})\) & Fifth paragraph in Section \ref{stat.prop}. & The marginal distribution of \((\textbf{z},\boldsymbol{\delta})\) from different models. \\ \midrule
\(KL\{f\vert \vert g\}\) & Theorem \ref{theorem4} in Section \ref{stat.prop}. & The KL divergence between generic models \(f\) and \(g\).
\\\bottomrule
\end{tabular}
\end{center}
\caption{Notation for Section \ref{method} of the main text.}
\end{table}
\vspace{-10pt}
\section{A General Expression of the SM-EPR}\label{appen:details}
\vspace{-10pt}
{In Section \ref{SEPR} of the main text, we provide the hierarchical model for SM-EPR for logit-beta and Weibull distributed data as they are the data types used in our application. In this appendix, we provide a general expression of the hierarchical model for SM-EPR that also allows for data distributed according to a member of the exponential family or DY family, as these data types are also of interest. The data, process, parameter, and subset models for the general hierarchical expression are defined as follows:} \vspace{-40pt}

\begin{align}
    f(\mathbf{z} &| \boldsymbol{\beta}, \boldsymbol{\eta}, \boldsymbol{\xi}, \boldsymbol{\theta}, \mathbf{q}, \boldsymbol{\delta}) = v(\mathbf{z}_{-\delta}) \prod_{k = 1}^K \prod_{\{i: \delta_i = 1\}} f_{k}(Z_{k,i}|\mathbf{x}_{k,i}^{\prime}\boldsymbol{\beta} +  \mathbf{g}_{k,i}^{\prime}\boldsymbol{\eta} +  \xi_{k,i} - \tau_{y,k,i}, b_{k,i}; \psi_{k});  \notag \\ f(\boldsymbol{\xi}|\boldsymbol{\beta},\boldsymbol{\eta},&\boldsymbol{\theta},\mathbf{q}, \boldsymbol{\delta}) = \left(\frac{1}{2 \pi \sigma_{\xi}^2}\right)^{KN/2} \text{exp}\left(-\frac{\sum_{k = 1}^K\sum_{i = 1}^N(\xi_{k,i} - \tau_{\xi,k, i})^2}{2 \sigma_{\xi}^2}\right) \notag\\
    &\times \text{exp}\left(c_{\xi}\sum_{k=1}^K\sum_{i = 1}^N (\textbf{x}_{k,i}' \boldsymbol{\beta} + \textbf{g}_{k,i}'\boldsymbol{\eta} + \xi_{k,i} - \tau_{y,k,i}) - d_{\xi} \sum_{k=1}^K\sum_{i = 1}^N \psi\left(\textbf{x}_{k,i} \boldsymbol{\beta} + \textbf{g}_{k,i}'\boldsymbol{\eta} + \xi_{k,i} - \tau_{y,k,i}\right)\right) \notag\\
    \boldsymbol{\beta}|\boldsymbol{\theta},\mathbf{q}, \boldsymbol{\delta} &\sim N(\boldsymbol{\beta}\vert\mathbf{D}_{\beta}\boldsymbol{\tau}_{\beta}(\boldsymbol{\theta}, \mathbf{q}), \mathbf{D}_{\beta}(\boldsymbol{\theta}) \mathbf{D}_{\beta}(\boldsymbol{\theta})') \notag\\
    \boldsymbol{\eta}|\boldsymbol{\theta},\mathbf{q}, \boldsymbol{\delta}  &\sim N(\boldsymbol{\eta}\vert\mathbf{D}_{\eta}\boldsymbol{\tau}_{\eta}(\boldsymbol{\theta},\mathbf{q}), \mathbf{D}_{\eta}(\boldsymbol{\theta}) \mathbf{D}_{\eta}(\boldsymbol{\theta})') \notag\\
    f(\mathbf{q}\vert \boldsymbol{\delta}) &= 1 \notag\\
    &\pi(\boldsymbol{\theta}) \notag\\
    &f(\boldsymbol{\delta}|n).
    \label{equation10}
\end{align}
{The notation \(f_{{k}}(Z\vert Y,b;\psi_{k})\) denotes the pdf or pmf of a distribution in the natural exponential family, the DY family, or Weibull distribution with parameters $b_{k,i} \in \bm{\theta}$. Table \ref{param.table} presents the parameters for the Weibull, logit-beta, Gaussian, Poisson, and binomial distributions. When \(\delta_i = 1\), we assume the natural parameter \(\mathbf{x}_{k, i}^{\prime}\boldsymbol{\beta} + \mathbf{g}_{k, i}^{\prime}\boldsymbol{\eta} + \xi_{k,i} - \tau_{y,k,i}\) is the discrepancy model from \citet{bradley2024generating}.  The terms \(\mathbf{D}_{\beta}(\boldsymbol{\theta})\), \(\mathbf{D}_{\eta}(\boldsymbol{\theta})\), \(\boldsymbol{\Sigma}_{\beta}(\boldsymbol{\theta})\), and \(\boldsymbol{\Sigma}_{\eta}(\boldsymbol{\theta})\) have the same definition as in Section \ref{prelim_epr} of the main text. The terms $v(\textbf{z}_{-\delta})$ and $f(\bm{\delta}\vert n)$ have the same definition as in Section~\ref{prelim_ss}, and it is again assumed that $\textbf{z}$ and $\bm{\delta}$ are independent.}

{
The discrepancy parameter \(\boldsymbol{\tau}_{\bm{\delta}}\), which is an \((2Kn + p + r)\)-dimensional vector that is a function of \((\boldsymbol{\theta}, \mathbf{q})\). is given by: \vspace{-20pt}
\begin{align}
    \boldsymbol{\tau}_{\bm{\delta}}(\boldsymbol{\theta}, \mathbf{q}) = \begin{pmatrix}
\boldsymbol{\tau}_{\delta, y}(\boldsymbol{\theta}, \mathbf{q})\\
\boldsymbol{\tau}_{\beta}(\boldsymbol{\theta}, \mathbf{q}) \\
 \boldsymbol{\tau}_{\eta}(\boldsymbol{\theta}, \mathbf{q}) \\
\boldsymbol{\tau}_{\delta, \xi}(\boldsymbol{\theta}, \mathbf{q})
\end{pmatrix} = -\begin{pmatrix}
 \mathbf{I}_{Kn} & \bm{0}_{Kn,p} & \bm{0}_{Kn,r} & \bm{0}_{Kn,Kn}\\
 \bm{0}_{p,Kn} &  \mathbf{D}_{\beta}(\boldsymbol{\theta})^{-1} & \bm{0}_{p,r} & \bm{0}_{p,Kn}\\
 \bm{0}_{r,Kn} & \bm{0}_{r,p} & \mathbf{D}_{\eta}(\boldsymbol{\theta})^{-1} & \bm{0}_{r,Kn} \\
 \bm{0}_{Kn,Kn} & \bm{0}_{Kn,p} &  \bm{0}_{Kn,r} & \frac{1}{\sigma_{\xi}}\bm{I}_{Kn} \\
\end{pmatrix}
 \mathbf{Q}_{\delta}\mathbf{q},
 \label{equation8.1.1}
\end{align}
}

\noindent 
where the $Kn$-dimensional vector function $\bm{\tau}_{\delta,y} = (\tau_{y,k,i}: \delta_{i} = 1, 1,\dots,K)^{\prime}$ and $\bm{\tau}_{\delta,\xi} = (\tau_{y,\xi,i}: \delta_{i} = 1, 1,\dots,K)^{\prime}$. When $\delta_{j} = 0$ we define $\tau_{y,k,j} =\tau_{y,\xi,j}= 0$. The \((2Kn + p + r) \times Kn\) matrix \(\mathbf{Q}_{\delta}\) represents the eigenvectors of the orthogonal complement of the \((2Kn + p + r) \times (Kn + p + r)\) matrix \(\mathbf{H}_{\delta}\) associated with non-zero eigenvalues, where \vspace{-15pt}
\begin{align}
    \mathbf{H}_{\delta} = \begin{pmatrix}
 \mathbf{I}_{Kn} & \mathbf{X}_{\delta} & \mathbf{G}_{\delta}\\
 \bm{0}_{p,Kn} &  \mathbf{I}_p & \bm{0}_{p,r} \\
 \bm{0}_{r,Kn} & \bm{0}_{r,p} & \mathbf{I}_r \\
 \mathbf{I}_{Kn} & \bm{0}_{Kn,p} &  \bm{0}_{Kn,r} \\
\end{pmatrix},
\label{equation9.1.1}
\end{align}

\noindent where ${K}n\times p$ matrix $\textbf{X}_{\delta} = (\textbf{x}_{k,i}^{\prime}: \delta_{i} = 1, k = {1,\ldots, K})^{\prime}$, and the ${K}n\times r$ matrix $\textbf{G}_{\delta} = (\textbf{g}_{k,i}^{\prime}: \delta_{i} = 1, k = {1,\ldots, K})^{\prime}$. Similar to the motivation in \citet{bradley2024generating} this specification of \(\boldsymbol{\tau}_{\bm{\delta}}\) given in (\ref{equation8.1.1}) is in the orthogonal column space of \((\boldsymbol{\xi}_{\delta}^{\prime},\boldsymbol{\beta}^{\prime}, \boldsymbol{\eta}^{\prime})^{\prime}\) (see Theorem \ref{theorem2} for verification), which again avoids collinearity issues between $(\boldsymbol{\xi}_{\delta}^{\prime},\boldsymbol{\beta}^{\prime}, \boldsymbol{\eta}^{\prime})^{\prime}$ and $\bm{\tau}_{\delta}$, where the $Kn$-dimensional vector $\boldsymbol{\xi}_{\delta}^{\prime} = (\xi_{k,i}: \delta_{i} = 1, k =1,\dots, K)^{\prime}$.

When $K = 2$, we have that the general statement in (\ref{equation10}) is equivalent to the SM-EPR stated in (\ref{hier.mod.eq}) in the main text. In this supplementary appendix we work with the general statement as these data models are of independent interest.

The parameter vector \(\boldsymbol{\theta}  = (\sigma_1^2, \dots, \sigma_n^2, \sigma_{\xi}^2, \sigma_{\beta}^2, \sigma_{\eta}^2, \rho_{\beta}, \rho_{\eta},\rho_{\xi},\rho_z)\) has prior distribution \(\pi(\boldsymbol{\theta}) =\) \(\pi(\sigma_{\xi}^2|\rho_{\xi}) \pi(\rho_{\xi}) \pi(\sigma_{\beta}^2|\rho_{\beta}) \pi(\rho_{\beta})  \pi(\sigma_{\eta}^2|\rho_{\eta}) \pi(\rho_{\eta})\pi(\rho_{z})\prod_{\{i:\delta_i = 1\}} \pi(\sigma_i^2)\). The matrix square roots of the prior covariance matrices for \(\boldsymbol{\beta}\) and \(\boldsymbol{\eta}\) are defined as \(\mathbf{D}_{\beta}(\boldsymbol{\theta}) \equiv \sigma_{\beta}\mathbf{I}_{p}\) and \(\mathbf{D}_{\eta}(\boldsymbol{\theta}) \equiv \sigma_{\eta}\mathbf{I}_{r}\), respectively. We assume the following priors for the components of \(\pi(\boldsymbol{\theta})\): \(\sigma_{\xi}^2|\rho_{\xi} \sim \text{IG}(1, \rho_{\xi})\),  \(\sigma_{\beta}^2|\rho_{\beta} \sim \text{IG}(1, \rho_{\beta})\),  \(\sigma_{\eta}^2|\rho_{\eta} \sim \text{IG}(1, \rho_{\eta})\), \(\rho_{\xi} \sim \text{Gamma}(1,1)\),  \(\rho_{\beta} \sim \text{Gamma}(1,1)\),  \(\rho_{\eta} \sim \text{Gamma}(1,1)\), \(\rho_z \sim \text{IG}(1,1)\) and \(\sigma_i^2 \sim \text{IG}(1, 1.5)\), where IG denotes the inverse gamma distribution.

\begin{table}[H]
\caption{SM-EPR parameters for data distributed according to logit-beta, Weibull, Gaussian, Poisson, and binomial data.}
\label{param.table}
\begin{center}
\begin{tabular}{c p{13cm}}
\toprule
\textbf{Family} & \textbf{Parameters of \(f_k\), \(f(\boldsymbol{\xi}\vert \boldsymbol{\beta}, \boldsymbol{\eta}, \boldsymbol{\theta}, \boldsymbol{q}, \boldsymbol{\delta})\), and \(f(\boldsymbol{\beta}, \boldsymbol{\eta}, \boldsymbol{\xi}, \boldsymbol{q}\vert \mathbf{z},\boldsymbol{\delta})\)} \\ \midrule 
Logit-Beta & The log-partition function \(b_{1,i}\psi_{1}\) is given by \(b_{1,i} = \kappa_z\) and \(\psi_{1}(Y_i) = \text{log}\{1 + \text{exp}(Y_i)\}\). The shape parameters are denoted with \(\alpha_z\) and \(\kappa_z\) and the mean is \(Y_i\). The constants \(c_{\xi} = 0\) and \(d_{\xi}=0\). The parameters of the GCM posterior distribution are  \(\alpha_{1,i} = -\alpha_z \sigma_i^2\) and \(\kappa_{1,i} = \kappa_z\). \\ \midrule
Weibull & The shape and scale parameters are denoted with \(b_{k,i}\equiv \rho_z\) and \(\text{exp}(-\textbf{x}_{k,i}^{\prime}\bm{\beta}-\textbf{g}_{k,i}^{\prime}\bm{\eta}-\xi_{k,i}+\tau_{y,k,i})\), respectively. Let $\psi_{2}(Y) = \text{exp}(-Y)$, where we note that the two parameter Weibull distribution is not a natural exponential family member, and hence, $b_{2,i}\psi_{2}$ is not interpreted as the log-partition function in this case. The constants \(c_{\xi} = 0\) and \(d_{\xi}= 0\). The parameters of the GCM posterior distribution are \(\alpha_{2,i} = 1\) and \(\kappa_{2,i} = Z_i^{\rho_{z}}\). \\ \midrule
Gaussian & The log-partition function \(b_{3,i}\psi_{3}\) is given by \(b_{3,i} = \frac{1}{2\sigma_i^2}\) with \(\sigma_i^2 > 0\) and \(\psi_{3}(Y) = Y^2\). The constants \(c_{\xi} = 0\) and \(d_{\xi}=0\). The parameters of the GCM posterior distribution are \(\alpha_{3,i} = \frac{Z_i}{\sigma_i^2}\) and \(\kappa_{3,i} = \frac{1}{2\sigma_i^2}\).\\ \midrule 
Poisson & The log-partition function \(b_{4,i}\psi_{4}\) is given by \(b_{4,i} = 1\) and \(\psi_{4}(Y_i) = \text{exp}(Y_i)\). The mean is \(\text{exp}(Y_i)\). The constants \(c_{\xi} = \alpha_{\xi}\) and \(d_{\xi}=0\). The parameters of the GCM posterior distribution are \(\alpha_{4,i} = Z_i + \alpha_{\xi}\) and \(\kappa_{4,i} = 1\).\\ \midrule
Binomial & The log-partition function \(b_{5,i}\psi_{5}\) is given by \(b_{5,i} = m_i\) with \(m_i \in \mathbb{Z}^{+}\) and \(\psi_{5}(Y_i) = \text{log}\{1+\text{exp}(Y_i)\}\). The sample size and probability of success are denoted with \(m_i\) and \(\text{exp}(Y_i)/\{1 + \text{exp}(Y_i)\}\), respectively. When \(m_i = 1\), \(f_{k}\) is the Bernoulli pmf, as it is a special case of the binomial distribution. The constants \(c_{\xi} = \alpha_{\xi}\) and \(d_{\xi}=2\alpha_{\xi}\). The parameters of the GCM posterior distribution are \(\alpha_{5,i} = Z_i + \alpha_{\xi}\) and \(\alpha_{5,i} = m_i + 2\alpha_{\xi}\).\\ \bottomrule
\end{tabular}
\end{center}
\end{table}

\vspace{-10pt}

\vspace{-20pt}
\section{{Technical Results}}\label{appen:tech.dev}
\renewcommand{\qedsymbol}{}

In this section we list the general statements of Theorems 1 $\--$ 3. Theorems 1$\--$3 from the main text follow immediately from General Theorems 1 $\--$ 3 when $K = 2$. Additionally, we provide a proof of Theorem 4.

\begin{namedtheorem}[General Theorem 1]
    Assume the hierarchical model in (\ref{equation10}). Then 
    \vspace{-15pt}
    \begin{align}
        f(\boldsymbol{\delta}\vert \mathbf{z}, n) &= f(\boldsymbol{\delta} \vert n) \notag\\
        (\boldsymbol{\xi}_{\delta}',\boldsymbol{\beta}', \boldsymbol{\eta}', \mathbf{q}')' \vert \mathbf{z}, \boldsymbol{\delta} &\sim \text{GCM}(\boldsymbol{\alpha}_M, \boldsymbol{\kappa}_M, \bm{0}_{2Kn + p + r, 1}, \mathbf{V}, \pi, \mathbf{D}_{\delta}; \boldsymbol{\psi}),
    \end{align}

    \vspace{-15pt}
    \noindent where $\bm{\xi}_{\delta} = (\xi_{k,i}: \delta_{i} = 1,k = 1,\ldots, K)^{\prime}$ \(\mathbf{V}^{-1} = (\mathbf{H}_{\delta}, \mathbf{Q}_{\delta})\) is defined in (\ref{equation9.1.1}), \(\boldsymbol{\psi}(\mathbf{h})\) \(= (\psi_{k}(h_1), \dots,\) \(\psi_{k}(h_{Kn}), \psi_{*}(h_1^*),\) \(\dots, \psi_{*}(h_{Kn + p + r}^*))\) is the \((2Kn + p + r)\)-dimensional unit-log partition function for \(\mathbf{h}\) \(= (h_1, \dots, h_{Kn},\) \(h_1^*, \dots, h_{Kn + p + r}^*)' \in \mathbb{R}^{2Kn + p + r}\). Let \(\psi_{k},k =1,\dots K\) denote the unit log partition function corresponding to the \(k\)-th family and \(\psi_{*}(\mathbf{h}^*) = (\mathbf{h}^{*})^2 \). The term \(\boldsymbol{\alpha}_M = (\alpha_{k,1}, \dots \alpha_{k,Kn}, \bm{0}_{1, Kn + p + r})'\) and the term \(\boldsymbol{\kappa}_M = (\kappa_{k,1}, \dots, \kappa_{k,Kn}, \frac{1}{2}\mathbf{1}_{1, Kn + p + r})'\). The terms \(\alpha_{k,i}\) and \(\kappa_{k,i}\) are the DY parameters corresponding to the \(k\)-th family and are defined in Table \ref{param.table}. 
    \label{sgcm.posterior}
\end{namedtheorem}

\begin{proof}
We assume \(\mathbf{z}\) and \(\boldsymbol{\delta}\) are independent and as a result, the expression simplifies to \(f(\boldsymbol{\delta}\vert\mathbf{z}, n) = f(\boldsymbol{\delta}\vert n)\). Thus this leaves us to derive \(f(\boldsymbol{\xi}, \boldsymbol{\beta}, \boldsymbol{\eta}, \mathbf{q}\vert \mathbf{z}, \boldsymbol{\delta})\) since \(f(\boldsymbol{\xi}, \boldsymbol{\beta}, \boldsymbol{\eta}, \mathbf{q}, \boldsymbol{\delta}\vert \mathbf{z}, n) = f(\boldsymbol{\xi}, \boldsymbol{\beta}, \boldsymbol{\eta}, \mathbf{q} \vert \mathbf{z}, \boldsymbol{\delta})f(\boldsymbol{\delta}\vert \mathbf{z}, n)\). Our strategy is to show that \(f(\boldsymbol{\xi}, \boldsymbol{\beta}, \boldsymbol{\eta}, 
\mathbf{q} |\mathbf{z}, \boldsymbol{\delta})\propto \) \(\int_{\Omega} \pi(\boldsymbol{\theta})f(\boldsymbol{\xi}, \boldsymbol{\beta}, \boldsymbol{\eta}, \mathbf{q}, \mathbf{z}, \boldsymbol{\delta} | \boldsymbol{\theta}) d \boldsymbol{\theta}\) is the GCM stated in General Theorem \ref{theorem1}. The data model can be written as: 
\begin{align*}
 f(\mathbf{z}|\boldsymbol{\xi}_{\delta}, \boldsymbol{\beta}, \boldsymbol{\eta}, \boldsymbol{\theta}, \mathbf{q}, \boldsymbol{\delta}) &= N \text{exp}\left[\mathbf{a}'\left\{\begin{pmatrix}\mathbf{I}_{{K}n} & \mathbf{X}_{\delta} & \mathbf{G}_{\delta}\end{pmatrix} \begin{pmatrix}
    \boldsymbol{\xi}_{{\delta}} \\
    \boldsymbol{\beta}\\
    \boldsymbol{\eta}
\end{pmatrix} - \boldsymbol{\tau}_{{\delta},y}\right\} \right. \notag \\ 
&\left.-\mathbf{b}'\boldsymbol{\psi}_{D}\left\{\begin{pmatrix}\mathbf{I}_{{K}n} & \mathbf{X}_{\delta} & \mathbf{G}_{\delta}\end{pmatrix}\begin{pmatrix}
    \boldsymbol{\xi}_{{\delta}} \\
    \boldsymbol{\beta}\\
    \boldsymbol{\eta}
\end{pmatrix} - \boldsymbol{\tau}_{{\delta},y}\right \} \right].
\end{align*}

\noindent The term \(N = v(\mathbf{z}_{-\delta}){\prod_{k =1}^K}\prod_{\{i:\delta_i =1\}}N_{k,i}\) and let \(a_i\) and \(b_i\) represent the \(i\)-th component of \(\mathbf{a}\) and \(\mathbf{b}\) respectively. When the \(i\)-th data point is logit-beta distributed, \(N_{1,i} = \frac{\Gamma(\kappa_i)}{\Gamma(\alpha_i)\Gamma(\kappa_i - \alpha_i)}\), \(a_i = -\alpha_z \sigma_i^2\), and \(b_i = \kappa_z\). When the \(i\)-th data point is Weibull distributed, \(N_{2,i} = \rho_z Z_{2,i}^{\rho_z - 1}\), \(a_i = 1\), and \(b_i = Z_{2,i}^{\rho_z}\). When the \(i\)-th data point is Gaussian distributed, \(N_{3,i} = \frac{\text{exp}(-Z_{3,i}^2/2\sigma_i^2)}{\sigma_i^2}\), \(a_i = \frac{Z_{3,i}}{\sigma_i^2}\), and \(b_i = \frac{1}{2\sigma_i^2}\). When the \(i\)-th data point is Poisson distributed \(N_{4,i} =  \frac{1}{Z_{4,i}!} \), \(a_i = Z_{4,i}\), and \(b_i = 1\). When the \(i\)-th data point is Binomial distributed \(N_{5,i} = \binom{m_i}{Z_{5,i}}\), \(a_i = Z_{5,i}\), and \(b_i = m_i\). The term \(\boldsymbol{\tau}_{{\delta}, y}\) is a function of \(\mathbf{q}\) and \(\boldsymbol{\theta}\), and \(\boldsymbol{\psi}_D(\cdot) = (\psi_{k,i}(\cdot): \delta_{i} = 1)^{\prime}\). The density \(f(\boldsymbol{\xi}, \boldsymbol{\beta}, \boldsymbol{\eta}, \mathbf{q}, \boldsymbol{\theta} | \mathbf{z}, \boldsymbol{\delta})\) is proportional to the product \vspace{-10pt}
\begin{align*}
f(\mathbf{z}|\boldsymbol{\xi}_{\delta}, \boldsymbol{\beta}, \boldsymbol{\eta}, \boldsymbol{\theta}, \mathbf{q}, \boldsymbol{\delta}) f(\boldsymbol{\xi}_{\delta} | \boldsymbol{\beta}, \boldsymbol{\eta}, \boldsymbol{\theta}, \mathbf{q}, \boldsymbol{\delta}) f(\boldsymbol{\beta}| \boldsymbol{\theta}, \mathbf{q},  {\boldsymbol{\delta}}) f(\boldsymbol{\eta}| \boldsymbol{\theta}, \mathbf{q}, {\boldsymbol{\delta}}) f(\mathbf{q}\vert \boldsymbol{\delta}) \pi(\boldsymbol{\theta}),
\end{align*}

\vspace{-10pt}
\noindent where \(\boldsymbol{\tau}_{\delta} = (\boldsymbol{\tau}_{{\delta},y}', \boldsymbol{\tau}_{\beta}', \boldsymbol{\tau}_{\eta}',\boldsymbol{\tau}_{{\delta},\xi}')' = -\mathbf{D}_{\delta}(\boldsymbol{\theta})^{-1}\boldsymbol{Q}_{\delta}\mathbf{q}\). {We have that the distribution of \(f(\boldsymbol{\xi}_{\delta}|\boldsymbol{\beta}, \boldsymbol{\eta}, \boldsymbol{\theta}, \mathbf{q},\boldsymbol{\delta})\) is proportional to the following cGCM,} 
\begin{align*}
    f(\boldsymbol{\xi}_{\delta}|\boldsymbol{\beta}, \boldsymbol{\eta}, \boldsymbol{\theta}, \mathbf{q},\boldsymbol{\delta}) &\propto \left(\frac{1}{2 \pi \sigma_{\xi}^2} 
 \right)^{{K}n/2} \text{exp} \left[\boldsymbol{\alpha}_{\xi}'\left\{ 
 \begin{pmatrix}
     \mathbf{I}_{{K}n} & \mathbf{X}_{\delta} & \mathbf{G}_{\delta}\\
     \frac{1}{\sigma_{\xi}}\mathbf{I}_{{K}n} & \mathbf{0}_{{K}n,p} & \mathbf{0}_{{K}n,r}
 \end{pmatrix} \begin{pmatrix}
     \boldsymbol{\xi}_{{\delta}} \\ \boldsymbol{\beta} \\ \boldsymbol{\eta} 
 \end{pmatrix} - \begin{pmatrix}
     \boldsymbol{\tau}_{{\delta}, y} \\ \boldsymbol{\tau}_{{\delta}, \xi}
 \end{pmatrix} \right\} \right.\\[1em]
 & \left. - \boldsymbol{\kappa}_{\xi}' \boldsymbol{\psi}_{D,\xi} \left\{ 
 \begin{pmatrix}
     \mathbf{I}_{{K}n} & \mathbf{X}_{\delta} & \mathbf{G}_{\delta}\\
     \frac{1}{\sigma_{\xi}}\mathbf{I}_{{K}n} & \mathbf{0}_{{K}n,p} & \mathbf{0}_{{K}n,r}
 \end{pmatrix} \begin{pmatrix}
     \boldsymbol{\xi} \\ \boldsymbol{\beta} \\ \boldsymbol{\eta} 
 \end{pmatrix} - \begin{pmatrix}
     \boldsymbol{\tau}_{{\delta},y} \\ \boldsymbol{\tau}_{{\delta},\xi}
 \end{pmatrix} \right\} \right].
\end{align*}
\noindent The term \(\boldsymbol{\alpha}_{\xi} = (\alpha_{\xi,1}, \dots \alpha_{\xi, {K}n}, \bm{0}_{1,{K}n})'\) and the term \(\boldsymbol{\kappa}_{\xi} = (\kappa_{\xi,1}, \dots, \kappa_{\xi,{K}n}, \frac{1}{2}\mathbf{1}_{1, {K}n})'\). When the \(i\)-th datum is logit-beta, Weibull, or Gaussian distributed \(\alpha_{\xi,i} = 0\) and \(\kappa_{\xi,i} = 0\). When the \(i\)-th datum is Poisson distributed \(\alpha_{\xi,i} = \alpha_{\xi}\) and \(\kappa_{\xi,i} = 0\). When the \(i\)-th datum is Binomial distributed \(\alpha_{\xi,i} = \alpha_{\xi}\) and \(\kappa_{\xi,i} = 2\alpha_{\xi}\). The terms \(\sigma_{\xi}^2\) and \(\alpha_{\xi}\) are known, and \(\boldsymbol{\tau}_{{\delta}, y}\) and \(\boldsymbol{\tau}_{{\delta}, \xi}\) are functions of \(\mathbf{q}\) and \(\boldsymbol{\theta}\). The function \(\boldsymbol{\psi}_{D, \xi}(\mathbf{h}_{D,\xi})= (\psi_{D}(\mathbf{h})', \psi_*(\mathbf{h}^*)')'\), where \(\mathbf{h}\) and \(\mathbf{h}^*\) are both \({K}n\)-dimensional vectors, \(\psi_D(h_i) = \{\psi_{k,i}(h_i): i:\delta_i =1; k = 1, \dots, K\}\), \(\psi_{*}(\mathbf{h}^*) = (\mathbf{h}^{*})^2 \), and \(\mathbf{h}_{D, \xi} = (\mathbf{h}', \mathbf{h}^{*}{'})'\) is a \(2{K}n\)-dimensional vector. Now multiplying the data model and the distribution for \(\boldsymbol{\xi}_{\delta}\) results in the following product: \vspace{-15pt}
 \begin{align*}
     f(\mathbf{z}|\boldsymbol{\xi}, \boldsymbol{\beta}, \boldsymbol{\eta}, &\boldsymbol{\theta}, \mathbf{q}, \boldsymbol{\delta})f(\boldsymbol{\xi}_{\delta}|\boldsymbol{\beta}, \boldsymbol{\eta}, \boldsymbol{\theta}, \mathbf{q},\boldsymbol{\delta}) \propto \\ &N \left(\frac{1}{2 \pi \sigma_{\xi}^2} 
 \right)^{{K}n/2} \text{exp} \left[\mathbf{a}_{Z}'\left\{ 
 \begin{pmatrix}
     \mathbf{I}_{{K}n} & \mathbf{X}_{\delta} & \mathbf{G}_{\delta}\\
     \frac{1}{\sigma_{\xi}}\mathbf{I}_{{K}n} & \mathbf{0}_{{K}n,p} & \mathbf{0}_{{K}n,r}
 \end{pmatrix} \begin{pmatrix}
     \boldsymbol{\xi}_{{\delta}} \\ \boldsymbol{\beta} \\ \boldsymbol{\eta} 
 \end{pmatrix} - \begin{pmatrix}
     \boldsymbol{\tau}_{{\delta},y} \\ \boldsymbol{\tau}_{{\delta},\xi}
 \end{pmatrix} \right\} \right.\\[1em]
 & \left. - \mathbf{b}_{Z}' \boldsymbol{\psi}_{D,\xi} \left\{ 
 \begin{pmatrix}
     \mathbf{I}_{{K}n} & \mathbf{X}_{\delta} & \mathbf{G}_{\delta}\\
     \frac{1}{\sigma_{\xi}}\mathbf{I}_{{K}n} & \mathbf{0}_{{K}n,p} & \mathbf{0}_{{K}n,r}
 \end{pmatrix} \begin{pmatrix}
     \boldsymbol{\xi}_{{\delta}} \\ \boldsymbol{\beta} \\ \boldsymbol{\eta} 
 \end{pmatrix} - \begin{pmatrix}
     \boldsymbol{\tau}_{{\delta}, y} \\ \boldsymbol{\tau}_{{\delta}, \xi}
 \end{pmatrix} \right\} \right].
 \end{align*}

\vspace{-10pt}
\noindent The term \(\mathbf{a}_{Z} = (a_{Z,1}, \dots, a_{Z,{K}n}, \mathbf{0}_{1,{K}n})^{\prime}\) and the term \(\mathbf{b}_{Z} = (b_{Z,1},\dots,  \mathbf{b}_{Z,{K}n}, \frac{1}{2}\mathbf{1}_{1,{K}n})^{\prime}\). When the \(i\)-th data point is logit-beta distributed, \(a_{Z,i} = -\alpha_z\sigma_i^2\) and \(b_{Z,i} = \kappa_z\). When the \(i\)-th data point is Weibull distributed,  \(a_{Z,i} = 1\) and  \(b_{Z,i} = Z_{2,i}^{\rho_z}\). When the \(i\)-th data point is Gaussian distributed \(a_{Z,i} = \frac{Z_{3,i}}{\sigma_i^2}\) and \(b_{Z,i} = \frac{1}{2\sigma_i^2}\). When the \(i\)-th data point is Poisson distributed \(a_{Z,i} = Z_{4,i} + \alpha_{\xi}\) and \(b_{Z,i} = 1\). When the \(i\)-th data point is binomial distributed,  \(a_{Z,i} = Z_{5,i} + \alpha_{\xi}\) and \(b_{Z,i} = m_i + 2\alpha_{\xi}\).

 Now multiplying the product of the data model and the distribution of {\(\boldsymbol{\xi}_{\delta}\)} by \\ \(f(\boldsymbol{\beta}| \boldsymbol{\theta}, \mathbf{q}, {\bm{\delta}}) f(\boldsymbol{\eta}| \boldsymbol{\theta}, \mathbf{q}, {\bm{\delta}}) f(\mathbf{q}\vert \boldsymbol{\bm{\delta}}) \pi(\boldsymbol{\theta})\) and stacking vector and matrices, results in the follow distribution:  \vspace{-20pt}
\begin{align*}
     f&({\boldsymbol{\xi}_{\delta}}, \boldsymbol{\beta}, \boldsymbol{\eta}, 
\mathbf{q} |\mathbf{z}, \boldsymbol{\delta}) \propto \\
& \frac{\pi(\boldsymbol{\theta})}{\text{det}\{\mathbf{D}_{\delta}(\boldsymbol{\theta})\}}\text{exp} \left[\boldsymbol{\alpha}_M'\left\{ 
 \begin{pmatrix}
     \mathbf{I}_{{K}n} & \mathbf{X}_{\delta} & \mathbf{G}_{\delta}\\
     \bm{0}_{p,{K}n} & \mathbf{D}_{\beta}(\boldsymbol{\theta})^{-1} & \bm{0}_{p,r} \\
     \bm{0}_{r,{K}n} & \bm{0}_{r,p} & \mathbf{D}_{\eta}(\boldsymbol{\theta})^{-1} \\
     \frac{1}{\sigma_{\xi}}\mathbf{I}_{{K}n} & \mathbf{0}_{{K}n,p} & \mathbf{0}_{{K}n,r}
 \end{pmatrix} \begin{pmatrix}
     \boldsymbol{\xi}_{{\xi}} \\ \boldsymbol{\beta} \\ \boldsymbol{\eta} 
 \end{pmatrix} - \begin{pmatrix}
      \boldsymbol{\tau}_{{\delta},y} \\ \boldsymbol{\tau}_{\beta} \\ \boldsymbol{\tau}_{\eta} \\\boldsymbol{\tau}_{{\delta}, \xi} 
 \end{pmatrix} \right\} \right.\\[1em]
 & \left. - \boldsymbol{\kappa}_M' \boldsymbol{\psi} \left\{ 
 \begin{pmatrix}
     \mathbf{I}_{{K}n} & \mathbf{X}_{\delta} & \mathbf{G}_{\delta}\\
     \bm{0}_{p,{K}n} & \mathbf{D}_{\beta}(\boldsymbol{\theta})^{-1} & \bm{0}_{p,r} \\
     \bm{0}_{r,{K}n} & \bm{0}_{r,p} & \mathbf{D}_{\eta}(\boldsymbol{\theta})^{-1} \\
     \frac{1}{\sigma_{\xi}}\mathbf{I}_{{K}n} & \mathbf{0}_{{K}n,p} & \mathbf{0}_{{K}n,r}
 \end{pmatrix} \begin{pmatrix}
     \boldsymbol{\xi}_{{\delta}} \\ \boldsymbol{\beta} \\ \boldsymbol{\eta} 
 \end{pmatrix} - \begin{pmatrix}
     \boldsymbol{\tau}_{{\delta},y} \\ \boldsymbol{\tau}_{\beta} \\ \boldsymbol{\tau}_{\eta} \\\boldsymbol{\tau}_{{\delta}, \xi} 
 \end{pmatrix} \right\} \right].
 \end{align*}
 Now, substituting \(\boldsymbol{\tau}_{{\delta}} = -\mathbf{D}_{\delta}(\boldsymbol{\theta})^{-1}\mathbf{Q}_{\delta}\mathbf{q}\), substituting \(\boldsymbol{\zeta} = ({\boldsymbol{\xi}_{\delta}'}, \boldsymbol{\beta}', \boldsymbol{\eta}')'\) and  integrating with respect to \(\boldsymbol{\theta}\) gives: 

\vspace{-40pt}
\begin{align*}
    f&({\boldsymbol{\xi}_{\delta}}, \boldsymbol{\beta}, \boldsymbol{\eta}, 
\mathbf{q} |\mathbf{z}, \boldsymbol{\delta}) \propto \\
&\int_{\Omega} \frac{\pi(\boldsymbol{\theta})}{\text{det}\{\mathbf{D}_{\delta}(\boldsymbol{\theta})\}} \text{exp}\left[\boldsymbol{\alpha}_M' \left\{ \mathbf{D}_{\delta}(\boldsymbol{\theta})^{-1}\begin{pmatrix}
    \mathbf{H}_{\delta} & \mathbf{Q}_{\delta} 
\end{pmatrix} \begin{pmatrix}
    \boldsymbol{\zeta} \\ \mathbf{q}
\end{pmatrix} \right \} - \boldsymbol{\kappa}_M'\boldsymbol{\psi} \left\{\mathbf{D}_{\delta}(\boldsymbol{\theta})^{-1}\begin{pmatrix}
    \mathbf{H}_{\delta} & \mathbf{Q}_{\delta} 
\end{pmatrix} \begin{pmatrix}
    \boldsymbol{\zeta} \\ \mathbf{q}
\end{pmatrix}  \right\}  \right] d \boldsymbol{\theta} \\
&\propto \text{GCM}(\boldsymbol{\alpha}_M, \boldsymbol{\kappa}_M, \bm{0}_{2{K}n + p + r, 1}, \mathbf{V}, \pi, \mathbf{D}_{\delta}; \boldsymbol{\psi}),
\end{align*}

\vspace{-20pt}
\noindent as defined in Section \ref{prelim_gcm} of the main text. The term \(\boldsymbol{\alpha}_M = (\alpha_{k,1}, \dots \alpha_{k,{K}n}, \bm{0}_{1, {K}n + p + r})'\) and the term \(\boldsymbol{\kappa}_M = (\kappa_{k,1}, \dots, \kappa_{k,{K}n}, \frac{1}{2}\mathbf{1}_{1, {K}n + p + r})'\). The terms \(\alpha_{k,i}\) and \(\kappa_{k,i}\) are defined in Table~\ref{param.table}. This completes the proof.
\end{proof} 


\begin{namedtheorem}[General Theorem 2]
    Replicates of $\mathbf{q}$, \(\boldsymbol{\delta}\), \(\boldsymbol{\beta}\), \(\boldsymbol{\eta}\), and \(\boldsymbol{\xi}_{\delta} = (\xi_{k,i}: \delta_{i} = 1,k = 1,\ldots, K)^{\prime}\) from \(f(\boldsymbol{\xi}_{\delta}, \boldsymbol{\beta}, \boldsymbol{\eta}, \mathbf{q}, \boldsymbol{\delta}|\mathbf{z})\) from {General Theorem 1} have the following property. \vspace{-20pt}
\begin{align}
\boldsymbol{\delta}_{rep} &\sim f(\boldsymbol{\delta}\vert n) \notag \\
 (\boldsymbol{\xi}_{\delta,rep}^{\prime}, \boldsymbol{\beta}_{rep}^{\prime}, \boldsymbol{\eta}_{rep}^{\prime} )^{\prime} &= (\mathbf{H}_{\delta}'\mathbf{H}_{\delta})^{-1}\mathbf{H}_{\delta}'\mathbf{w}_{rep}, \notag \\
    \mathbf{q}_{rep} &= \mathbf{Q}_{\delta}'\mathbf{w}_{rep},\label{equation11gen}
\end{align}
where the subscript ``rep'' represents a single replicate from the posterior distribution, \(\mathbf{w}_{rep} \equiv (\mathbf{y}_{\delta, rep}^{\prime}, \mathbf{w}_{\beta}^{\prime}, \mathbf{w}_{\eta}^{\prime}, \mathbf{w}_{\xi}^{\prime})^{\prime} \), \(\mathbf{y}_{\delta,rep}\) consists of independent DY random variables with \(i\)-th element corresponding to type \(k\) and \(\mathbf{H}_{\delta}\) is defined in (\ref{equation9.1.1}). The terms \(\mathbf{w}_{\beta}\), \(\mathbf{w}_{\eta}\), and \(\mathbf{w}_{\xi}\) are obtained by first sampling $\bm{\theta}^{*}$ from its respective prior distribution and then \(\mathbf{w}_{\xi}\) is sampled from a mean zero normal distribution with covariance $\sigma_{\xi}^{2*}\textbf{I}_{2Kn}(\bm{\theta}^{*})$ for $\sigma_{\xi}^{2*}\in \bm{\theta}^{*}$, \(\mathbf{w}_{\beta}\) is sampled from a mean zero normal distribution with covariance $\textbf{D}_{\beta}(\bm{\theta}^{*})\textbf{D}_{\beta}(\bm{\theta}^{*})^{\prime}$, and \(\mathbf{w}_{\eta}\) is sampled from a mean zero normal distribution with covariance $\textbf{D}_{\eta}(\bm{\theta}^{*})\textbf{D}_{\eta}(\bm{\theta}^{*})^{\prime}$. 
\label{sgcm.sampling}
\end{namedtheorem}
\begin{proof}
From Theorem 2.1 of \citep{bradley2024generating}, a GCM random vector is simulated via the transformation: \vspace{-10pt}
\[{\mathbf{h}} = \boldsymbol{\mu} + \boldsymbol{V}\mathbf{D}_{\delta}(\boldsymbol{\theta})\mathbf{w},\]

\vspace{-10pt}
\noindent where \(\mathbf{w}\) consists of DY random variables, \(\mu = \mathbf{0}_{2{K}n+p+r, 1}\), \(\mathbf{D}_{\delta}(\boldsymbol{\theta})\) is the block diagonal matrix defined in Theorem \ref{theorem1}, \({\mathbf{w}_{rep}} = \mathbf{D}_{\delta}(\boldsymbol{\theta})\mathbf{w}\), and {\(\boldsymbol{\alpha}_M\) and \(\boldsymbol{\kappa}_M\)} are the shape parameters defined in Theorem \ref{theorem1}. Recall Theorem \ref{theorem1} defines \(\mathbf{V}^{-1} = (\mathbf{H}_{\delta}, \mathbf{Q}_{\delta})\) and it is clear to see the term \(\mathbf{V}\) is given by: \vspace{-15pt}
\begin{align*}
 \mathbf{V} = (\mathbf{H}_{\delta}, \mathbf{Q}_\delta)^{-1}=
     \begin{pmatrix}
        (\mathbf{H}_{\delta}'\mathbf{H}_{\delta})^{-1}\mathbf{H}_{\delta}^{\prime} \\ \mathbf{Q}_{\delta}^{\prime}
    \end{pmatrix}.
\end{align*}

\vspace{-10pt}
\noindent Now, from the transformation, the posterior replicates of \(\boldsymbol{\xi}, \boldsymbol{\beta},\boldsymbol{\eta}\) have the form: 
\begin{align*}
    \begin{pmatrix}
        \boldsymbol{\xi}_{rep} \\
        \boldsymbol{\beta}_{rep} \\
        \boldsymbol{\eta}_{rep}
    \end{pmatrix} &= (\mathbf{H}_{\delta}'\mathbf{H}_{\delta})^{-1}\mathbf{H}_{\delta}^{\prime}{\mathbf{w}_{rep}},
\end{align*}
and $\textbf{q}_{rep} = \textbf{Q}_{\delta}^{\prime}{\textbf{w}_{rep}}$ with \({\mathbf{w}_{rep}} = (\mathbf{y}_{\delta, rep}^{\prime}, \mathbf{w}_{\beta}^{\prime}, \mathbf{w}_{\eta}^{\prime}, \mathbf{w}_{\xi}^{\prime})\).
\end{proof} 

\begin{namedtheorem}[General Theorem 3]
Let $\boldsymbol{\xi}_{\delta}$, $\boldsymbol{\beta}$, $\boldsymbol{\eta}$, $\textbf{q}$, and $\boldsymbol{\delta}$ follow the GCM model stated in {General Theorem 1}. Let $\textbf{y}_{\delta} = \textbf{X}_{\delta}\bm{\beta} + \textbf{G}_{\delta}\bm{\eta}+\bm{\xi}_{\delta}$, and let $\textbf{w}_{rep}\equiv (\textbf{y}_{\delta,rep}^{\prime},\textbf{w}_{\beta}^{\prime},\textbf{w}_{\eta}^{\prime},\textbf{w}_{\xi}^{\prime})^{\prime}$ as defined in {General Theorem 2}. Then, \vspace{-20pt}
\begin{equation*}
cov(\textbf{y}_{\delta},\boldsymbol{\tau}_{y}\vert \textbf{z},\boldsymbol{\delta}) = -\textbf{J}\textbf{H}_{\delta}(\textbf{H}_{\delta}^{\prime}\textbf{H}_{\delta})^{-1}\textbf{H}_{\delta}^{\prime}\text{cov}(\textbf{w}_{rep}\vert {\boldsymbol{\alpha}_M,\boldsymbol{\kappa}_M})\left\lbrace\textbf{I}_{2{K}n+p+r} - \textbf{H}_{\delta}(\textbf{H}_{\delta}^{\prime}\textbf{H}_{\delta})^{-1}\textbf{H}_{\delta}^{\prime}\right\rbrace \textbf{J}^{\prime},
\end{equation*}

\vspace{-15pt}
\noindent
where $\textbf{J} = (\textbf{I}_{{K}n},\bm{0}_{{K}n,p},\bm{0}_{Kn,r},\bm{0}_{{K}n,{K}n})$ and $\boldsymbol{\alpha}_M$ and $\boldsymbol{\kappa}_M$ are the same as those defined in {General Theorem 1}.
\label{sgcm.crosscov}
\end{namedtheorem}
\begin{proof}
Let $\boldsymbol{\zeta} = (\boldsymbol{\xi}_{{\delta}}^{\prime},\boldsymbol{\beta}^{\prime},\boldsymbol{\eta}^{\prime})^{\prime}$. We have that
\vspace{-20pt}
\begin{align*}
    \mathbf{y} &= \mathbf{J}\mathbf{H}_{\delta}\boldsymbol{\zeta}\\
    \boldsymbol{\tau}_{y}&=-\mathbf{J}\mathbf{Q}_{\delta}\mathbf{q}.
\end{align*}

\vspace{-15pt}
\noindent Then from General Theorem 2, and noting that \(\mathbf{Q}_{\delta}\mathbf{Q}_{\delta}^{\prime} = \mathbf{I}_{2{K}n+p+r} - \mathbf{H}_{\delta}(\mathbf{H}_{\delta}^{\prime}\mathbf{H}_{\delta})^{-1}\mathbf{H}_{\delta}^{\prime}\), we have the result.
\end{proof}

\noindent
\textbf{Proof of Theorem 4 stated in the main text:}\\
We start by showing that $m_{SUB}(\textbf{z},\bm{\delta}) = v(\textbf{z}_{-\delta})m_{FULL}(\textbf{z}_{\delta},\bm{\delta})$. We have $\prod_{k = 1}^{K}\prod_{i = 1}^{N}f(Z_{k,i}\vert \bm{\gamma})$ can be factorized as 
\begin{align}\label{eq:factorizes}
   \prod_{k = 1}^{K}\prod_{i = 1}^{N}f(Z_{k,i}\vert \bm{\gamma}) = \prod_{k = 1}^{K}\prod_{\{i: \delta_{i} = 1\}}f(Z_{k,i}\vert \bm{\gamma}) \prod_{k = 1}^{K}\prod_{\{i: \delta_{i} = 0\}}f(Z_{k,i}\vert \bm{\gamma}) ,
\end{align}
for any given $\bm{\delta}$. Thus, from Equation (\ref{eq:factorizes}) we have that,
\begin{align*}
    & m_{SUB}(\textbf{z},\bm{\delta})= v(\textbf{z}_{-\delta})\int  \prod_{k = 1}^{K}\prod_{\{i: \delta_{i} = 1\}}f(Z_{k,i}\vert \bm{\gamma})f(\bm{\gamma}\vert \bm{\delta})d\bm{\gamma} f(\bm{\delta}\vert n)\\
    &=v(\textbf{z}_{-\delta})\int \prod_{k = 1}^{K}\prod_{\{i: \delta_{i} = 0\}}f(Z_{k,i}\vert \bm{\gamma})d\textbf{z}_{-\delta}\int  \prod_{k = 1}^{K}\prod_{\{i: \delta_{i} = 1\}}f(Z_{k,i}\vert \bm{\gamma})f(\bm{\gamma}\vert \bm{\delta})d\bm{\gamma} f(\bm{\delta}\vert n)\\
    & = v(\textbf{z}_{-\delta})\int\int  \prod_{k = 1}^{K}\prod_{i = 1}^{N}f(Z_{k,i}\vert \bm{\gamma})f(\bm{\gamma}\vert \bm{\delta})d\bm{\gamma}d\textbf{z}_{-\delta} f(\bm{\delta}\vert n) \\
    & = v(\textbf{z}_{-\delta}) \int m_{FULL}(\textbf{z},\bm{\delta}) d\textbf{z}_{-\delta}\\
    & = v(\textbf{z}_{-\delta}) m_{FULL}(\textbf{z}_{\delta},\bm{\delta}),
\end{align*}
\noindent
which follows from the fact that $\int \prod_{k = 1}^{K}\prod_{\{i: \delta_{i} = 0\}}f(Z_{k,i}\vert \bm{\gamma})d\textbf{z}_{-\delta}=1$ and (\ref{eq:factorizes}).

We now prove Theorem 4(a). Let 
\begin{equation*}
    m(\textbf{z}\vert\bm{\delta}) = \int \prod_{k = 1}^{K}\prod_{i = 1}^{N}f(Z_{k,i}\vert \bm{\gamma})f(\bm{\gamma}\vert \bm{\delta})d\bm{\gamma},
\end{equation*}
\noindent so that $m_{FULL}(\textbf{z},\bm{\delta}) = m(\textbf{z}\vert\bm{\delta})f(\bm{\delta}\vert n)$ and $m_{SUB}(\textbf{z},\bm{\delta}) = v(\textbf{z}_{-\delta})m(\textbf{z}_{\delta}\vert \bm{\delta})f(\bm{\delta}\vert n)$ for $m(\textbf{z}_{\delta}\vert \bm{\delta}) = \int m(\textbf{z}\vert \bm{\delta}) d\textbf{z}_{-\delta}$. We have that
\begin{align*}
&KL\{m_{TRUE}(\textbf{z},\bm{\delta}) \hspace{2pt}||\hspace{2pt}m_{SUB}(\textbf{z},\bm{\delta})\}\\
&= \sum_{\delta}\int v(\textbf{z})f(\bm{\delta}\vert n) \text{log}\left(\frac{v(\textbf{z})}{v(\textbf{z}_{-\delta})m(\textbf{z}_{\delta}\vert \bm{\delta})}\right)d\textbf{z}\\
&= \sum_{\delta}\int v(\textbf{z})f(\bm{\delta}\vert n) \text{log}\left(\frac{v(\textbf{z}_{-\delta})v(\textbf{z}_{\delta}\vert \textbf{z}_{-\delta})}{v(\textbf{z}_{-\delta})m(\textbf{z}_{\delta}\vert \bm{\delta})}\right)d\textbf{z}\\
& = \sum_{\delta}\int v(\textbf{z})f(\bm{\delta}\vert n) \text{log}\left(\frac{v(\textbf{z}_{\delta}\vert \textbf{z}_{-\delta})}{m(\textbf{z}_{\delta}\vert \bm{\delta})}\right)d\textbf{z},
\end{align*}
\noindent where $v(\textbf{z}_{\delta}\vert \textbf{z}_{-\delta}) = v(\textbf{z})/v(\textbf{z}_{-\delta})$. We also have that,
\begin{align*}
&KL\{m_{TRUE}(\textbf{z},\bm{\delta}) \hspace{2pt}||\hspace{2pt}m_{FULL}(\textbf{z},\bm{\delta})\}\\
&= \sum_{\delta}\int v(\textbf{z})f(\bm{\delta}\vert n) \text{log}\left(\frac{v(\textbf{z})}{m(\textbf{z}\vert \bm{\delta})}\right)d\textbf{z}\\
&= \sum_{\delta}\int v(\textbf{z})f(\bm{\delta}\vert n) \text{log}\left(\frac{v(\textbf{z}_{-\delta})v(\textbf{z}_{\delta}\vert \textbf{z}_{-\delta})}{m(\textbf{z}_{\delta}\vert \bm{\delta})m(\textbf{z}_{-\delta}\vert \textbf{z}_{\delta},\bm{\delta})}\right)d\textbf{z}\\
& = \sum_{\delta}\int v(\textbf{z})f(\bm{\delta}\vert n) \text{log}\left(\frac{v(\textbf{z}_{-\delta})}{m(\textbf{z}_{-\delta}\vert \textbf{z}_{\delta},\bm{\delta})}\right)d\textbf{z}+ \sum_{\delta}\int v(\textbf{z})f(\bm{\delta}\vert n) \text{log}\left(\frac{v(\textbf{z}_{\delta}\vert \textbf{z}_{-\delta})}{m(\textbf{z}_{\delta}\vert \bm{\delta})}\right)d\textbf{z}\\
&=\sum_{\delta}\int v(\textbf{z})f(\bm{\delta}\vert n) \text{log}\left(\frac{v(\textbf{z}_{-\delta})m(\textbf{z}_{\delta}\vert \bm{\delta})}{m(\textbf{z}\vert \bm{\delta})}\right)d\textbf{z}+KL\{m_{TRUE}(\textbf{z},\bm{\delta}) \hspace{2pt}||\hspace{2pt}m_{SUB}(\textbf{z},\bm{\delta})\},
\end{align*}
\noindent
where $m(\textbf{z}_{-\delta}\vert \textbf{z}_{\delta},\bm{\delta}) = m(\textbf{z}\vert \bm{\delta})/m(\textbf{z}_{\delta}\vert \bm{\delta})$. It follows that \\
$KL\{m_{TRUE}(\textbf{z},\bm{\delta}) \hspace{2pt}||\hspace{2pt}m_{FULL}(\textbf{z},\bm{\delta})\}\ge KL\{m_{TRUE}(\textbf{z},\bm{\delta}) \hspace{2pt}||\hspace{2pt}m_{SUB}(\textbf{z},\bm{\delta})\}$ provided that
\begin{equation}
\label{eq:difficultpiece}
\sum_{\delta}\int v(\textbf{z})f(\bm{\delta}\vert n) \text{log}\left(\frac{v(\textbf{z}_{-\delta})m(\textbf{z}_{\delta}\vert \bm{\delta})}{m(\textbf{z}\vert \bm{\delta})}\right)d\textbf{z}\ge 0.
\end{equation}
\noindent
For a given $\bm{\delta}$, let ${A}_{\delta} = \{\textbf{z}:v(\textbf{z})\ge v(\textbf{z}_{-\delta})m(\textbf{z}_{\delta}\vert \bm{\delta})\}$ with set complement ${A}_{\delta}^{c} = \left\lbrace\textbf{z}:\frac{v(\textbf{z}_{-\delta})m(\textbf{z}_{\delta}\vert \bm{\delta})}{v(\textbf{z})}> 1\right\rbrace$, where the probability of these sets are denoted $P(A_{\delta})$ and $P(A_{\delta}^{c})$, respectively. Also, let $I(\textbf{z} \in A_{\delta})$ equal 1 when $\textbf{z} \in A_{\delta}$ and zero otherwise. Assume $P(A_{\delta}) \in (0,1)$. Writing (\ref{eq:difficultpiece}) as the sum of two indefinite integrals,
\begin{align*}
&\sum_{\delta}\int v(\textbf{z})f(\bm{\delta}\vert n) \text{log}\left(\frac{v(\textbf{z}_{-\delta})m(\textbf{z}_{\delta}\vert\bm{\delta})}{m(\textbf{z}\vert\bm{\delta})}\right)d\textbf{z}\\
&=\sum_{\delta}\int_{A_{\delta}} v(\textbf{z})f(\bm{\delta}\vert n) \text{log}\left(\frac{v(\textbf{z}_{-\delta})m(\textbf{z}_{\delta}\vert\bm{\delta})}{m(\textbf{z}\vert\bm{\delta})}\right)d\textbf{z}+\sum_{\delta}\int_{A_{\delta}^{c}} v(\textbf{z})f(\bm{\delta}\vert n) \text{log}\left(\frac{v(\textbf{z}_{-\delta})m(\textbf{z}_{\delta}\vert\bm{\delta})}{m(\textbf{z}\vert\bm{\delta})}\right)d\textbf{z}\\
&=\sum_{\delta}P(A_{\delta})\int \frac{v(\textbf{z})I(\textbf{z}\in A_{\delta})}{P(A_{\delta})}f(\bm{\delta}\vert n) \text{log}\left(\frac{v(\textbf{z}_{-\delta})m(\textbf{z}_{\delta}\vert\bm{\delta})}{m(\textbf{z}\vert\bm{\delta})}\right)d\textbf{z}\\
&\hspace{50pt}+\sum_{\delta}\int_{A_{\delta}^{c}} v(\textbf{z})f(\bm{\delta}\vert n) \text{log}\left(\frac{v(\textbf{z}_{-\delta})m(\textbf{z}_{\delta}\vert\bm{\delta})}{m(\textbf{z}\vert\bm{\delta})}\right)d\textbf{z}\\
&\ge \sum_{\delta}P(A_{\delta})\int \frac{v(\textbf{z}_{-\delta})m(\textbf{z}_{\delta}\vert\bm{\delta})I(\textbf{z}\in A_{\delta})}{P(A_{\delta})}f(\bm{\delta}\vert n) \text{log}\left(\frac{v(\textbf{z}_{-\delta})m(\textbf{z}_{\delta}\vert\bm{\delta})}{m(\textbf{z}\vert\bm{\delta})}\right)d\textbf{z}\\
&\hspace{50pt}+\sum_{\delta}\int_{A_{\delta}^{c}} v(\textbf{z})f(\bm{\delta}\vert n) \text{log}\left(\frac{v(\textbf{z}_{-\delta})m(\textbf{z}_{\delta}\vert\bm{\delta})}{m(\textbf{z}\vert\bm{\delta})}\right)d\textbf{z},
\end{align*}
\noindent where the inequality holds by definition of $A_{\delta}$. It follows that 
\begin{align}
\nonumber
&\sum_{\delta}P(A_{\delta})\int \frac{v(\textbf{z}_{-\delta})m(\textbf{z}_{\delta}\vert\bm{\delta})I(\textbf{z}\in A_{\delta})}{P(A_{\delta})}f(\bm{\delta}\vert n) \text{log}\left(\frac{v(\textbf{z}_{-\delta})m(\textbf{z}_{\delta}\vert\bm{\delta})}{m(\textbf{z}\vert\bm{\delta})}\right)d\textbf{z}\\
\nonumber
&\hspace{50pt}+\sum_{\delta}\int_{A_{\delta}^{c}} v(\textbf{z})f(\bm{\delta}\vert n) \text{log}\left(\frac{v(\textbf{z}_{-\delta})m(\textbf{z}_{\delta}\vert\bm{\delta})}{m(\textbf{z}\vert\bm{\delta})}\right)d\textbf{z}\\
\label{eq:onecase}
&=\sum_{\delta}P(A_{\delta})f(\bm{\delta}\vert n)KL\left\{\frac{v(\textbf{z}_{-\delta})m(\textbf{z}_{\delta}\vert\bm{\delta})I(\textbf{z}\in A_{\delta})}{P(A_{\delta})}\hspace{2pt}||\hspace{2pt}\frac{m(\textbf{z}\vert\bm{\delta})I(\textbf{z}\in A_{\delta})}{P(A_{\delta})}\right\}\\
\nonumber
&\hspace{50pt}+\sum_{\delta}\int_{A_{\delta}^{c}} v(\textbf{z})f(\bm{\delta}\vert n) \text{log}\left(\frac{v(\textbf{z}_{-\delta})m(\textbf{z}_{\delta}\vert\bm{\delta})}{m(\textbf{z}\vert\bm{\delta})}\right)d\textbf{z},
\end{align}
and since the KL is bounded below by zero $\sum_{\delta}P(A_{\delta})f(\bm{\delta}\vert n)KL\left\{\frac{v(\textbf{z}_{-\delta})m(\textbf{z}_{\delta}\vert\bm{\delta})I(\textbf{z}\in A_{\delta})}{P(A_{\delta})}\hspace{2pt}||\hspace{2pt}\frac{m(\textbf{z}\vert\bm{\delta})I(\textbf{z}\in A_{\delta})}{P(A_{\delta})}\right\}\ge 0$. We are left to show that $\sum_{\delta}\int_{A_{\delta}^{c}} v(\textbf{z})f(\bm{\delta}\vert n) \text{log}\left(\frac{v(\textbf{z}_{-\delta})m(\textbf{z}_{\delta}\vert\bm{\delta})}{m(\textbf{z}\vert\bm{\delta})}\right)d\textbf{z}\ge 0$. Then,
\begin{align}
\nonumber
&\sum_{\delta}\int_{A_{\delta}^{c}} v(\textbf{z})f(\bm{\delta}\vert n) \text{log}\left(\frac{v(\textbf{z}_{-\delta})m(\textbf{z}_{\delta}\vert\bm{\delta})}{m(\textbf{z}\vert\bm{\delta})}\right)d\textbf{z}= \sum_{\delta}\int_{A_{\delta}^{c}} v(\textbf{z})f(\bm{\delta}\vert n) \text{log}\left(\frac{v(\textbf{z}_{-\delta})m(\textbf{z}_{\delta}\vert\bm{\delta})/v(\textbf{z})}{m(\textbf{z}\vert\bm{\delta})/v(\textbf{z})}\right)d\textbf{z}\\
\nonumber
&=\sum_{\delta}\int_{A_{\delta}^{c}} v(\textbf{z})f(\bm{\delta}\vert n) \text{log}\left(\frac{v(\textbf{z}_{-\delta})m(\textbf{z}_{\delta}\vert\bm{\delta})}{v(\textbf{z})}\right)d\textbf{z} + \sum_{\delta}\int_{A_{\delta}^{c}} v(\textbf{z})f(\bm{\delta}\vert n) \text{log}\left(\frac{v(\textbf{z})}{m(\textbf{z}\vert\bm{\delta})}\right)d\textbf{z}\\
\label{eq:zerocase}
&\ge \sum_{\delta}\int_{A_{\delta}^{c}} v(\textbf{z})f(\bm{\delta}\vert n) \text{log}\left(\frac{v(\textbf{z}_{-\delta})m(\textbf{z}_{\delta}\vert\bm{\delta})}{v(\textbf{z})}\right)d\textbf{z} + \sum_{\delta}P(A_{\delta}^{c})f(\bm{\delta}\vert n)KL\{\frac{v(\textbf{z})I(\textbf{z}\in A_{\delta}^{c})}{P(A_{\delta}^{c})}\hspace{2pt}||\hspace{2pt}\frac{m(\textbf{z}\vert\bm{\delta})I(\textbf{z}\in A_{\delta}^{c})}{P(A_{\delta}^{c})}\},
\end{align}
where for every $\textbf{z}\in A_{\delta}^{c}$, we have $\text{log}\left(\frac{v(\textbf{z}_{-\delta})m(\textbf{z}_{\delta}\vert\bm{\delta})}{v(\textbf{z})}\right)>0$ and hence the first term on the right-hand-side of the inequality is strictly positive as the integrand is strictly positive. The second term is also greater than or equal to zero, since $KL\{\frac{v(\textbf{z})I(\textbf{z}\in A_{\delta}^{c})}{P(A_{\delta}^{c})}\hspace{2pt}||\hspace{2pt}\frac{m(\textbf{z}\vert\bm{\delta})I(\textbf{z}\in A_{\delta}^{c})}{P(A_{\delta}^{c})}\}\ge 0$. 

When $P(A_{\delta}) = 0$, we have that 
\begin{equation*}
\sum_{\delta}\int v(\textbf{z})f(\bm{\delta}\vert n) \text{log}\left(\frac{v(\textbf{z}_{-\delta})m(\textbf{z}_{\delta}\vert\bm{\delta})}{m(\textbf{z}\vert\bm{\delta})}\right)d\textbf{z}=\sum_{\delta}\int_{A_{\delta}^{c}} v(\textbf{z})f(\bm{\delta}\vert n) \text{log}\left(\frac{v(\textbf{z}_{-\delta})m(\textbf{z}_{\delta}\vert\bm{\delta})}{m(\textbf{z}\vert\bm{\delta})}\right)d\textbf{z},
\end{equation*}
and the result follows from (\ref{eq:zerocase}). When  $P(A_{\delta}) = 1$, similar to (\ref{eq:onecase}),
\begin{align*}
&\sum_{\delta}\int v(\textbf{z})f(\bm{\delta}\vert n) \text{log}\left(\frac{v(\textbf{z}_{-\delta})m(\textbf{z}_{\delta}\vert\bm{\delta})}{m(\textbf{z}\vert\bm{\delta})}\right)d\textbf{z}=\sum_{\delta}\int_{A_{\delta}} v(\textbf{z})f(\bm{\delta}\vert n) \text{log}\left(\frac{v(\textbf{z}_{-\delta})m(\textbf{z}_{\delta}\vert\bm{\delta})}{m(\textbf{z}\vert\bm{\delta})}\right)d\textbf{z}\\
&\ge\sum_{\delta}P(A_{\delta})f(\bm{\delta}\vert n)KL\{\frac{v(\textbf{z}_{-\delta})m(\textbf{z}_{\delta}\vert\bm{\delta})I(\textbf{z}\in A_{\delta})}{P(A_{\delta})}\hspace{2pt}||\hspace{2pt}\frac{m(\textbf{z}\vert\bm{\delta})I(\textbf{z}\in A_{\delta})}{P(A_{\delta})}\},
\end{align*}
\noindent and the result again follows from the fact that KL divergence is non-negative.
\vspace{-20pt}
\section{SM-EPR Algorithm}\label{appen:algo}
\noindent One can obtain independent replicates from the posterior distribution of \(\boldsymbol{\xi}, \boldsymbol{\beta}, \boldsymbol{\eta}, \boldsymbol{\delta} \vert \mathbf{z}\) using General Theorem 2 and a composite sampler. 

\begin{algorithm}[H]
\caption{}
\begin{algorithmic}[1] 
\setcounter{ALG@line}{-1}
\State Set \(t = 1\) and sample \(\bm{\delta}^{[t]} = \left(\delta_1^{[t]}, \dots, \delta_N^{[t]}\right)'\) from \(f(\bm{\delta} | n)\).
\State  Sample \(\mathbf{y}_{\delta, rep}^{[t]} = \left(y_{\delta, rep,1}^{[t]}, \dots y_{\delta, rep, n}^{[t]}\right)'\).
    \begin{itemize}
    \item When \(k=1\), \(\alpha_{y} = 1\) and \(\kappa_{y} = 2\alpha_{y}\). Simulate the \(n\)-dimensional vector \(\mathbf{r}_{y}^{[t]}\) from \(\text{Beta}(\alpha_{y}, \kappa_{y}- \alpha_{y})\) and \(\boldsymbol{\sigma}^{2*[t]}\) from \(\pi(\boldsymbol{\sigma}^{2})\). Then compute \(\mathbf{y}_{\delta, rep}^{[t]} = \mathbf{z}_{\delta} + \boldsymbol{\sigma}^* \text{log}\left(\frac{\mathbf{r}_{y}^{[t]}}{\bm{1}_{n} -\mathbf{r}_{y}^{[t]}}\right)\).
    \item When \(k=2\), \(\alpha_y = 1\) and \(\kappa_y = \mathbf{z}_{\delta}^{\rho_z[t]}\). Simulate \(\rho_z^{[t]}\) from \(\pi(\rho_z)\) and then simulate \(n\)-dimensional vector \(\mathbf{r}_{y}^{[t]}\) from \(\text{Gamma}(\alpha_{y}, \kappa_{y})\). Then compute \(\mathbf{y}_{\delta, rep}^{[t]} = \text{log}\left(r_{y}^{[t]}\right)\).
    \item When \(k=3\),  \(\mathbf{y}_{\delta,rep}^{[t]} \sim \text{Normal} (\mathbf{z}_{\delta}^{[t]}, \boldsymbol{\sigma}^{2*[t]})\), where \(\boldsymbol{\sigma}^{2*[t]}\) is a \(n\)-dimensional vector sampled from \(\pi(\boldsymbol{\sigma}^2)\).
    \item When \(k=4\), \(\alpha_{y} = \mathbf{z}_{\delta}^{[t]} + \alpha_{\xi}\) and \(\kappa_{y} = \bm{1}_{n}\). Simulate the \(n\)-dimensional vector \(\mathbf{r}_{y}^{[t]}\) from \(\text{Gamma}(\alpha_{y}, \kappa_{y})\) and then compute \(\mathbf{y}_{\delta, rep}^{[t]} = \text{log}\left(r_{y}^{[t]}\right)\).
    \item When \(k = 5\), \(\alpha_{y} = \mathbf{z}_{\delta}^{[t]} + \alpha_{\xi}\) and \(\kappa_{y} = \bm{1}_n + 2 \alpha_{\xi}\). Simulate the \(n\)-dimensional vector \(\mathbf{r}_{y}^{[t]}\) from \(\text{Beta}(\alpha_{y}, \kappa_{y} - \alpha_{y})\) and then compute \(\mathbf{y}_{\delta,rep}^{[t]} = \text{log}\left(\frac{\mathbf{r}_{y}^{[t]}}{\bm{1}_{n} -\mathbf{r}_{y}^{[t]}}\right)\).
    \end{itemize}
\State Sample \(\boldsymbol{\theta}^{*[t]}\) from \(\pi(\boldsymbol{\theta})\), where \(\sigma_{\xi}^{2*} > 0 \) is an element of \(\boldsymbol{\theta}^{*}\).
\State Sample \(\mathbf{w}_{\beta}^{[t]}\) from \(N(\bm{0}_{p,1}, \boldsymbol{D}_{\beta}(\boldsymbol{\theta}^{*[t]})\boldsymbol{D}_{\beta}(\boldsymbol{\theta}^{*[t]})')\).
\State Sample \(\mathbf{w}_{\eta}^{[t]}\) from \(N(\bm{0}_{r,1}, \boldsymbol{D}_{\eta}(\boldsymbol{\theta}^{*[t]})\boldsymbol{D}_{\eta}(\boldsymbol{\theta}^{*[t]})')\).
\State Sample \(\mathbf{w}_{\xi}^{[t]}\) from \(N(\bm{0}_{n,1}, \sigma_{\xi}^{2*[t]}\mathbf{I}_n).\)
\State Compute \(\boldsymbol{\xi}_{ rep}^{[t]}\), \(\boldsymbol{\beta}_{rep}^{[t]}\), and \(\boldsymbol{\eta}_{rep}^{[t]}\) using \(\mathbf{y}_{\delta,rep}^{[t]}\), \(\mathbf{w}_{\beta}^{[t]}\), \(\mathbf{w}_{\eta}^{[t]}\), and \(\mathbf{w}_{\xi}^{[t]}\) efficiently via Equation (\ref{equation11gen}) from General Theorem 2 and block matrix inversion formulas. 
\State Set \(t = t + 1\).
\State Repeat steps 1-7 until \(t = T\).
\State Let \(\boldsymbol{\beta}^{[1:T]} =  \{\boldsymbol{\beta}^{[t]}: t = 1,\dots,T \}\), \(\boldsymbol{\eta}^{[1:T]} =  \{\boldsymbol{\eta}^{[t]}: t = 1,\dots,T \}\), \(\boldsymbol{\xi}^{[1:T]} =  \{\boldsymbol{\xi}^{[t]}: t = 1,\dots,T \}\),  \(\boldsymbol{\delta}^{[1:T]} =  \{\boldsymbol{\delta}^{[t]}: t = 1,\dots,T \}\), \({\mathbf{Y}}^{[1:T]} = \mathbf{X}\boldsymbol{\beta}^{[1:T]} + \mathbf{G}\boldsymbol{\eta}^{[1:T]} \).
\State Compute summaries of \(\boldsymbol{\beta}^{[1:T]}\), \(\boldsymbol{\eta}^{[1:T]}\), and  \({\mathbf{Y}}^{[1:T]}\) (e.g. row means, variances, quantiles, etc.).
\end{algorithmic}
\label{algorithm2}
\end{algorithm}
This algorithm provides \(T\) independent replicates of \(\boldsymbol{\xi}\), \(\boldsymbol{\beta}\), \(\boldsymbol{\eta}\), \(\boldsymbol{\delta}\), and \({y}_i \equiv \mathbf{x}_i'\boldsymbol{\beta} + \mathbf{g}_i'\boldsymbol{\eta}\), where the \(j\)-th sampled replicates are defined by the columns of the matrices \(\boldsymbol{\xi}^{[1:T]}\), \(\boldsymbol{\beta}^{[1:T]}\), \(\boldsymbol{\eta}^{[1:T]}\), \(\boldsymbol{\delta}^{[1:T]}\), \({\mathbf{Y}}^{[1:T]}\) defined in Step 9. One can use summaries of \({\mathbf{Y}}^{[1:T]}\), \(\boldsymbol{\beta}^{[1:T]}\), and \(\boldsymbol{\eta}^{[1:T]}\), to make predictions at unobserved locations, inference on fixed effects, and inference on random effects respectively.  

\section{Simulation Study: Comparing to MCMC in a Small Sample Case}\label{appen:sim_study_mcmc}
{The results for the bivariate model fit with SM-EPR with subset size \(n = 400\), M-EPR, INLA, and Stan are presented in Table \ref{sim_n_1000_biv}. The subset size was chosen based on the elbow plots of the MSPE for each variable in Figure \ref{small_elbow}.} The bivariate model fit with SM-EPR and M-EPR had a smaller CPU time than the model fit with INLA and Stan. The CPU time for SM-EPR was slightly larger than the CPU time of M-EPR for this small subset size \(n = 400\) due to SM-EPR requiring repeated matrix inversions (a pattern that didn't arise in the large \(N\) case). This difference in CPU time between SM-EPR and M-EPR suggests that subsampling is unnecessary for small datasets. For the MSPE and the MSE, all four computational approaches performed similar, as indicated by the overlapping confidence intervals. SM-EPR and M-EPR outperformed INLA and Stan in terms of MSE. A notable difference is the significantly larger CPU time for Stan compared to the other three approaches. As the size of the data becomes much larger, Stan will become computationally burdensome, and consequently we exclude the comparison to Stan in the higher-dimensional study in Section \ref{sim_study} of the main text. 
\begin{table}[H]
\caption{Evaluation metrics for models fit using SM-EPR with subset size \(n = 400\), M-EPR, INLA, and Stan for each type of data for \(M = 1,000\).}
\label{sim_n_1000_biv}
\begin{center}
\begin{tabular}{ccccc}
\toprule
Approach & CPU & MSPE & MSE & CRPS \\ \midrule
\multicolumn{5}{c}{Weibull Response Data} \\ \midrule
SM-EPR  & 0.9970 & 0.1212 & 0.1765 & 0.1876 \\ 
 & (0.8466, 1.1473) & (0.0955, 0.1469) & (0.1503, 0.2027) & (0.1596, 0.2156) \\ \midrule
M-EPR  & 0.8372 & 0.1207 &  0.1796 & 0.1971 \\ 
 & (0.6612, 1.0131) & (0.0951, 0.1463) & ( 0.1452, 0.2140) & (0.1636, 0.2307) \\ \midrule
 INLA   & 41.2266 & 0.1128 & 36.7173 & 0.1905 \\ 
 & (38.8735 43.5796) & (0.0973, 0.1283) & (3.4560, 69.9786) & (0.1703, 0.2107) \\ \midrule  
 MCMC  & 1095.444 & 0.1140 &  1.0788 & 0.1925 \\ 
 & (1055.349, 1135.540) & (0.0945, 0.1335) & (0.8486, 1.3089) & (0.1642, 0.2208) \\ \midrule 
 \multicolumn{5}{c}{Logistic Response Data} \\  \midrule
 SM-EPR & 0.9970 & 0.1718 &  0.1765 & 0.2892 \\ 
 & (0.8466, 1.1473) & ( 0.1506, 0.1930) & (0.1503, 0.2027) & (0.2383, 0.3402) \\ \midrule
 M-EPR   & 0.8372 & 0.1652 &  0.1796 & 0.2573 \\ 
 & (0.6612, 1.0131) & (0.1509, 0.1795) & (0.1452, 0.2140) & (0.2089, 0.3058) \\ \midrule
 INLA  & 41.2266 & 0.1546 & 36.7173 & 0.2538 \\ 
 & (38.8735 43.5796) & (0.1480, 0.1613) & (3.4560, 69.9786) & (0.2435, 0.2641) \\ \midrule
 MCMC  & 1095.444 &  0.1569 &  1.0788 & 0.2460 \\ 
 & (1055.349, 1135.540) & (0.1483, 0.1655) & (0.8486, 1.3089) & (0.2279, 0.2642) \\
 \bottomrule   
\end{tabular}
\end{center}
\begin{flushleft}
\textit{Note}: The first column presents the name of the computational approach, the second column displays the average CPU time measured in seconds, the third column presents the mean square prediction error, \(\text{MSPE} = \frac{1}{N} \sum_{i = 1}^N (Y_i - \text{E}[{Y}_i \vert \textbf{z}])^2 \), the fourth column contains the average MSE and the fifth column contains the average CRPS. All averages are taken over 50 replicates along with plus or minus two standard deviations. The MSPE was calculated on the log-scale for the Weibull setting. 
\end{flushleft}
\end{table}

\begin{figure}[H]
\centering
  \centering
  \includegraphics[width=0.75\linewidth]{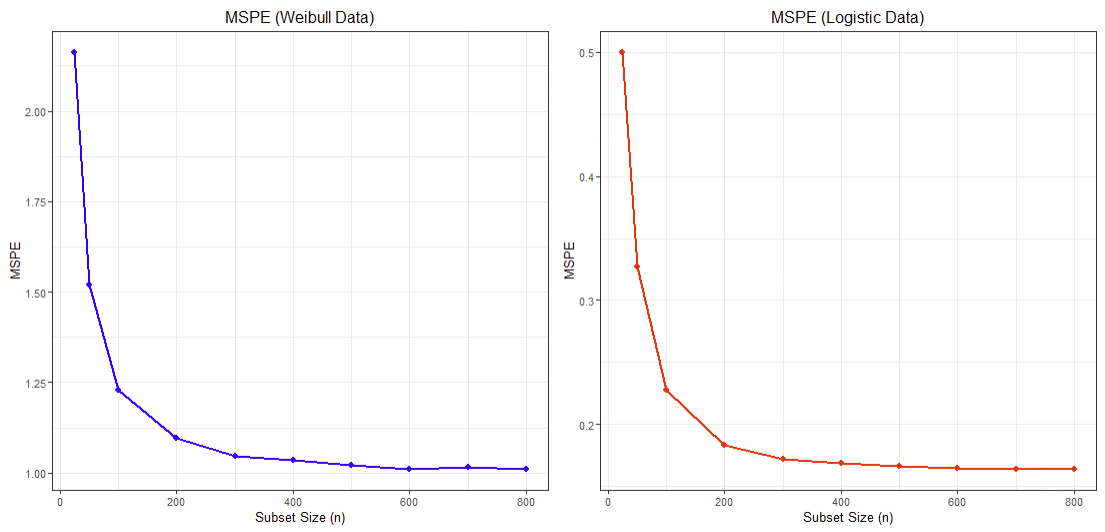}  
\caption{MSPE over various subset sizes (n).}
\label{small_elbow}
\end{figure}

\section{Additional Univariate Spatial Simulations}\label{appen:addsims}
\subsection{Univariate Simulation Setup}
In the main text, we provide two simulation studies for a multi-type response scenario with Weibull distributed data and logistic distributed data. Other distributions of interest are Gaussian, Poisson, and Bernoulli. To illustrate the computational advantages of our method in these more standard cases, we present additional simulation results. The first analysis will be the univariate spatial scenario (i.e., K = 1) for each data type with \(M =1,000\). We assume that 20$\%$ is missing. This dataset size was used to compare the fit of a correctly specified model using INLA and Stan to the fit of a discrepancy model using EPR and S-EPR. Recall from the main text, in the univariate spatial scenario, we refer to SM-EPR as Scalable Exact Posterior Regression (S-EPR). The second analysis will be the univariate spatial scenario for each data type with \(M = 60,000\). This dataset size was used to compare the fit of a correctly specified model using INLA with the discrepancy model fit using EPR and S-EPR (Stan excluded for computational reasons). The univariate data for both analyses will be simulated using the following distributions:
\begin{align*}
    &Z_3(s_i)|\eta_1, \dots \eta_{30}, \xi(s_i) \sim \text{Normal}\left(2.5 - \frac{1}{2}x_1(s_i) - 2x_2(s_i) + \sum_{j = 1}^{30}g_j(s_i)\eta_{j} + \xi(s_i), 0.25 \right), \\[1em]
    &Z_4(s_i)|\eta_1, \dots \eta_{30}, \xi(s_i) \sim \text{Poisson}\left(\text{exp}\left\{-1 - 0.4x_1(s_i) - 1.2x_2(s_i) + \sum_{j = 1}^{30}g_j(s_i)\eta_{j} + \xi(s_i) \right\} \right),  \\[1em]
    &Z_5(s_i)|\eta_1, \dots \eta_{30}, \xi(s_i) \sim \text{Bernoulli} \left(\frac{\text{exp}\left\{-5 + x_1(s_i) - x_2(s_i) + \sum_{j = 1}^{30}g_j(s_i)\eta_{j} + \xi(s_i) \right\}}{1 + \text{exp}\left\{-5 + x_1(s_i) - x_2(s_i) + \sum_{j = 1}^{30}g_j(s_i)\eta_{j} + \xi(s_i) \right\}} \right),
\end{align*}
\noindent for \(i = 1, \dots, M\) and \(\mathbf{s} \in \{1, 2, \dots, M\}\) is the one-dimensional spatial domain. We observe \(N = 0.8\times M\) randomly selected locations out of \(M\). We simulate the Bernoulli distributed covariate \(x_1(s_i)\) with probability \(\text{exp}(\frac{1}{M} s_i)/ (1 + \text{exp}(\frac{1}{M} s_i))\) and the Bernoulli distributed covariate \(x_2(s_i)\) with probability \(\text{exp}(\frac{-0.01}{M} s_i)/ (1 + \text{exp}(\frac{-0.01}{M} s_i))\). When \(k = 3\), \(\{\eta_j\}\) are simulated from a normal distribution with mean zero and variance \(0.81\) and \(\xi(s_i)\) are simulated from a normal distribution with mean zero and variance \(0.07\). When \(k = 4\), \(\{\eta_j\}\) are simulated from a normal distribution with mean \(0.2\) and variance \(0.04\) and \(\xi(s_i)\) are simulated from a normal distribution with mean zero and variance \(0.01\). When \(k = 5\), \(\{\eta_j\}\) are simulated from a normal distribution with mean \(0.2\) and variance \(0.04\) and \(\xi(s_i)\) are simulated from a normal distribution with mean zero and variance \(0.01\).

\subsection{A Simulated Univariate Example}
Using \(M = 60,000\), we compare EPR with S-EPR across each data type. In Figure \ref{sim_epr_sepr_plot}, we present plots of the posterior mean of \({\mathbf{Y}}^{[1:T]}\) for each data type, for both S-EPR and EPR. The subset size for S-EPR was selected based on the elbow plots of the mean square prediction error between the true latent process and the predicted latent process and both observed and unobserved locations. Across all three data types, we observed very similar predictions for both methods. The advantage of using S-EPR is that it offers predictions comparable to those of EPR but with the additional benefit of reduced computation time. Specifically, S-EPR required approximately 6 seconds of CPU time, whereas EPR required approximately 16 seconds. As the size of the data increases, this difference in CPU time becomes more pronounced. 

\begin{figure}[H]
\centering
\begin{subfigure}{.5\textwidth}
  \centering
  \includegraphics[width=.9\linewidth]{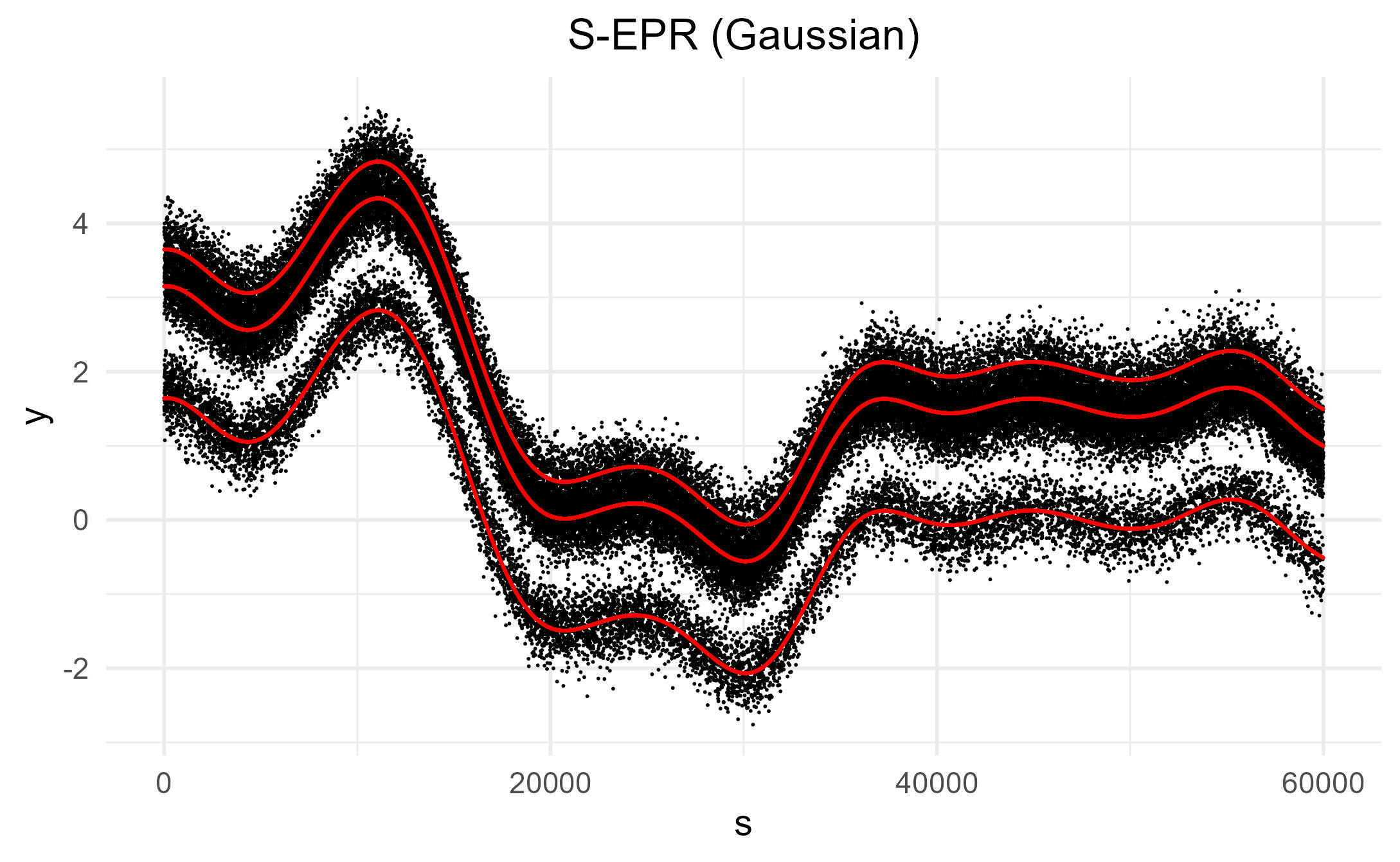}  
\end{subfigure}%
\begin{subfigure}{.5\textwidth}
  \centering
  \includegraphics[width=.9\linewidth]{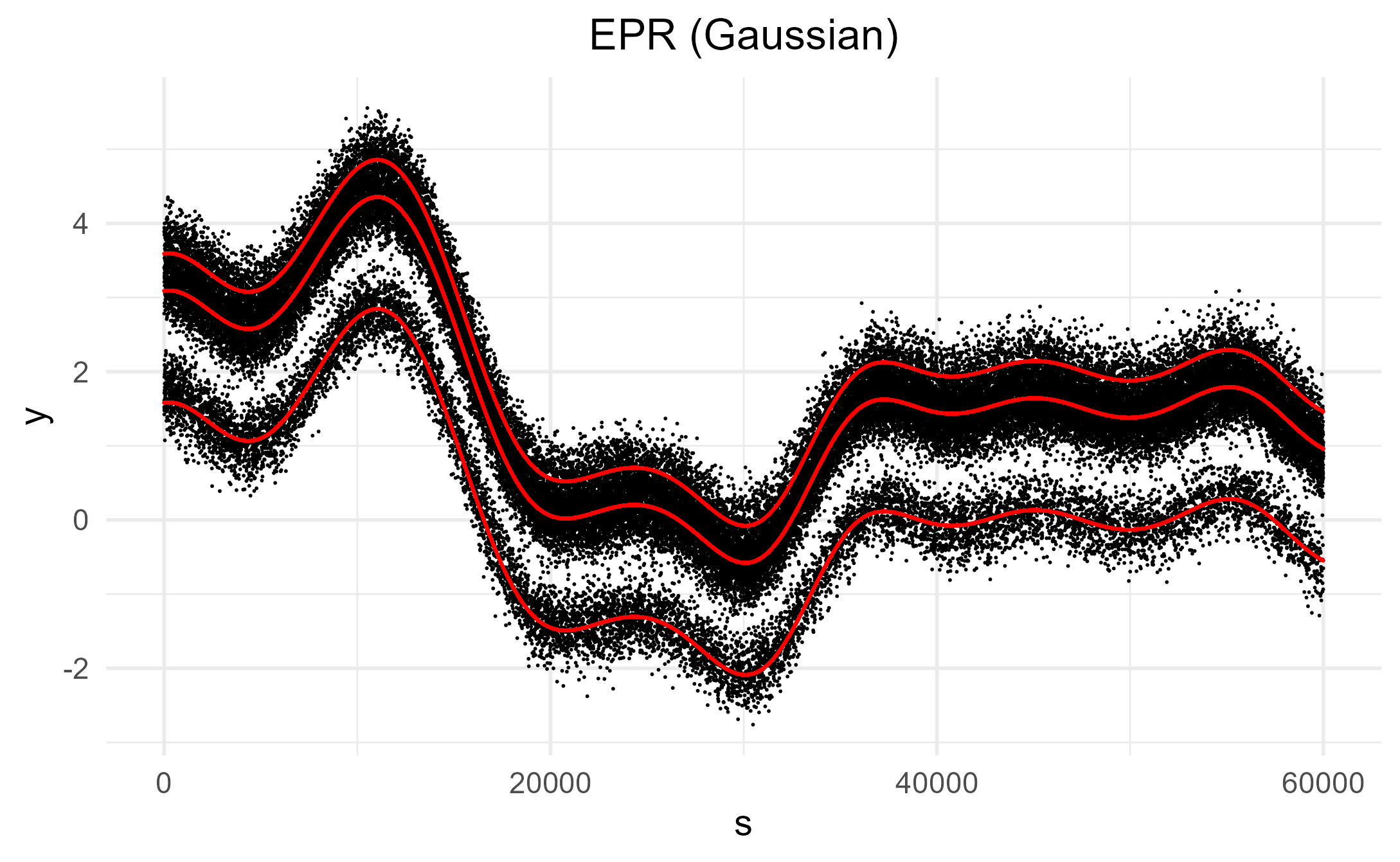}
\end{subfigure}
\begin{subfigure}{.5\textwidth}
  \centering
  \includegraphics[width=.9\linewidth]{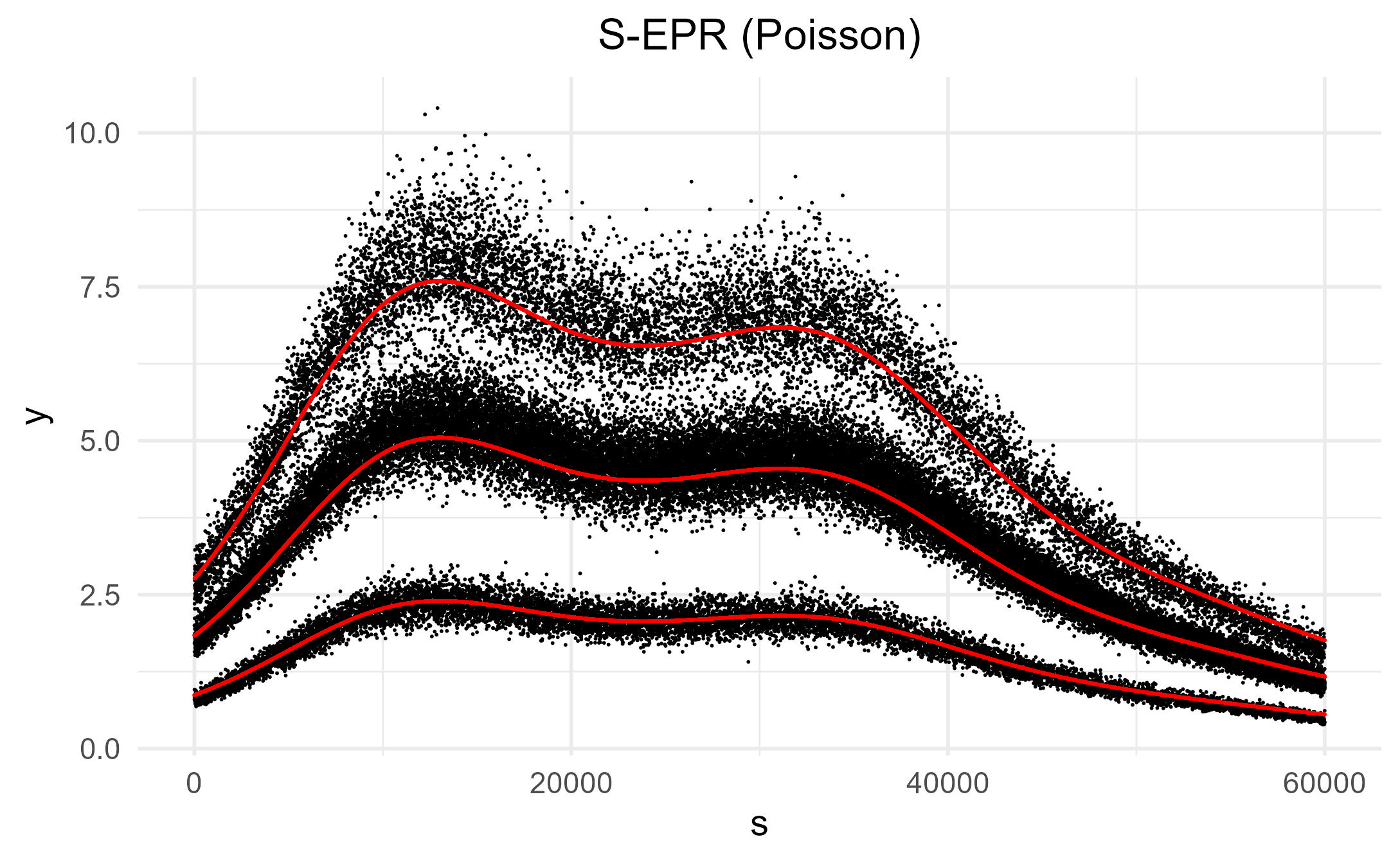}
\end{subfigure}%
\begin{subfigure}{.5\textwidth}
  \centering
  \includegraphics[width=.9\linewidth]{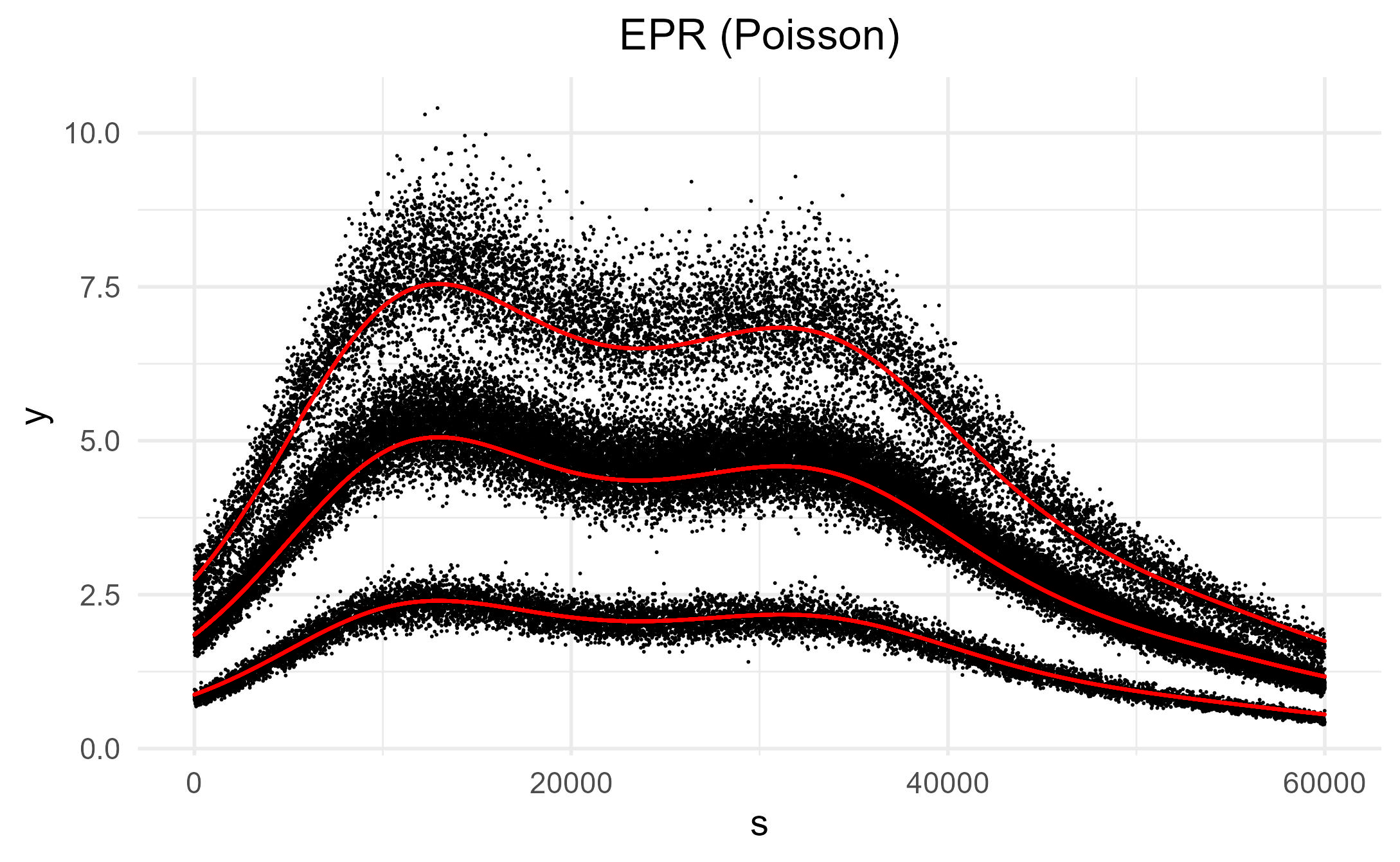}
\end{subfigure}
\begin{subfigure}{.5\textwidth}
  \centering
  \includegraphics[width=.9\linewidth]{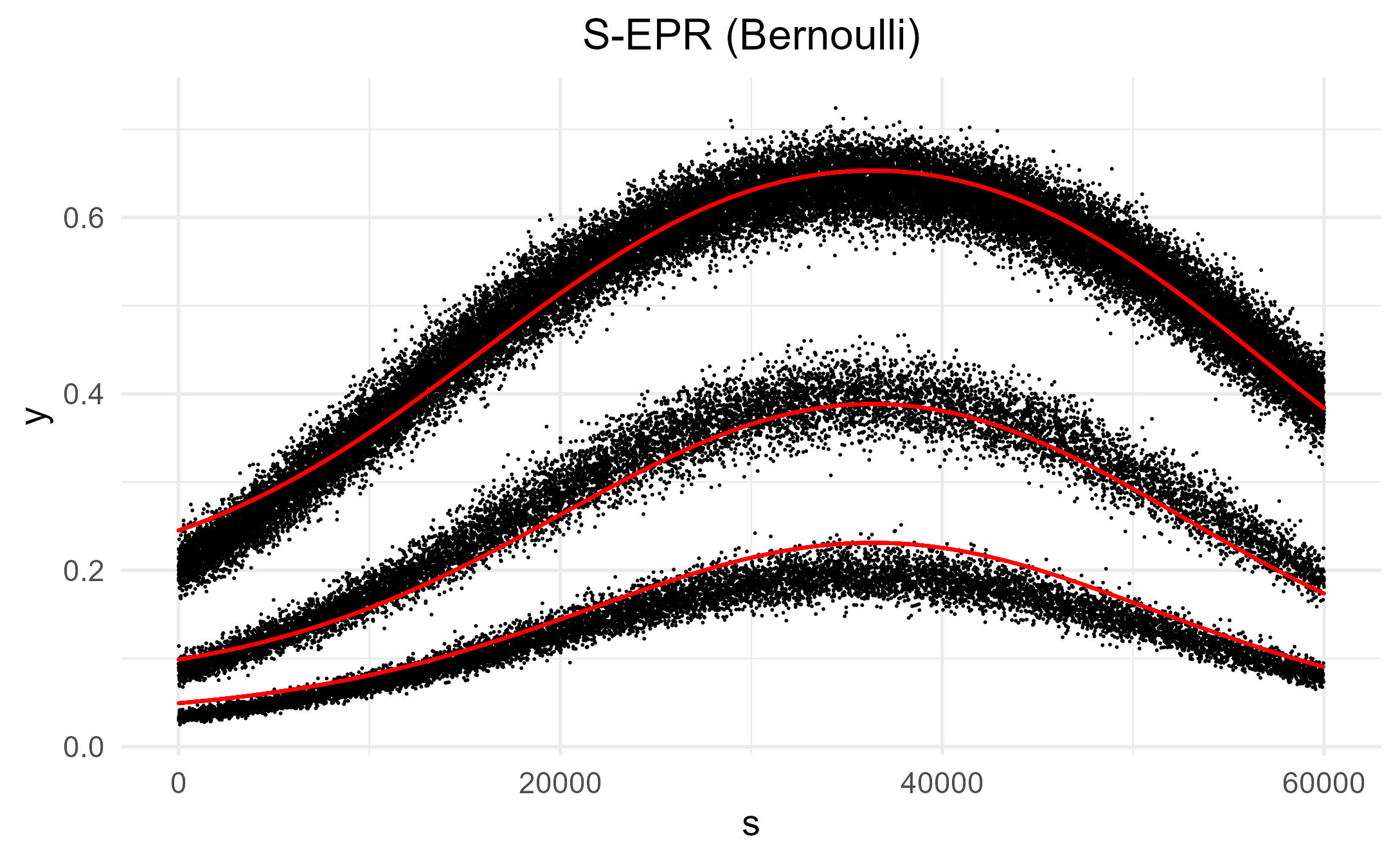}
\end{subfigure}%
\begin{subfigure}{.5\textwidth}
  \centering
  \includegraphics[width=.9\linewidth]{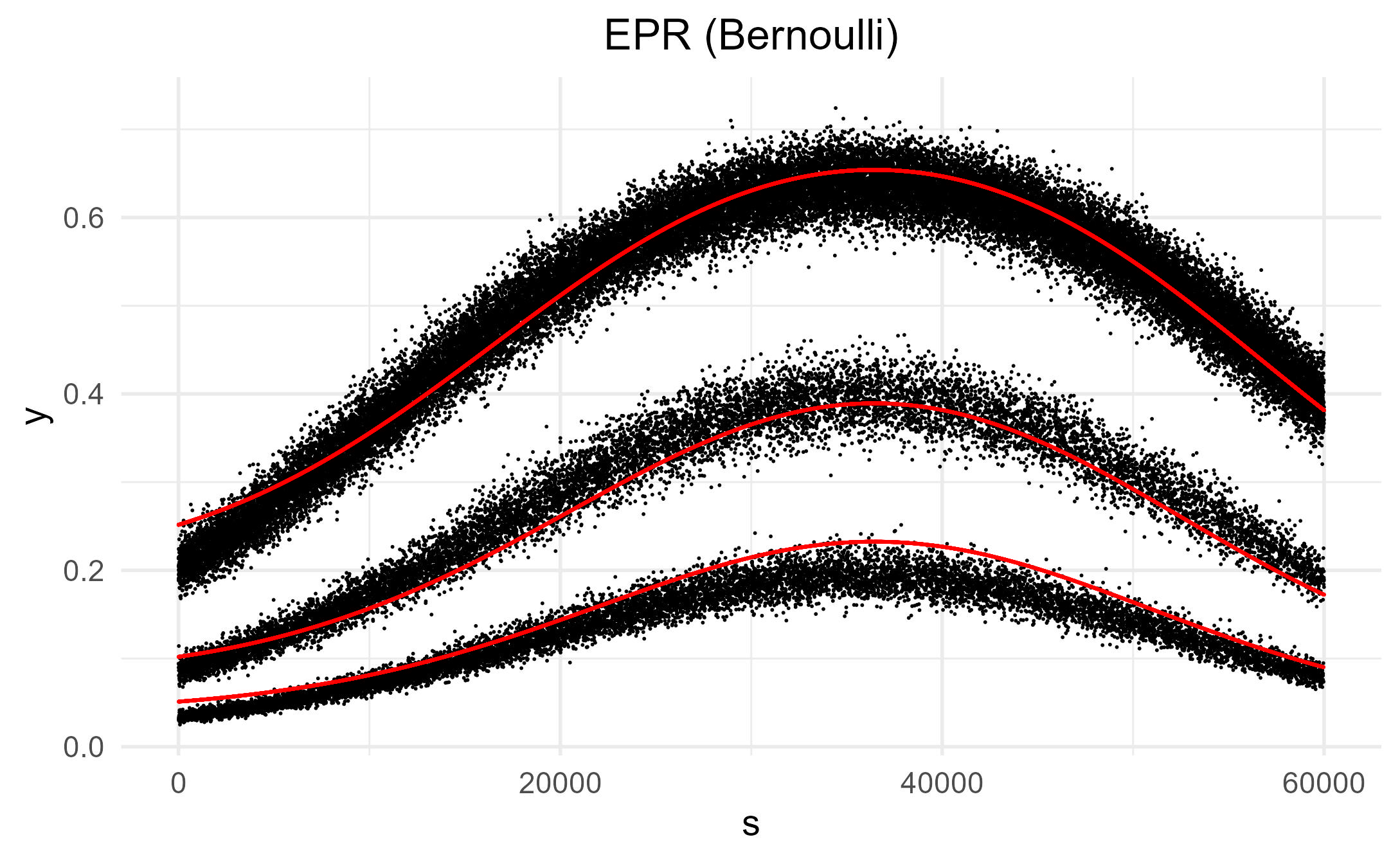}
\end{subfigure}
\caption{Illustration of SM-EPR and EPR predictions. The first row displays predictions for the Gaussian spatial data scenario, the second row displays predictions for the Poisson spatial data scenario, and the third row displays predictions for the Bernoulli spatial data scenario. The black points represent the true latent process \(\mathbf{y}\) and the red lines represent the posterior mean of \({\mathbf{Y}}^{[1:T]} = \mathbf{X}\boldsymbol{\beta}^{[1:T]} + \mathbf{G}\boldsymbol{\eta}^{[1:T]}\).}
\label{sim_epr_sepr_plot}
\end{figure}

\subsection{A Univariate Simulation Study}
The first simulation study involves univariate spatial datasets of size \(M = 1,000\). This dataset size was used to compare the fit of a correctly specified model using INLA and Stan to the fit of a discrepancy model using EPR and S-EPR. The central processing unit (CPU) time measured in seconds, the mean square prediction error (MSPE) between the true latent process and the predicted latent process, the mean square estimation error (MSE) between the true coefficients of the fixed and random effects and the predicted \(\boldsymbol{\beta}\) and \(\boldsymbol{\eta}\), and the continuous rank probability score (CRPS) were used to evaluate the different approaches. Table \ref{sim_n_1000_biv_uni} displays averages of the evaluation metrics calculated over 50 replicates along with plus or minus two standard deviations.

\begin{table}[H]
\caption{Evaluation metrics for models fit using S-EPR with subset size \(n = 400\), EPR, ILNA, and Stan for each type of data for \(M = 1,000\).}
\label{sim_n_1000_biv_uni}
\begin{center}
\begin{tabular}{ccccc}
\hline
Approach & CPU & MSPE & MSE & CRPS \\ \hline
\multicolumn{5}{c}{Gaussian Data} \\ \hline
S-EPR & 0.308 & 0.125 &  0.438 & 0.249 \\ 
 & (0.225, 0.391) & (0.083, 0.168) & (0.307, 0.569) & ( 0.176, 0.321) \\ \hline 
EPR  & 0.375 & 0.110 & 0.424 & 0.209 \\ 
 & (0.212, 0.537) & (0.086, 0.135) & (0.125, 0.722) & (0.146, 0.272) \\  \hline
INLA & 24.198 & 0.117 & 35.829 & 0.199 \\ 
 & (20.271, 28.124) & (0.089, 0.145) & (0.000, 78.889) & (0.163, 0.236) \\ \hline 
MCMC & 259.210 & 0.281 & 0.872 & 0.305\\ 
 & (207.071, 311.349) & (0.000, 0.694) & (0.173, 1.570) & (0.251, 0.359) \\ \hline 
\multicolumn{5}{c}{Poisson Data} \\  \hline
S-EPR & 0.279 & 0.014 & 0.103 &  0.066 \\ 
 & (0.229, 0.329) & (0.005, 0.022) & (0.092, 0.115) & (0.046, 0.085) \\ \hline
EPR  & 0.369 & 0.014 & 0.103 & 0.094 \\ 
 & (0.242, 0.496) & (0.004, 0.023) & (0.083, 0.123) & (0.063, 0.125)\\ \hline 
INLA & 15.285 & 0.012 & 26.809 & 0.062 \\ 
 & (6.345, 24.225) & (0.004, 0.019) & (0.000, 84.299) & (0.043, 0.082)  \\ \hline
MCMC& 165.483 & 0.011 & 0.097 & 0.141 \\ 
 & (132.389, 198.577) & (0.003, 0.018) & (0.057, 0.137) & (0.129, 0.153) \\
 \hline
 \multicolumn{5}{c}{Bernoulli Data}\\ \hline
S-EPR & 0.266 & 0.0019 & 0.669 & 0.149 \\ 
 &(0.182, 0.349) & ( 0.0002, 0.0031) & (0.438, 0.883) & (0.028, 0.222)\\ \hline 
 EPR & 0.362 & 0.0017 & 0.661 & 0.125 \\ 
 & (0.222, 0.502) & (0.0002, 0.0031) & (0.438, 0.883) & (0.028, 0.222) \\  \hline 
 INLA& 19.199 & 0.0021 & 5.844 & 0.132 \\ 
 & (15.924, 22.475) & (0.0002, 0.0039) & (0.000, 18.981) & (0.064, 0.199) \\ \hline
 MCMC & 220.530 & 0.0040 & 0.634 & 0.194 \\ 
 & (205.145, 235.916) & (0.0020, 0.0059) & (0.598, 0.670) & (0.162, 0.226) \\  \hline 
 \hline
\end{tabular}
\end{center}
\begin{flushleft}
\textit{Note}: The first column presents the name of the computational approach, the second column displays the average CPU time measured in seconds, the third column presents the mean square prediction error, \(MSPE = \frac{1}{N} \sum_{i = 1}^N (Y_i - E[{Y}_i \vert \textbf{z}])^2 \), the fourth column contains the average MSE and the fifth column contains the average CRPS. All averages are taken over 50 replicates along with plus or minus two standard deviations. The MSPE was calculated on the log-scale for the Poisson setting. 
\end{flushleft}
\end{table}

For all three types of data S-EPR and EPR had a smaller CPU time. The difference in CPU time between S-EPR and EPR for subset size \(n=400\) was small, suggesting that subsampling is unnecessary for small datasets. For the other three evaluation metrics, the four computational approaches performed similarly, as indicated by the overlapping confidence intervals. The main difference is the significantly larger CPU time for Stan compared to the other three approaches. As the size of the data becomes much larger, Stan will become computationally burdensome, so we exclude the comparison to Stan for the second stage of our simulation study. 

The second simulation study involves univariate spatial datasets of size \(M = 60,000\). This dataset size was used to compare the fit of a correctly specified model using INLA with the discrepancy model fit using EPR and S-EPR. Table \ref{sim_n_60000_uni} presents the results for these simulations for the three types of data, including the average CPU time, MSPE, MSE, and CRPS over 50 replicates with confidence intervals constructed using plus or minus two standard deviations. S-EPR consistently required the smallest CPU time for all types of data. The MSPE, MSE, and CRPS values for S-EPR, with a subset size of \(n=10,000\), were very similar to those for EPR. This demonstrates that S-EPR achieves inference comparable to EPR but in less time. Across all three approaches, the performance for Gaussian data was similar. For the Poisson and Bernoulli settings, INLA had slightly better performance than EPR and S-EPR. The increase in data size may explain INLA's improvement, as INLA relies on Laplace approximations, which use the normal distribution to approximate the marginal posterior distributions. 

\begin{table}[H]
\caption{Evaluation metrics for models fit using S-EPR with subset size \(n =10,000\), EPR, ILNA, and Stan for each type of data for \(M = 60,000\).}
\label{sim_n_60000_uni}
\begin{center}
\begin{tabular}{ccccc}
\hline
Approach & CPU & MSPE & MSE & CRPS \\ \hline
\multicolumn{5}{c}{Gaussian Data} \\ \hline
S-EPR & 6.104 & 0.069 & 0.377 & 0.155 \\ 
 & (5.787, 6.421) & ( 0.067, 0.071) & ( 0.280, 0.475) & (0.136, 0.174) \\  \hline 
EPR  & 15.305 & 0.068 & 0.323 & 0.171 \\ 
 & (14.503, 16.107) & (0.067, 0.069) & (0.173, 0.474) & (0.163, 0.178)  \\ \hline
INLA & 33.547 & 0.068 & 24.044 & 0.194 \\ 
 & (32.278, 34.816) & (0.0680, 0.0688) & (0.000, 48.240) & (0.192, 0.195) \\ \hline 
\multicolumn{5}{c}{Poisson Data} \\  \hline
S-EPR & 6.070 & 0.0086 & 0.115 &  0.054 \\ 
 & (5.832, 6.309) & (0.0079, 0.0094) & (0.038, 0.192) & (0.049, 0.059) \\ \hline
EPR  & 15.611 & 0.0084 &  0.114 &  0.058 \\ 
 & (14.459, 16.764) & (0.0079, 0.0090) & (0.000, 0.287) & (0.053, 0.064)\\ \hline 
INLA & 22.854 & 0.0070 & 13.171 &  0.062 \\ 
 & (21.436, 24.272) & (0.0069, 0.0071) & (0.000, 36.016) & (0.061, 0.063) \\ \hline 
 \multicolumn{5}{c}{Bernoulli Data}\\ \hline
S-EPR & 6.478 & 0.0006 & 0.810 & 0.079 \\ 
 &(6.130, 6.826) & (0.0005, 0.0007) & (0.000,2.916) & (0.069, 0.089)\\  \hline 
EPR & 16.602 & 0.0006 & 0.735 & 0.092 \\ 
 & (15.578,17.626) & (0.0005, 0.0007) & (0.305, 1.164) & (0.079, 0.106) \\  \hline 
INLA& 25.848 &  0.0004 & 1.067 & 0.069 \\ 
 & (25.019, 26.677) & (0.00042, 0.00048) & (0.000, 2.644) & (0.066, 0.072)  \\ \hline
\end{tabular}
\end{center}
\begin{flushleft}
\textit{Note}: The first column presents the name of the computational approach, the second column displays the average CPU time measured in seconds, the third column presents the mean square prediction error, \(MSPE = \frac{1}{N} \sum_{i = 1}^N (Y_i - E[{Y}_i \vert \textbf{z}])^2 \), the fourth column contains the average MSE and the fifth column contains the average CRPS. All averages are taken over 50 replicates along with plus or minus two standard deviations. The MSPE was calculated on the log-scale for the Poisson setting. 
\end{flushleft}
\end{table}

\section{Additional Bivariate Spatial Simulations}\label{appen:sim_plots_figs}
\subsection{A Simulated Example}\label{sim_fig}
Using \(M = 60,000\), we compare SM-EPR to M-EPR and INLA for multiple-type response data distributed as Weibull and logistic. In Figure \ref{sim_epr_sepr_inla_plot}, we present plots of the posterior mean of \(\text{exp}\left(\frac{-{\mathbf{Y}}^{[1:T]}}{{\rho}^{[1:T]}}\right)\Gamma\left(1 + \frac{1}{{\rho}^{[1:T]}}\right)\) and \({\mathbf{Y}}^{[1:T]}\) for the Weibull and logistic scenarios, respectively. The subset size for SM-EPR was selected based on the elbow plots of the mean square prediction error between the true latent process and the predicted latent process at unobserved locations that were held out for model evaluation. For both data types in this multivariate spatial model, we observe very similar predictions for each of the three approaches used. The advantage of using SM-EPR is that it offers predictions comparable to those of M-EPR and INLA but with the additional benefit of reduced computation time. Specifically, SM-EPR with a subset size of \(n = 5,000\) required approximately 12 seconds of central processing unit (CPU) time, whereas 72 and 137 seconds were required for M-EPR and INLA, respectively. As the total size of the data \(N\) increases, this difference in CPU time becomes more pronounced.
\begin{figure}[H]
\centering
\begin{subfigure}{.5\textwidth}
  \centering
  \includegraphics[width=.9\linewidth]{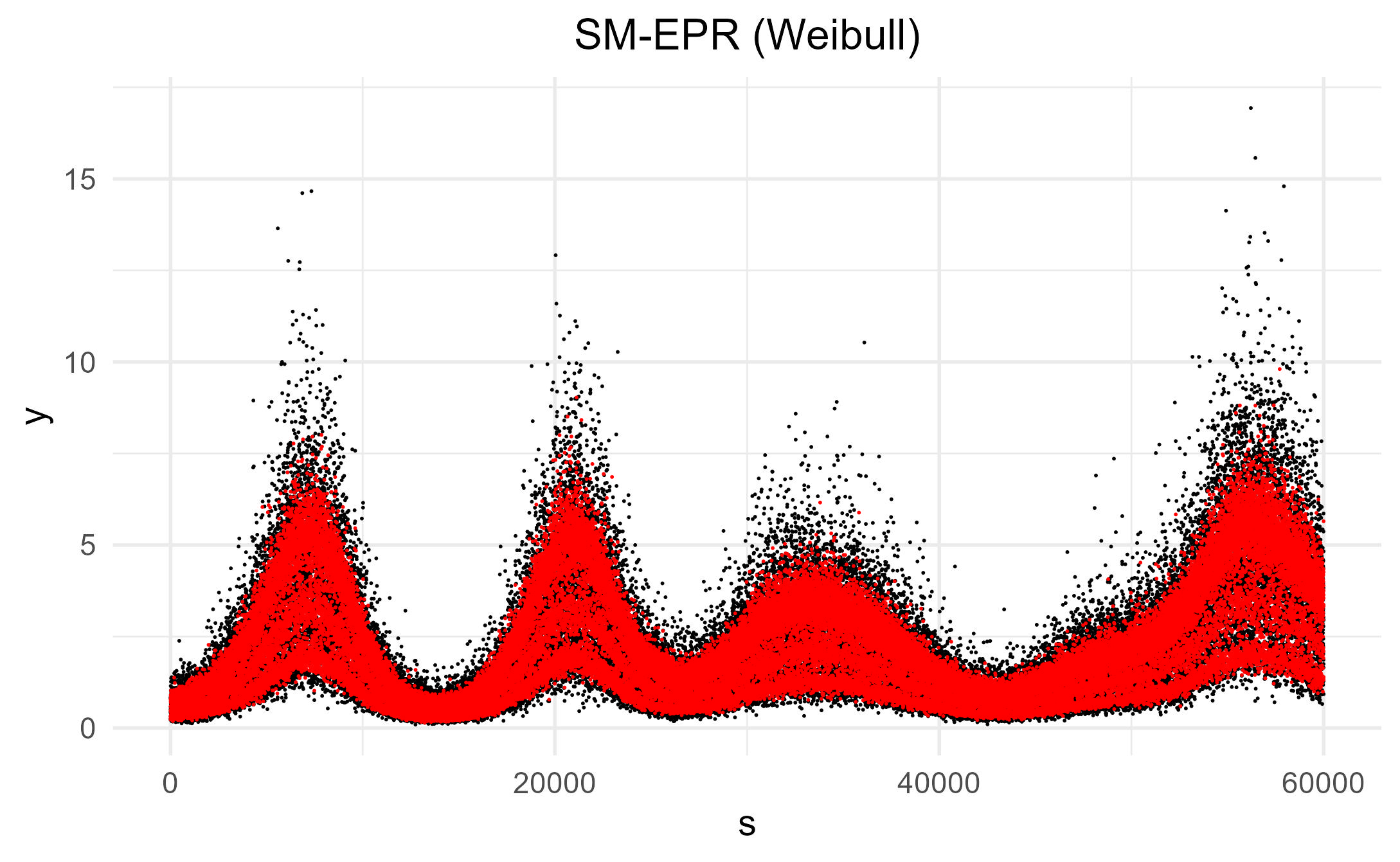} 
\end{subfigure}%
\begin{subfigure}{.5\textwidth}
  \centering
  \includegraphics[width=.9\linewidth]{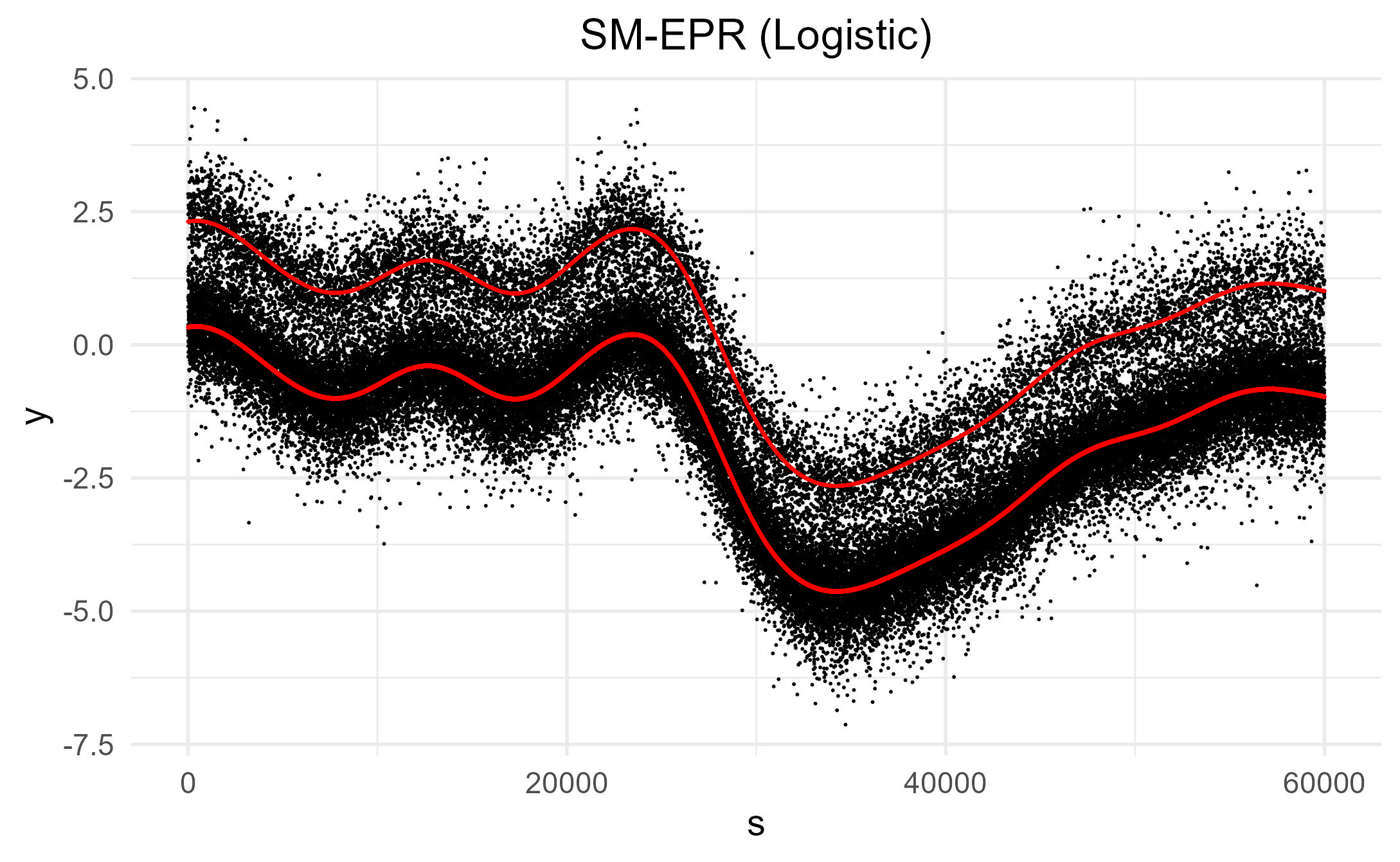}
\end{subfigure}
\begin{subfigure}{.5\textwidth}
  \centering
  \includegraphics[width=.9\linewidth]{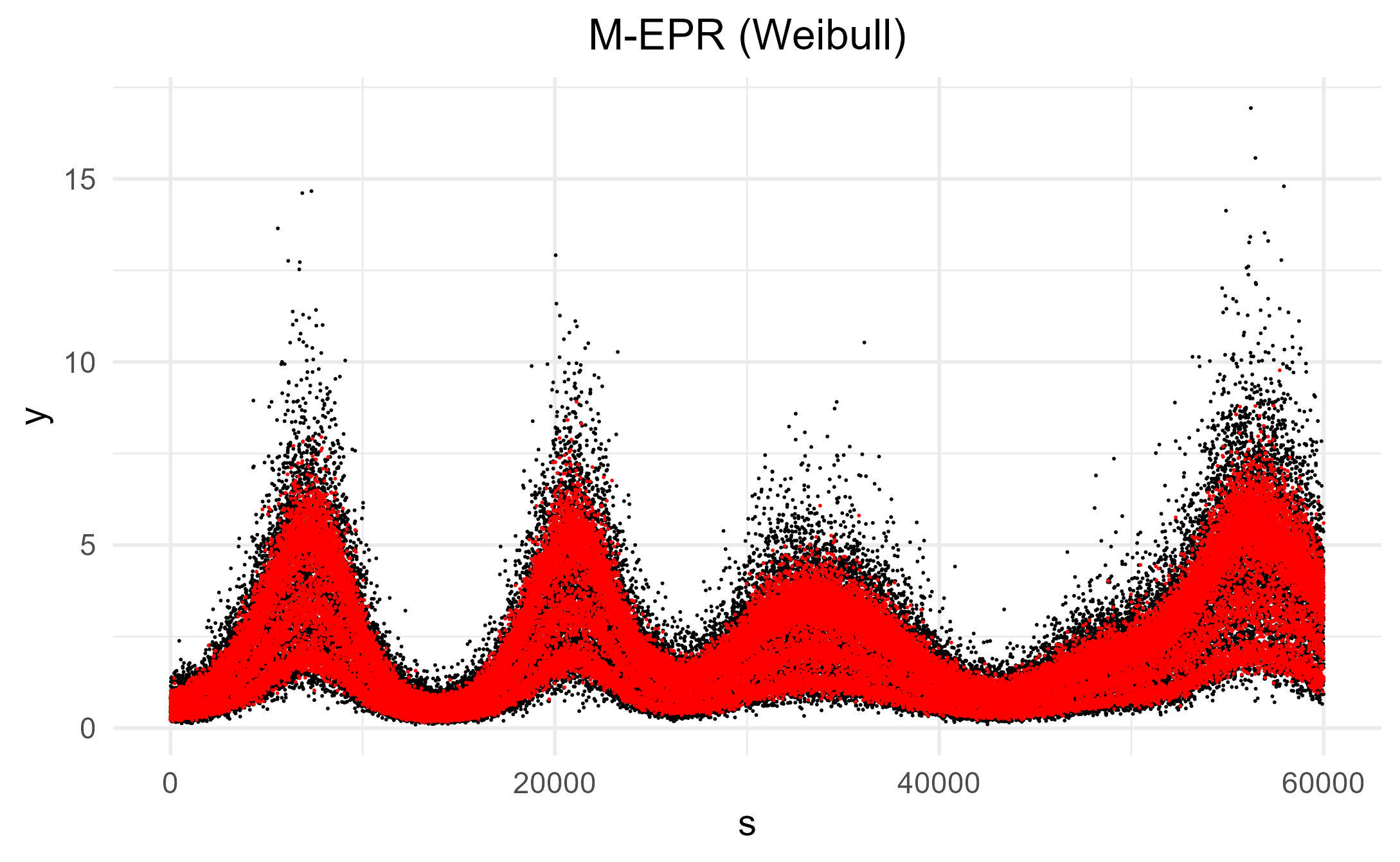}
\end{subfigure}%
\begin{subfigure}{.5\textwidth}
  \centering
  \includegraphics[width=.9\linewidth]{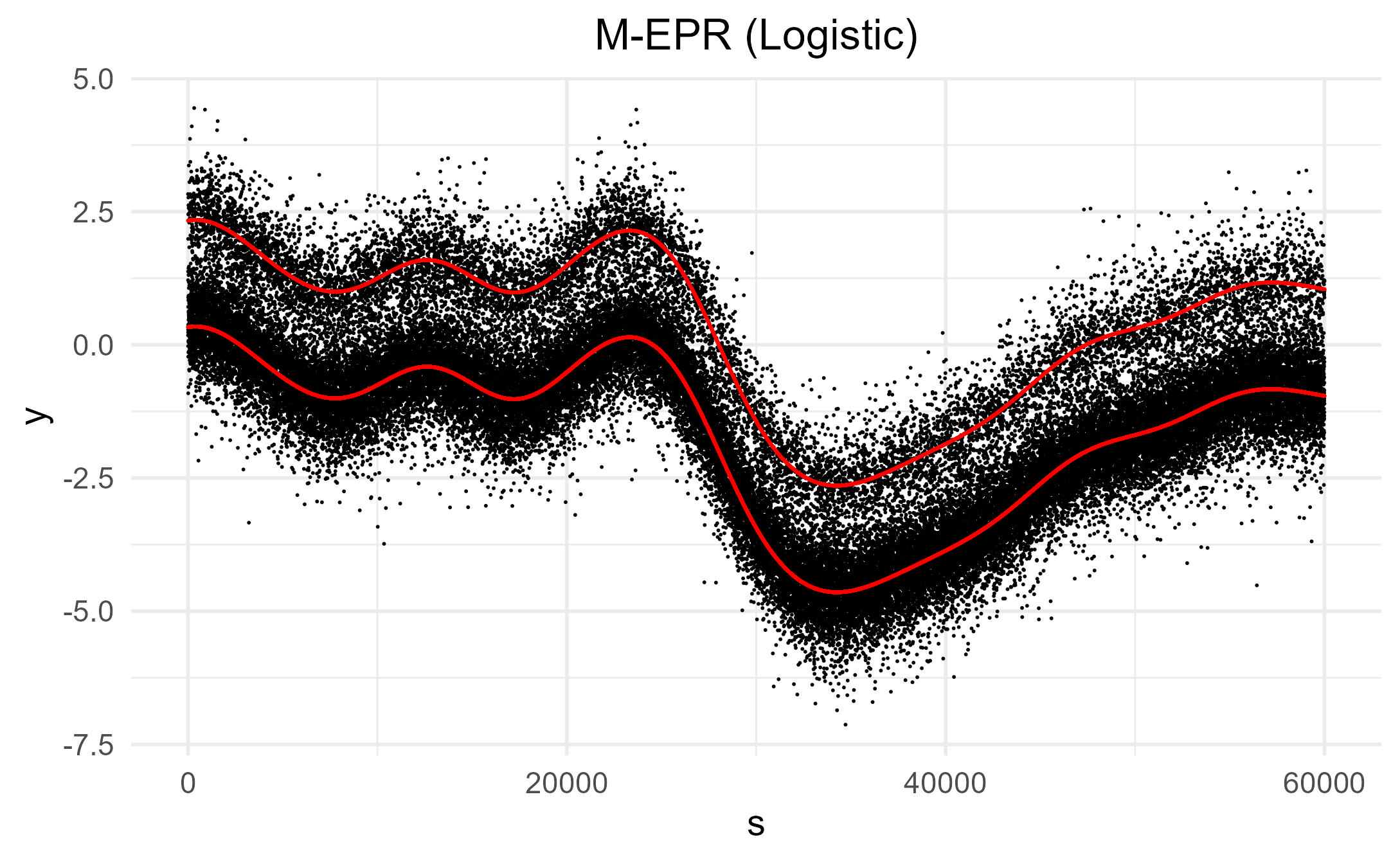}
\end{subfigure}
\begin{subfigure}{.5\textwidth}
  \centering
  \includegraphics[width=.9\linewidth]{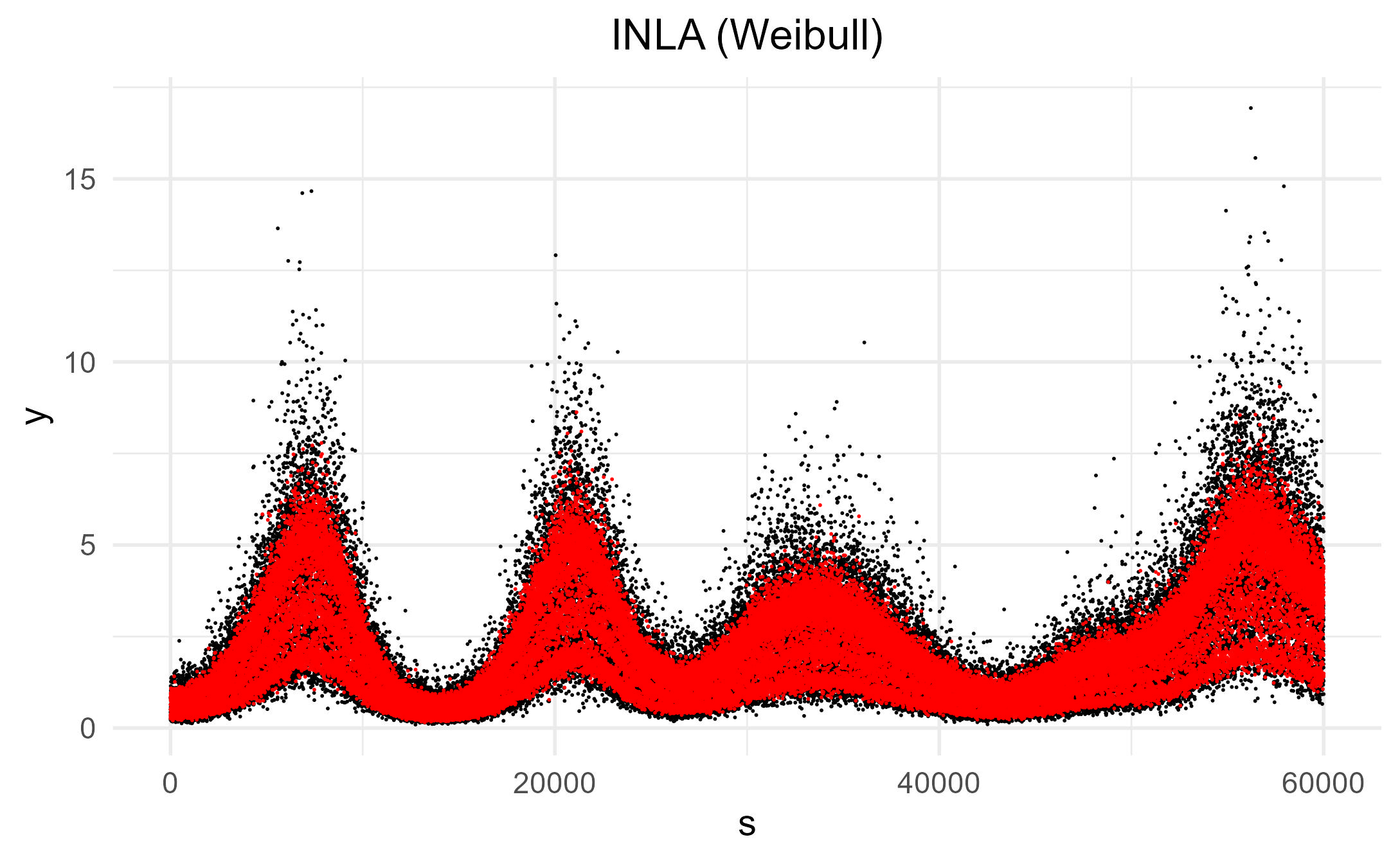}
\end{subfigure}%
\begin{subfigure}{.5\textwidth}
  \centering
  \includegraphics[width=.9\linewidth]{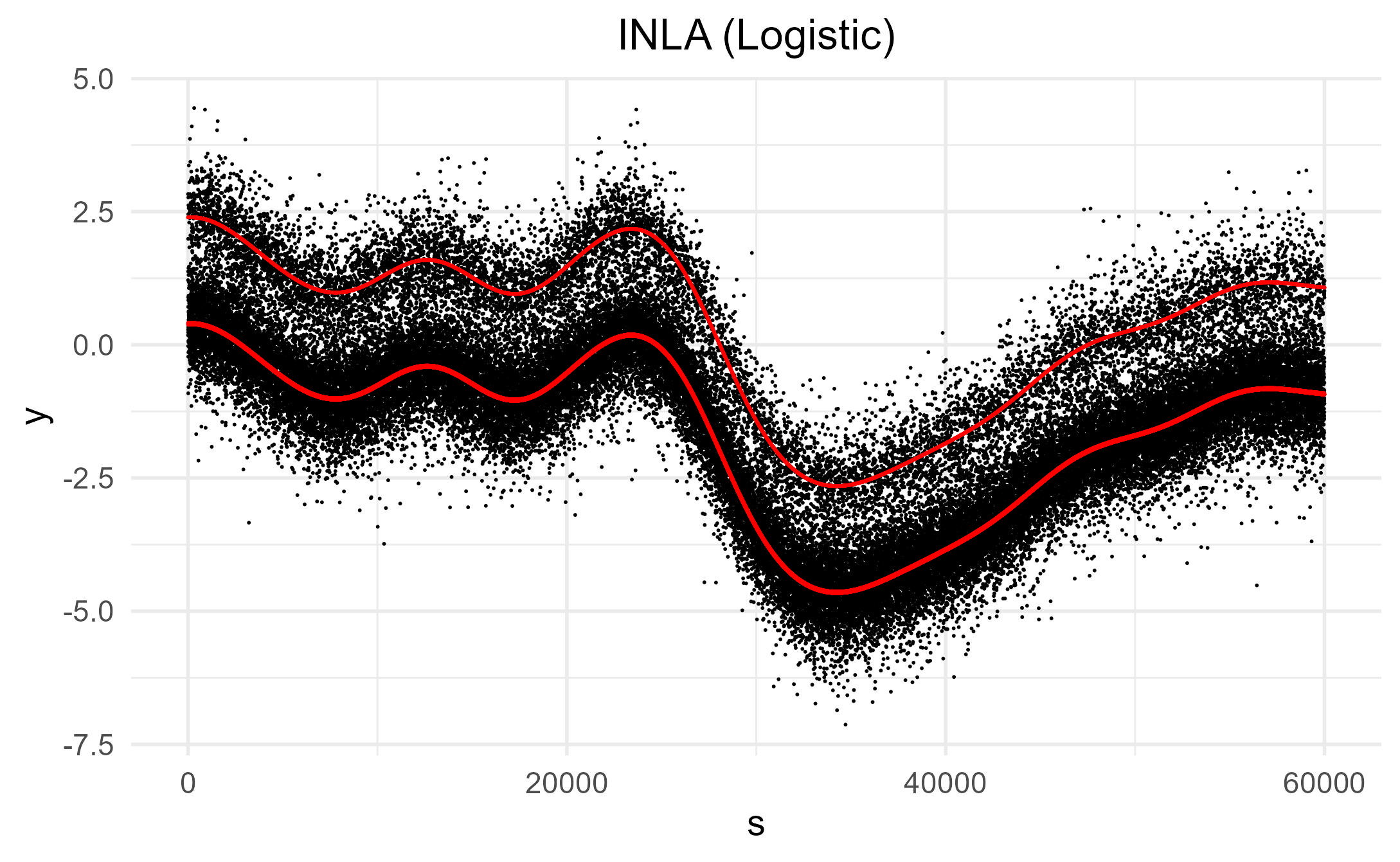}
\end{subfigure}
\caption{Illustration of SM-EPR, M-EPR, and INLA predictions. The first column displays the predictions for the Weibull spatial response data and the second column displays the predictions for the logistic spatial response data. The black points represent the true latent process \(\mathbf{y}\). The red points/lines represent the posterior mean of \(\text{exp}\left(\frac{-{\mathbf{Y}}^{[1:T]}}{{\rho}^{[1:T]}}\right)\Gamma\left(1 + \frac{1}{{\rho}^{[1:T]}}\right)\) and \({\mathbf{Y}}^{[1:T]}\) for the Weibull and logistic response, respectively. Where \({\mathbf{Y}}^{[1:T]} = \mathbf{X}\boldsymbol{\beta}^{[1:T]} + \mathbf{G}\boldsymbol{\eta}^{[1:T]}\).}
\label{sim_epr_sepr_inla_plot}
\end{figure}

\subsection{Basis Function Misspecification}\label{appen:miss}
In this simulation study, we compare the performance of a misspecified implementation of SM-EPR and M-EPR. In particular, we aim to evaluate the prediction performance of the full M-EPR model  (\(\boldsymbol{\delta} = \mathbf{1}_N\)) compared to the data subset approach with SM-EPR (\(\boldsymbol{\delta} \sim f(\boldsymbol{\delta}\vert n)\)) in the scenario of model misspecification. The same simulation model from Section \ref{sim_study} is used to generated data, however we misspecify both SM-EPR and M-EPR and use a total of 50 (25 individual and 25 shared) radial basis functions for each data type, where recall 15 individual and shared basis functions were used to generate data. Posterior sampling is done using \(0.8\times 60,000\) of the data and \(0.2\times 60,000\) of the data is held out for testing prediction for each response.

\begin{figure}[H]
\centering
  \centering
  \includegraphics[width=0.75\linewidth]{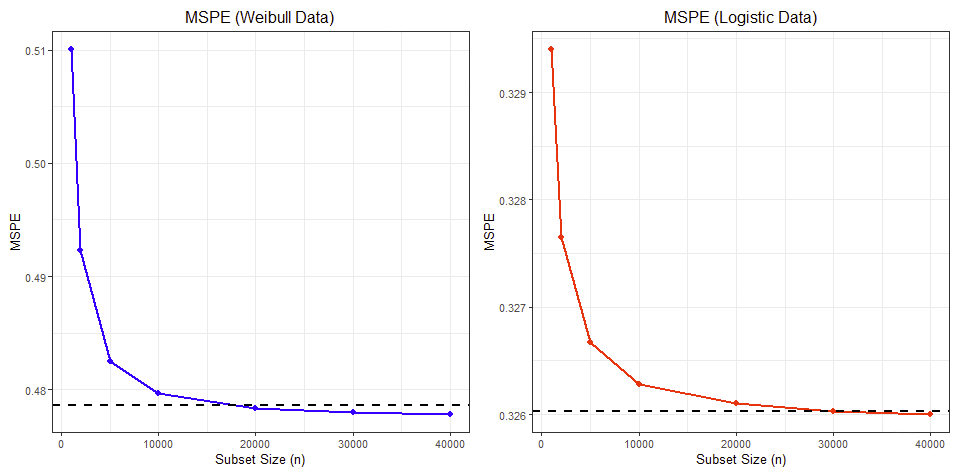}  
\caption{The red and blue lines the the mean MSPE using SM-EPR over various subset sizes. The black dashed line is the mean MSPE using M-EPR.}
\label{misspecify_elbow}
\end{figure}

In Figure \ref{misspecify_elbow} we compare the mean MSPE for SM-EPR with various subset sizes to M-EPR across 50 replicates. The black dashed line indicates the MSPE of M-EPR. The elbow plot indicates that SM-EPR performs better than M-EPR in terms of prediction for subset sizes 20,000, 30,000, and 40,000 in the Weibull scenario and 40,000 for the logistic scenario. The smaller MSPE in these cases is because SM-EPR removes parametric assumptions from the spatial basis function expansion. This suggests that in the presence of model misspecification, SM-EPR can perform better, albeit marginally better, in terms of prediction than M-EPR for select subset sizes. This is consistent with results in \citet{saha2023incorporating}, and provides empirical support for our Theorem \ref{theorem4}, where we see that the data subset approach can aid with model robustness.

\section{Basis Function Sensitivity Analysis}\label{basis_sens}
\begin{table}[H]
\caption{Sensitivity analysis for selecting number of bisquare basis functions to use for modeling small-scale variability in mixed effects model for univariate analysis of \(\text{PM}_{2.5}\).}
\begin{center}
\begin{tabular}{cccccc}
\toprule
Basis Functions & HOVE & PMCC & CRPS & WAIC & CPU Time \\ \midrule
48 & 111.8636 & 116.1026 & 3.5010  & 6.8065 & 24.71 sec \\
92 & 91.3138 & 96.1632 & 3.3415 & 6.7735 & 59.28 sec \\
145 & 98.6512 & 107.4218 & 3.3391 & 6.7863 & 1.99 min \\
198 & 72.7569 & 79.1911 & 2.8239  & 6.7660 & 3.48 min \\ 
259 & 73.9918 & 80.4919 & 2.8235 & 6.7525 & 6.27 min\\
280 & 73.9525 & 80.1861 & 2.8291 & 6.7521 & 7.78 min \\
\bottomrule
\end{tabular}
\end{center}
\begin{flushleft}
\textit{Note}: The first column displays the number of bisquare basis functions used to fit univariate S-EPR with subset size \(n = 10,000\). The hold out validation error (HOVE) on \(0.2N\) of the original dataset using the posterior mean of the latent process is \(\text{HOVE} = \frac{1}{0.2N} \sum_{\{i: i \in D_h\}} (Z_i - E[{Z}_{i}^{new} \vert \textbf{z}])^2 \) where \(D_h\) is the holdout dataset and ${Z}_{i}^{new}$ is predictive data. PMCC is the predictive model choice criterion over the hold out locations. WAIC is the Wantanabe-Akaike information criterion. CRPS is the continuous rank probability score. The last column displays the CPU time.
\end{flushleft}
\end{table}









\end{document}